\documentclass[aps,amssymb,showkeys]{revtex4}

\voffset=0.5cm

\usepackage{array}
\usepackage{scalerel}
\usepackage{tabularx}
\usepackage{multirow}
\usepackage{amsmath}
\usepackage{amsthm}
\usepackage{graphicx} 
\usepackage{amssymb}
\usepackage[pdftex, bookmarks, colorlinks=true, plainpages = false, citecolor = red, urlcolor = blue, filecolor = blue]{hyperref}
\usepackage[usenames,dvipsnames]{color}
\usepackage{mathrsfs}
\usepackage{enumerate}
\usepackage{diagbox}
\usepackage{subeqnarray}
\usepackage{stmaryrd}

\newcommand{\red}{\textcolor{red}}
\newcommand{\be}{\begin{equation}}
\newcommand{\ee}{\end{equation}}
\newcommand{\bea}{\begin{eqnarray}}
\newcommand{\eea}{\end{eqnarray}}
\newcommand{\bean}{\begin{eqnarray*}}
\newcommand{\eean}{\end{eqnarray*}}

\theoremstyle{plain}
\newtheorem{theorem}{Theorem}

\newtheorem{cor}[theorem]{Corollary}
\newtheorem{lem}[theorem]{Lemma}
\newtheorem{fact}[theorem]{Fact}

\newtheorem{conj}[theorem]{Conjecture}

\theoremstyle{definition}
\newtheorem{defn}[theorem]{Definition}

\def\clap#1{\hbox to 0pt{\hss#1\hss}}
\def\mathclap{\mathpalette\mathclapinternal}

\def\mathclapinternal#1#2{%
\clap{$\mathsurround=0pt#1{#2}$}}

\allowdisplaybreaks

\begin{document}
\title{A solution space for a system of null-state partial differential equations IV}

\date{\today}

\author{Steven M. Flores}
\email{steven.flores@helsinki.fi} 
\affiliation{Department of Mathematics \& Statistics, University of New Hampshire, Durham, New Hampshire, 03824,\\
and \\
Department of Mathematics \& Statistics, University of Helsinki, P.O. Box 68, 00014, Finland}

\author{Peter Kleban}
\email{kleban@maine.edu} 
\affiliation{LASST and Department of Physics \& Astronomy, University of Maine, Orono, Maine, 04469-5708, USA}

\begin{abstract}  
This article is the last of four that completely and rigorously characterize a solution space $\mathcal{S}_N$ for a homogeneous system of $2N+3$ linear partial differential equations (PDEs) in $2N$ variables that arises in conformal field theory (CFT) and multiple Schramm-L\"owner evolution (SLE$_\kappa$). The system comprises $2N$ null-state equations and three conformal Ward identities that govern CFT correlation functions of $2N$ one-leg boundary operators.  In the first two articles \cite{florkleb, florkleb2}, we use methods of analysis and linear algebra to prove that $\dim\mathcal{S}_N\leq C_N$, with $C_N$ the $N$th Catalan number.  Using these results in the third article \cite{florkleb3}, we prove that $\dim\mathcal{S}_N=C_N$ and $\mathcal{S}_N$ is spanned by (real-valued) solutions constructed with the Coulomb gas (contour integral) formalism of CFT.

In this article, we use these results to prove some facts concerning the solution space $\mathcal{S}_N$. First, we show that each of its elements equals a sum of at most two distinct Frobenius series in powers of the difference between two adjacent points (unless $8/\kappa$ is odd, in which case a logarithmic term may appear).  This establishes an important element in the operator product expansion (OPE) for one-leg boundary operators, assumed in CFT.  We also identify particular elements of $\mathcal{S}_N$, which we call connectivity weights, and exploit their special properties to conjecture a formula for the probability that the curves of a multiple-SLE$_\kappa$ process join in a particular connectivity.  This leads to new formulas for crossing probabilities of critical lattice models inside polygons with a free/fixed side-alternating boundary condition, which we derive in \cite{fkz}.  Finally, we propose a reason for why  the {\it exceptional speeds} (certain $\kappa$ values that appeared in the analysis of the Coulomb gas solutions in \cite{florkleb3}) and the minimal models of CFT are connected.

\end{abstract}

\keywords{conformal field theory, Schramm-L\"{o}wner evolution, Coulomb gas, connectivity weights}
\maketitle

\section{Introduction}\label{intro}

This article completes the analysis begun in \cite{florkleb,florkleb2,florkleb3}.  In this introduction, we state the problem under consideration and summarize the results from \cite{florkleb,florkleb2,florkleb3}.  The introduction \red{I} and appendix \red{A} of \cite{florkleb} explain the origin of this problem in conformal field theory (CFT) \cite{bpz,fms,henkel}, its relation to multiple Schramm-L\"owner evolution (SLE$_\kappa$) \cite{bbk,dub2,graham,kl,sakai}, and its application \cite{bpz,bbk,bauber,dots,gruz,rgbw,c3,c1} to critical lattice models \cite{grim,bax,wu,fk,stan} and some random walks \cite{zcs,law1,schrsheff,weintru,madraslade}.

The goal of this article and its predecessors \cite{florkleb, florkleb2, florkleb3} is to completely and rigorously determine a certain solution space $\mathcal{S}_N$ of the following system of $2N$ null-state partial differential equations (PDEs) of CFT,
\be\label{nullstate}\Bigg[\frac{\kappa}{4}\partial_j^2+\sum_{k\neq j}^{2N}\left(\frac{\partial_k}{x_k-x_j}-\frac{(6-\kappa)/2\kappa}{(x_k-x_j)^2}\right)\Bigg]F(\boldsymbol{x})=0,\quad j\in\{1,2,\ldots,2N\},\ee
and three conformal Ward identities from CFT,
\be\label{wardid}\sum_{k=1}^{2N}\partial_kF(\boldsymbol{x})=0,\quad \sum_{k=1}^{2N}\left[x_k\partial_k+\frac{(6-\kappa)}{2\kappa}\right]F(\boldsymbol{x})=0,\quad \sum_{k=1}^{2N}\left[x_k^2\partial_k+\frac{(6-\kappa)x_k}{\kappa}\right]F(\boldsymbol{x})=0,\ee
with $\boldsymbol{x}:=(x_1,x_2,\ldots,x_{2N})$ and $\kappa\in(0,8)$.  (Here and in \cite{florkleb3}, but unlike in \cite{florkleb, florkleb2}, we refer to the coordinates of $\boldsymbol{x}$ as ``points.")  The solution space $\mathcal{S}_N$ (over the reals) comprises all (classical) solutions $F:\Omega_0\rightarrow\mathbb{R}$, where
\be\label{components}\Omega_0:=\{\boldsymbol{x}\in\mathbb{R}^{2N}\,|\,x_1<x_2<\ldots< x_{2N-1}< x_{2N}\},\ee
such that for each $F\in\mathcal{S}_N$, there exist positive constants $C$ and $p$ such that
\be\label{powerlaw} |F(\boldsymbol{x})|\leq C\prod_{i<j}^{2N}|x_j-x_i|^{\mu_{ij}(p)},\quad\text{with}\quad\mu_{ij}(p):=\begin{cases}-p, & |x_j-x_i|<1 \\ +p, & |x_j-x_i|\geq1\end{cases}\quad\text{for all $\boldsymbol{x}\in\Omega_0.$}\ee
(We use this bound to prove many of the results in \cite{florkleb, florkleb2}.)  Restricting our attention to $\kappa\in(0,8)$, our goals for these four articles are as follows:
\begin{enumerate}
\item\label{item1} Rigorously prove that $\mathcal{S}_N$ is spanned by real-valued Coulomb gas solutions.
\item\label{item2} Rigorously prove that $\dim\mathcal{S}_N=C_N$, with $C_N=(2N)!/N!(N+1)!$ the $N$th Catalan number.
\item\label{item3} Argue that $\mathcal{S}_N$ has a basis $\mathscr{B}_N$ consisting of $C_N$ \emph{connectivity weights} (physical quantities defined in the introduction \red{I} to \cite{florkleb}) and find formulas for all of the connectivity weights.
\end{enumerate}

In \cite{florkleb,florkleb2,florkleb3}, we use certain elements of the dual space $\mathcal{S}_N^*$ to achieve goals \ref{item1} and \ref{item2}, and in this article, we use these linear functionals again to complete item \ref{item3}, among other things.  To construct these linear functionals $\mathscr{L}:\mathcal{S}_N\rightarrow\mathbb{R}$, we prove in \cite{florkleb} that for all $F\in\mathcal{S}_N$ and all $i\in\{1,2,\ldots,2N-1\}$, the limit
\be\label{lim}\bar{\ell}_1F(x_1,x_2,\ldots,x_i,x_{i+2},\ldots,x_{2N}):=\lim_{x_{i+1}\rightarrow x_i}(x_{i+1}-x_i)^{6/\kappa-1}F(\boldsymbol{x})\ee
exists, is independent of $x_i$, and (after implicitly taking the trivial limit $x_i\rightarrow x_{i-1}$) is an element of $\mathcal{S}_{N-1}$.  Then, we let $\mathscr{L}$ be a composition of $N$ such limits.  These functionals naturally gather into equivalence classes $[\mathscr{L}]$ whose elements differ only by the order in which we take their limits.

For convenience, we represent every equivalence class $[\mathscr{L}]$ by a unique \emph{half-plane diagram} consisting of $N$ non-intersecting curves, called \emph{interior arcs}, in the upper half-plane, with the endpoints of each arc brought together by a limit in every element of $[\mathscr{L}]$.  Alternatively, we represent $[\mathscr{L}]$ by its \emph{polygon diagram}, which is its half-plane diagram continuously mapped onto the interior of a regular polygon $\mathcal{P}$, with arc endpoints sent to vertices.  We call either the \emph{diagram for $[\mathscr{L}]$}.  There are $C_N$ such diagrams, and they correspond one-to-one with the available equivalence classes (figure \ref{Csls}).  We enumerate the equivalence classes $[\mathscr{L}_1]$, $[\mathscr{L}_2],\ldots,[\mathscr{L}_{C_N}]$, let $\mathscr{B}_N^*:=\{[\mathscr{L}_1],[\mathscr{L}_2,]\ldots,[\mathscr{L}_{C_N}]\}\subset\mathcal{S}_N^*$, and define for each $\varsigma\in\{1,2,\ldots,C_N\}$ the \emph{$\varsigma$th connectivity} as the arc connectivity exhibited by the diagram for $[\mathscr{L}_{\varsigma}]$.

\begin{figure}[t]
\centering
\includegraphics[scale=0.3]{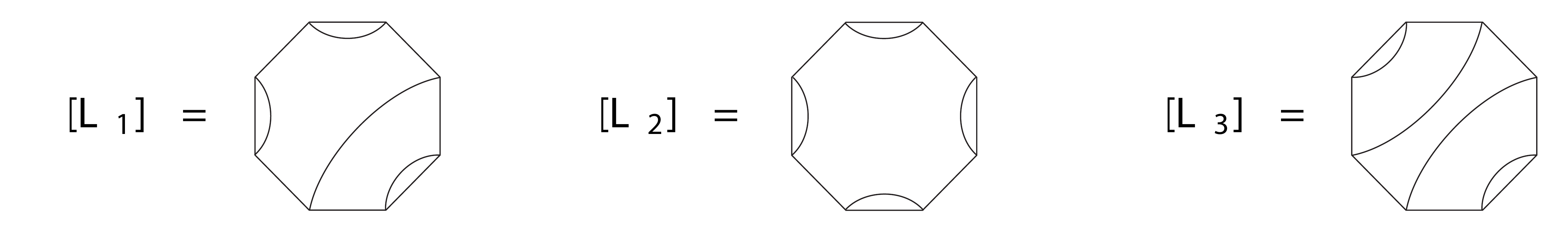}
\caption{Polygon diagrams for three different equivalence classes of allowable sequences of $N=4$ limits.  We find the other $C_4-3=11$ diagrams by rotating one of these three.}
\label{Csls}
\end{figure}

\begin{figure}[b]
\centering
\includegraphics[scale=0.28]{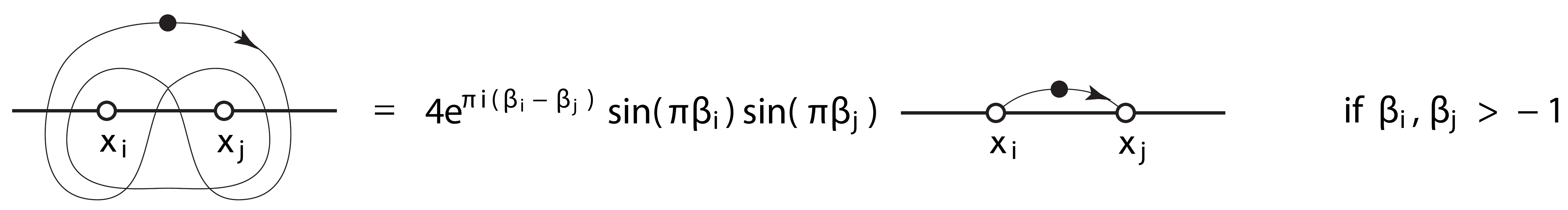}
\caption{The Pochhammer contour $\mathscr{P}(x_i,x_j)$ (left) with ``endpoints" $x_i$ and $x_j$.  If $e^{2\pi i\beta_i}$ and $e^{2\pi i\beta_j}$ are the monodromy factors associated with $x_i$ and $x_j$ respectively, and $\beta_i,\beta_j>-1$, then we may replace $\mathscr{P}(x_i,x_j)$ with the simple contour on the right.}
\label{PochhammerContour}
\end{figure}

\begin{figure}[t]
\centering
\includegraphics[scale=0.3]{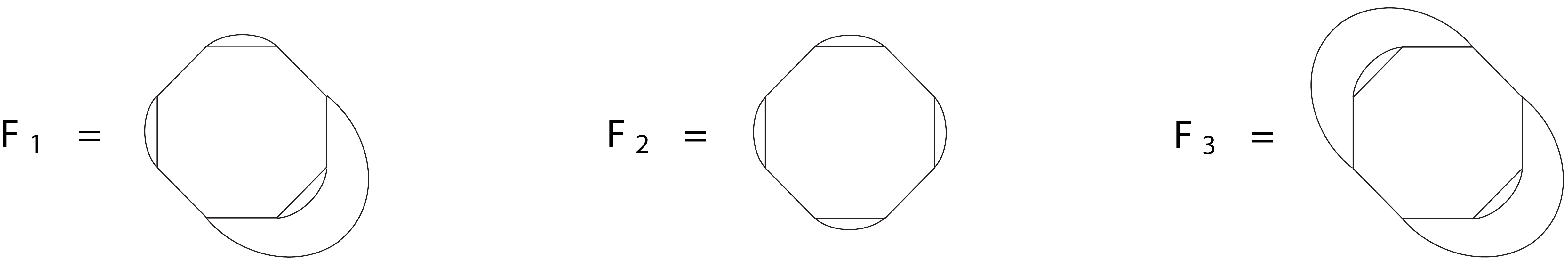}
\caption{Polygon diagrams for three different elements of $\mathcal{B}_4$.  We find the other $C_4-3=11$ diagrams by rotating one of these three.}
\label{Fk}
\end{figure}

We conclude our analysis in \cite{florkleb} by proving that the linear map $v:\mathcal{S}_N\rightarrow\mathbb{R}^{C_N}$ with $v(F)_\varsigma:=[\mathscr{L}_\varsigma]F$ is well-defined and injective, so $\dim\mathcal{S}_N \leq C_N$.
With this bound established, we achieve goals \ref{item1} and \ref{item2} next in \cite{florkleb3}.  For this, we use the CFT Coulomb gas (contour integral) formalism \cite{df1,df2} to construct a set 
\be\label{BN}\mathcal{B}_N:=\{\mathcal{F}_1,\mathcal{F}_2,\ldots,\mathcal{F}_{C_N}\}\subset\mathcal{S}_N\ee
of $C_N$ elements of $\mathcal{S}_N$.  According to corollary \red{9} of \cite{florkleb3}, the function $\mathcal{F}_\vartheta$ may be given explicitly by any one of $2N$ equivalent formulas, labeled $\mathcal{F}_{c,\vartheta}$ with $c\in\{1,2,\ldots,2N\}$.  These formulas are (see definition \red{4} and figure \red{5} of \cite{florkleb3})
\begin{multline}\label{Fexplicit}\mathcal{F}_{c,\vartheta}(\kappa\,|\,\boldsymbol{x})=n(\kappa)\left[\frac{n(\kappa)\Gamma(2-8/\kappa)}{4\sin^2(4\pi/\kappa)\Gamma(1-4/\kappa)^2}\right]^{N-1}\Bigg(\prod_{\substack{j<k \\ j,k\neq c}}^{2N}(x_k-x_j)^{2/\kappa}\Bigg)\Bigg(\prod_{\substack{k=1 \\ k\neq c}}^{2N}|x_c-x_k|^{1-6/\kappa}\Bigg)\overbrace{\oint_{\Gamma_{N-1}}{\rm d}u_{N-1}\dotsm}^{\mathcal{J}(\boldsymbol{x})}\\ 
\underbrace{\dotsm\oint_{\Gamma_2}{\rm d}u_2\,\,\oint_{\Gamma_1}{\rm d}u_1\,\,\mathcal{N}\Bigg[\Bigg(\prod_{\substack{l=1 \\ l\neq c}}^{2N}\prod_{m=1}^{N-1}(x_l-u_m)^{-4/\kappa}\Bigg)\Bigg(\prod_{m=1}^{N-1}(x_c-u_m)^{12/\kappa-2}\Bigg)\Bigg(\prod_{p<q}^{N-1}(u_p-u_q)^{8/\kappa}\Bigg)\Bigg]}_{\mathcal{J}(\boldsymbol{x})},\end{multline}
where $n(\kappa):=-2\cos(4\pi/\kappa)$ is the loop-fugacity function of the O$(n)$ model \cite{gruz, rgbw, smir4,smir}, $\Gamma_m$ is a Pochhammer contour $\mathscr{P}(x_i,x_j)$ (figure \ref{PochhammerContour}) that shares its ``endpoints" $x_i$ and $x_j$ with the $m$th arc in the diagram for $[\mathscr{L}_\vartheta]$, and no contour shares its endpoints with the $N$th arc, which has an endpoint at $x_c$.  Borrowing terminology from the Coulomb gas formalism, we call this exceptional point $x_c$ \emph{the point bearing the conjugate charge.}

We bring attention to some other details concerning this formula (\ref{Fexplicit}).  First, we call the multiple-contour integral $\mathcal{J}(\boldsymbol{x})$ appearing in (\ref{Fexplicit}) a \emph{Coulomb gas} (or \emph{Dotsenko-Fateev}) integral, and the symbol $\mathcal{N}$ selects the branch of the logarithm for each power function in its integrand so $\mathcal{F}_{c,\vartheta}$ is real-valued for $\kappa>0$.  (See appendix \red{B} of \cite{florkleb3}.)  In \cite{florkleb3}, we show that $\mathcal{F}_{c,\vartheta}$ is an analytic function of $(\kappa,\boldsymbol{x})\in[(0,8)\times i\mathbb{R}]\times\Omega_0$ and that if $\kappa>4$, then we may simplify (\ref{Fexplicit}) by replacing each Pochhammer contour $\mathscr{P}(x_i,x_j)$ with a simple contour $[x_i,x_j]$ bent into the upper half-plane and dropping the factors of $4\sin^2(4\pi/\kappa)$ in the prefactor (figure \ref{PochhammerContour}).  Finally, we may generate other elements of $\mathcal{S}_N$ from (\ref{Fexplicit}) by replacing the contours described beneath this formula with any collection of closed nonintersecting contours \cite{dub}.  We call these solutions \emph{Coulomb gas functions} and linear combinations of them \emph{Coulomb gas solutions} \cite{florkleb3}.

\begin{figure}[b]
\centering
\includegraphics[scale=0.3]{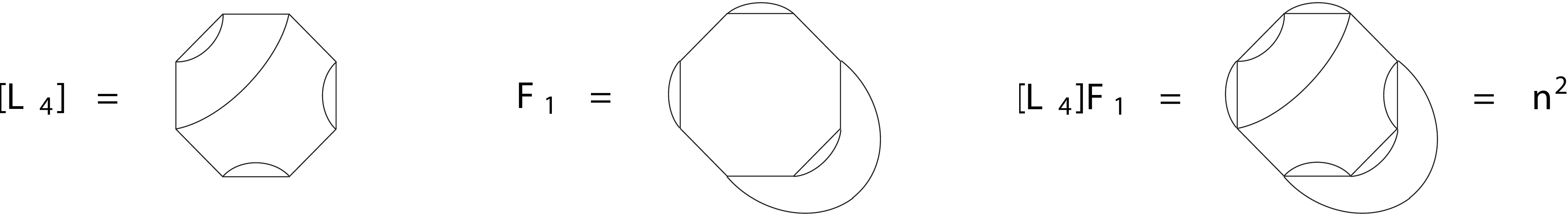}
\caption{The diagram for $[\mathscr{L}_4]\in\mathscr{B}_4^*$, for $\mathcal{F}_1\in\mathcal{B}_4$, and for their product $[\mathscr{L}_4]\mathcal{F}_1\in\mathbb{R}$.  The product diagram contains two loops and therefore evaluates to $n^2$.}
\label{innerproduct}
\end{figure}

If the set $\mathcal{B}_N$ (\ref{BN}) is linearly independent, then it follows from the bound $\dim\mathcal{S}_N\leq C_N$ that the statements of goals \ref{item1} and \ref{item2} above are indeed true.  Hence, we determine the rank of $\mathcal{B}_N$ in \cite{florkleb3}.  To do this, we send each of its elements to a vector $v\in\mathbb{R}^{C_N}$ via the injective linear map $v:\mathcal{S}_N\rightarrow\mathbb{R}^{C_N}$ with $v(F)_\varsigma:=[\mathscr{L}_\varsigma]F$ and show that the square matrix whose columns are the vectors $v(\mathcal{F}_1)$, $v(\mathcal{F}_2),\ldots,v(\mathcal{F}_{C_N})$ has a non-vanishing determinant.  To facilitate this calculation, we invoke the \emph{polygon (resp.\ half-plane) diagram for $\mathcal{F}_\vartheta$} (or more simply, the \emph{diagram for $\mathcal{F}_\vartheta$}), which we define as the diagram for $[\mathscr{L}_{\vartheta}]$, but with all interior arcs replaced by \emph{exterior arcs} drawn outside the $2N$-sided polygon (figure \ref{Fk}) (resp.\ in the lower half-plane).  Then the main result (\red{49}) of \cite{florkleb3} is
\be\label{LkFk}[\mathscr{L}_\varsigma]\mathcal{F}_\vartheta(\kappa)=n(\kappa)^{l_{\varsigma,\vartheta}},\quad \varsigma,\vartheta\in\{1,2,\ldots,C_N\},\quad n(\kappa):=-2\cos(4\pi/\kappa),\ee
with $l_{\varsigma,\vartheta}$ the number of loops in the product diagram for $[\mathscr{L}_\varsigma]\mathcal{F}_\vartheta$ (with the polygon deleted), shown in figure \ref{innerproduct}.  The $C_N\times C_N$ matrix $(M_N\circ n)(\kappa)$ whose $(\varsigma,\vartheta)$th entry is (\ref{LkFk}) is called the \emph{meander matrix} \cite{fgg, fgut, difranc, franc}, and its zeros satisfy
\be\label{thezeros}n(\kappa)=n_{q,q''}\quad\text{for some $n_{q,q''}:=-2\cos(\pi q''/q)$ with $q,q''\in\mathbb{Z}^+$ coprime and $q''<q\leq N+1$.}
\ee
Thus, we conclude that $\mathcal{B}_N$ is linearly independent if and only if $\kappa$ is not a solution of (\ref{thezeros}).  The positive solutions of $n(\kappa)=n_{q,q''}$ are what we call \emph{exceptional speeds}, that is
\be\label{exceptional}\kappa_{q,q'}:=4q/q',\quad\text{with $q>1$ and $\{q,q'\}$  a pair of coprime positive integers.}\ee
We note that the exceptional speeds are really the positive rational speeds, excluding those of the form $4/r$ for some $r\in\mathbb{Z}^+$.  Actually, interesting behavior occurs at all rational speeds $\kappa\in(0,8)$.  Table \ref{tablespeeds} shows the various possibilities.

From these results, we achieve goals \ref{item1} and \ref{item2} for $\kappa$ not an exceptional speed (\ref{exceptional}) with $q\leq N+1$.  Furthermore, if $\kappa$ is such a speed, then we use $\mathcal{B}_N$ to construct a different linearly independent set $\mathcal{B}_N^{\scaleobj{0.75}{\bullet}}$ of $C_N$ elements of $\mathcal{S}_N$ in \cite{florkleb3}, again achieving goals \ref{item1} and \ref{item2}.  We summarize these results in this theorem (previously stated as theorem \red{8} in \cite{florkleb3}).
\begin{theorem}\label{maintheorem} Suppose that $\kappa\in(0,8)$.  Then the following are true.
\begin{enumerate}
\item\label{firstitem} $\mathcal{B}_N$ is a basis for $\mathcal{S}_N$ if and only if $\kappa$ is not an exceptional speed (\ref{exceptional}) with $q\leq N+1$.
\item\label{seconditem}  $\dim\mathcal{S}_N=C_N,$ with $C_N=(2N)!/N!(N+1)!$ the $N$th Catalan number.
\item\label{thirditem} $\mathcal{S}_N$ has a basis consisting entirely of real-valued Coulomb gas solutions.
\item\label{fourthitem} The map $v:\mathcal{S}_N\rightarrow\mathbb{R}^{C_N}$ with $v(F)_\varsigma:=[\mathscr{L}_\varsigma]F$ is a vector-space isomorphism.
\item\label{fifthitem} $\mathscr{B}_N^*:=\{[\mathscr{L}_1],[\mathscr{L}_2],\ldots,[\mathscr{L}_{C_N}]\}$ is a basis for $\mathcal{S}_N^*$.
\end{enumerate}\end{theorem}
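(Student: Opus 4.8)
The plan is to assemble the five claims from the results already recalled above, treating the upper bound $\dim\mathcal{S}_N\leq C_N$ from \cite{florkleb,florkleb2} and the meander-matrix evaluation (\ref{LkFk}) as the two principal inputs; essentially no new analysis is needed, only careful bookkeeping.

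First I would establish the first claim. The $C_N\times C_N$ meander matrix $(M_N\circ n)(\kappa)$ whose $(\varsigma,\vartheta)$th entry is (\ref{LkFk}) has as its $\vartheta$th column precisely the coordinate vector $v(\mathcal{F}_\vartheta)\in\mathbb{R}^{C_N}$. Because $v$ is linear and injective, the set $\mathcal{B}_N$ (\ref{BN}) is linearly independent if and only if these columns are, that is, if and only if $\det(M_N\circ n)(\kappa)\neq0$. The zeros of this determinant are characterized in (\ref{thezeros}), whose positive solutions in $\kappa$ are exactly the exceptional speeds (\ref{exceptional}) subject to the constraint $q\leq N+1$ inherited from (\ref{thezeros}); hence $\mathcal{B}_N$ is linearly independent exactly when $\kappa$ is not such a speed. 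A linearly independent subset of size $C_N$ inside a space of dimension at most $C_N$ must be a basis (and forces that dimension to equal $C_N$), so independence of $\mathcal{B}_N$ makes it a basis; conversely a basis is in particular independent. This proves the equivalence in the first claim.

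For the dimension count, whenever $\kappa$ is not an exceptional speed with $q\leq N+1$ the first claim already gives $\dim\mathcal{S}_N=|\mathcal{B}_N|=C_N$. For the remaining exceptional values $\mathcal{B}_N$ degenerates, and here I would invoke the alternative set $\mathcal{B}_N^{\scaleobj{0.75}{\bullet}}\subset\mathcal{S}_N$ of $C_N$ linearly independent real-valued Coulomb gas solutions constructed in \cite{florkleb3}; its existence yields $\dim\mathcal{S}_N\geq C_N$, which together with $\dim\mathcal{S}_N\leq C_N$ forces equality at every $\kappa\in(0,8)$. The third claim is then immediate, since for each $\kappa$ one of $\mathcal{B}_N$ or $\mathcal{B}_N^{\scaleobj{0.75}{\bullet}}$ is a basis of $\mathcal{S}_N$ built entirely from real-valued Coulomb gas solutions. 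The fourth and fifth claims follow as soft linear algebra: $v$ is linear and injective and, by the dimension count, maps between spaces of equal finite dimension $C_N$, so it is an isomorphism; and since $[\mathscr{L}_\varsigma]=e_\varsigma^*\circ v$ with $\{e_\varsigma^*\}$ the standard dual basis of $\mathbb{R}^{C_N}$, the transpose $v^*$ of this isomorphism sends each $e_\varsigma^*$ to $[\mathscr{L}_\varsigma]$, exhibiting $\mathscr{B}_N^*$ as the image of a basis under an isomorphism and hence a basis of $\mathcal{S}_N^*$.

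I expect the one genuine subtlety to be the exceptional-speed portion of the second claim: there $\mathcal{B}_N$ fails to span, so the entire lower bound on $\dim\mathcal{S}_N$ must be rescued by $\mathcal{B}_N^{\scaleobj{0.75}{\bullet}}$, and one must verify that this alternative family truly lies in $\mathcal{S}_N$ (satisfying the null-state equations (\ref{nullstate}), the Ward identities (\ref{wardid}), and the growth bound (\ref{powerlaw})) and is independent. Both points are supplied by the construction in \cite{florkleb3}, so in the present article they can be cited rather than reproved; everything else reduces to the single nontrivial analytic fact (\ref{LkFk}) together with the dimension bound, and is pure bookkeeping.
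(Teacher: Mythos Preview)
Your proposal is correct and follows precisely the approach the paper summarizes in its introduction: this theorem is stated here as a restatement of theorem~8 of \cite{florkleb3}, with no new proof given, and the logical structure you lay out---injectivity of $v$ giving $\dim\mathcal{S}_N\le C_N$, the meander-matrix evaluation (\ref{LkFk}) to decide independence of $\mathcal{B}_N$ via (\ref{thezeros}), and the alternative set $\mathcal{B}_N^{\scaleobj{0.75}{\bullet}}$ from \cite{florkleb3} at exceptional speeds---is exactly what the paper recounts as the argument. Your identification of the exceptional-speed case as the one nontrivial point requiring the cited construction is also the emphasis the paper places on it.
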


\begin{table}[t]
\centering
\begin{tabular}{llllll}
SLE$_\kappa$ speed \hspace{.75cm} & exceptional \hspace{.1cm} & $c(\kappa)$ (\ref{ck}) a central charge \hspace{.2cm} & indicial power of Frobenius  \hspace{.2cm} & Log term in \hspace{.4cm} & all elements of \\
$\kappa\in(0,8)$ & speed & of a CFT minimal model & series differ by an integer & $\psi_1\times\psi_1$ OPE & $\mathcal{B}_N$ algebraic \\
\hline
$8/\kappa\in2\mathbb{Z}^+$ & $\qquad\times$ & $\quad\qquad\qquad\times$ & $\qquad\qquad\qquad$\checkmark & $\qquad\times$ & $\qquad$\checkmark \\
$8/\kappa\in2\mathbb{Z}^++1$ & $\qquad$\checkmark & $\quad\qquad\qquad$\checkmark &  $\qquad\qquad\qquad$\checkmark & $\qquad$\checkmark & $\qquad$\checkmark \\
$\kappa=\kappa_{q,q'}$, $q>2$ & $\qquad$\checkmark & $\quad\qquad\qquad$\checkmark & $\qquad\qquad\qquad\times$ & $\qquad\times$ & $\qquad$? \\
$\kappa\not\in\mathbb{Q}$ & $\qquad\times$ & $\quad\qquad\qquad\times$ & $\qquad\qquad\qquad\times$ & $\qquad\times$ & $\qquad?$
\end{tabular}
\caption{A table of all SLE$_\kappa$ speeds $\kappa\in(0,8)$ collected into disjoint groups with common properties.  Here, (\ref{exceptional}) defines $\kappa_{q,q'}$, and $?$ may be $\checkmark$ or $\times$, depending on the value of $\kappa$.  We prove column three (resp.\ four and five) in section \ref{minmodelsect} (resp.\ section \ref{frobsect}).}
\label{tablespeeds}
\end{table}

In this article, we prove some theorems and corollaries concerning the system (\ref{nullstate}, \ref{wardid}) that follow from these results and that relate to CFT and multiple SLE$_\kappa$.  In section \ref{frobsect}, we prove that any element of $\mathcal{S}_N$ equals a sum of at most two Frobenius series in powers of the distance between two adjacent points (i.e., coordinates of $\boldsymbol{x}\in\Omega_0$). (If $8/\kappa$ is odd, then a logarithmic term may multiply one of these sums.)  This establishes an important element in the operator product expansion (OPE) for one-leg boundary operators, assumed in CFT.   In section \ref{xingprob}, we identify the elements of $\mathcal{S}_N$ that are dual to the linear functionals of $\mathscr{B}_N^*$ (item \ref{fifthitem} of theorem \ref{maintheorem}), and we state some of their properties in theorem \ref{xingasymplem}.  Motivated by these  properties, we posit that these dual functions are in fact the connectivity weights we seek in goal \ref{item3} stated above, and we conjecture a formula (\ref{xing}, \ref{xing2}) for the ``crossing-probability" that the curves of a multiple-SLE$_\kappa$ process join in a specific connectivity.  In section \ref{intervalsect}, we introduce two different definitions  of a ``pure interval."  First, a pure interval in multiple SLE$_\kappa$ is either \emph{contractible} or \emph{propagating} according to the following conditions: If $F\in\mathcal{S}_N$ is the partition function (definition \ref{partitiondefn}) for a multiple-SLE$_\kappa$ process that, with probability one, generates a boundary arc (i.e., a fluctuating multiple-SLE$_\kappa$ curve in the long-time limit) with endpoints at $x_i$ and $x_{i+1}$, then we call $(x_i,x_{i+1})$ a \emph{contractible interval of $F$}.  Alternatively, if this multiple-SLE$_\kappa$ process, with probability one, generates a pair of distinct boundary arcs with endpoints at $x_i$ and $x_{i+1}$ respectively, then we call $(x_i,x_{i+1})$ a \emph{propagating interval of $F$.}  On the other hand, the definition of a pure interval in  CFT is different.  If the one-leg boundary operators at the interval's endpoints have only the identity (resp.\ two-leg) family in their OPE, then we call $(x_i,x_{i+1})$ an \emph{identity (resp.\ a two-leg) interval of $F$.}  Lemma \ref{proptwoleglem} states that propagating intervals and two-leg intervals are identical.  However, we find that contractible intervals and identity intervals are, in general, not identical.  (This may be understood in a statistical mechanics sense by recalling that an identity operator puts no conditions on boundary arc connectivities \cite{js}.)  In order to partially determine the relation between contractible and identity intervals, we ``insert" an identity interval into the domain of a connectivity weight in $\mathscr{B}_{N-1}$, generating an element of $\mathcal{S}_N$.  By decomposing the function that results over the basis $\mathscr{B}_N$ (\ref{finallincmb}), we characterize an identity interval in this situation as a particular linear superposition of a contractible interval and a propagating interval.  Finally, in section \ref{minmodelsect}, we explore the connection between the SLE$_\kappa$ exceptional speeds (\ref{exceptional}) and the CFT minimal models, and we propose conjecture \ref{minmodelconj} as a potential explanation for this connection.

In two future articles, we find explicit formulas for connectivity weights with $N\in\{1,2,3,4\}$ \cite{fsk}, and we combine the crossing-probability formulas (\ref{xing}, \ref{xing2}) with a physical interpretation of the elements of $\mathcal{B}_N$ (\ref{BN}) to predict formulas for cluster-crossing probabilities of critical lattice models (such as percolation, Potts models, and random cluster models) in a polygon with a free/fixed side-alternating boundary condition \cite{fkz}.  We verify our predictions with high-precision computer simulations of the $Q\in\{2,3\}$ critical random cluster model in a hexagon, finding good agreement.

\section{Frobenius series and one-leg boundary OPE}\label{frobsect}

In this section and with $\kappa\in(0,8)$ as usual, we find Frobenius series expansions for elements of $\mathcal{S}_N$ in powers of $x_{i+1}-x_i$ for any $i\in\{1,2,\ldots,2N-1\}$.  Theorem \ref{frobseriescor} summarizes our findings.  After we prove this theorem, we interpret these expansions as OPEs of CFT one-leg boundary operators in this section and again in section \ref{intervalsect}.

To begin, we show that any element of $\mathcal{B}_N$ (\ref{BN}) equals such a Frobenius series.  For every $\mathcal{F}_\vartheta\in\mathcal{B}_N$, (\ref{Fexplicit}) gives $2N$ different choices of formula for it, and these formulas vary only by the location of the point $x_c$ bearing the conjugate charge.  After choosing any $c\not\in\{i,i+1\}$, we note that the integration contours in the selected formula $\mathcal{F}_{c,\vartheta}$ may interact with the points $x_i$ and $x_{i+1}$ in one of these three ways:
\begin{enumerate}
\setcounter{enumi}{1}
\item\label{sc2} Both $x_i$ and $x_{i+1}$ are endpoints of one common contour, call it $\Gamma_1$.
\item\label{sc3} $x_i$ (resp.\ $x_{i+1}$) is an endpoint of one contour, call it $\Gamma_1$, and $x_{i+1}$ (resp.\ $x_i$) is not an endpoint of any contour.
\item\label{sc4} $x_i$ is an endpoint of one contour, call it $\Gamma_1$, and $x_{i+1}$ is an endpoint of a different contour, call it $\Gamma_2$.
\end{enumerate}
(The numbering follows appendix \red{A} of \cite{florkleb3}. We define case \red{1} below.)  Actually, we do not need to consider case \ref{sc4} at all.  Indeed, if the formula $\mathcal{F}_{c,\vartheta}$ falls under case \ref{sc4}, then there is always another $c'\not\in\{i,i+1\}$ such that the alternative formula $\mathcal{F}_{c',\vartheta}$ falls under case \ref{sc3} instead.  Thus, we only consider cases \ref{sc2} and \ref{sc3} here.

If $x_i$ and $x_{i+1}$ are endpoints of a common arc in the half-plane diagram for $\mathcal{F}_\vartheta$, then upon choosing $c\not\in\{i,i+1\}$, the formula $\mathcal{F}_{c,\vartheta}$ (\ref{Fexplicit}) selected for $\mathcal{F}_\vartheta$ falls under case \ref{sc2}.  As we noted between (\red{42}--\red{44}) and beneath (\red{44}) in \cite{florkleb3}, the substitution $u_1(t)=x_i(1-t)+x_{i+1}t$ for the integration along $\Gamma_1$ casts the Coulomb gas integral of (\ref{Fexplicit}) in the form
\be\label{Idecomp'}\mathcal{J}(\boldsymbol{x})=(x_{i+1}-x_i)^{1-8/\kappa}\mathcal{K}(\boldsymbol{x})\ee
for some function $\mathcal{K}$ that is analytic and non-vanishing at $x_{i+1}=x_i$.  (See also section \red{A 2} in \cite{florkleb3}.)  After inserting this factorization (\ref{Idecomp'}) into the selected formula (\ref{Fexplicit}) for $\mathcal{F}_\vartheta$, we find that 
\be\label{case2F}\mathcal{F}_\vartheta(\boldsymbol{x})=(x_{i+1}-x_i)^{1-6/\kappa}\mathcal{G}_\vartheta(\boldsymbol{x})\ee
for some function $\mathcal{G}_\vartheta$ that is analytic and non-vanishing at $x_{i+1}=x_i$.  We conclude that if $x_i$ and $x_{i+1}$ are endpoints of a common arc in the half-plane diagram for $\mathcal{F}_\vartheta$, then this function equals a Frobenius series centered on $x_{i+1}=x_i$ and with indicial power $1-6/\kappa$.  (We previously noted this fact in the paragraph beneath (\red{44}) in \cite{florkleb3}.)

If $x_i$ and $x_{i+1}$ are not endpoints of a common arc in the half-plane diagram for $\mathcal{F}_\vartheta$, then we choose $c\not\in\{i,i+1\}$ such that the formula $\mathcal{F}_{c,\vartheta}$ (\ref{Fexplicit}) for $\mathcal{F}_\vartheta$ falls under case \ref{sc3}.  Assuming $8/\kappa\not\in\mathbb{Z}^+$, we repeat the analysis in section \red{A 3} of \cite{florkleb3} next, deforming the integration contour $\Gamma_1$ of the Coulomb gas integral $\mathcal{J}$ in (\ref{Fexplicit}) into a contour falling under case \ref{sc2} and a collection of contours falling under what we refer to as ``case \red{1}," that is, with no endpoints at $x_i$ or $x_{i+1}$.  After we deform $\Gamma_1$ in this way, we find that the Coulomb gas integral of (\ref{Fexplicit}) decomposes into the sum
\be\label{Idecomp}\mathcal{J}(\boldsymbol{x})=\underbrace{n^{-1}\mathcal{J}_1(\boldsymbol{x})}_{\text{case \ref{sc2}}}+\underbrace{a_2\mathcal{J}_2(\boldsymbol{x})+\dotsm+a_{2N-3}\mathcal{J}_{2N-3}(\boldsymbol{x})+a_{2N-2}\mathcal{J}_{2N-2}(\boldsymbol{x})}_{\text{case \red{1}}},\ee
where $a_k\in\mathbb{R}$, $n$ is defined in (\ref{LkFk}), and $\mathcal{J}_1$ (resp.\ each $\mathcal{J}_k$ with $1<k\leq2N-2$) is a case \ref{sc2} (resp.\ case \red{1}) Coulomb gas integral with the same form and integration contours $\Gamma_2,$ $\Gamma_3,\ldots,\Gamma_{N-1}$ as $\mathcal{J}$ in (\ref{Fexplicit}), but with $\Gamma_1$ now different from that of $\mathcal{J}$.  As we observed earlier, $\mathcal{J}_1$, being a case \ref{sc2} term, factors as in (\ref{Idecomp'}) with $\mathcal{K}$ analytic and non-vanishing at $x_{i+1}=x_i$.  Furthermore, the case \red{1} terms of (\ref{Idecomp}), with neither $x_i$ nor $x_{i+1}$ an endpoint of any integration contour, are also analytic and non-vanishing at $x_{i+1}=x_i$.  Hence, after we insert the factorization (\ref{Idecomp'}) for $\mathcal{J}_1$ into (\ref{Idecomp}) and then insert the decomposition (\ref{Idecomp}) for $\mathcal{J}$ into (\ref{Fexplicit}), we find that
\be\label{expandF}\mathcal{F}_\vartheta(\boldsymbol{x})=(x_{i+1}-x_i)^{1-6/\kappa}\mathcal{G}_\vartheta(\boldsymbol{x})+(x_{i+1}-x_i)^{2/\kappa}\mathcal{H}_\vartheta(\boldsymbol{x})\ee
for some functions $\mathcal{G}_\vartheta$ and $\mathcal{H}_\vartheta$ that are both analytic and non-vanishing at $x_{i+1}=x_i$.  Here, the term with $\mathcal{G}_\vartheta$ (resp.\ $\mathcal{H}_\vartheta$) arises from the case \ref{sc2} term (resp.\ case \red{1} terms) in (\ref{Idecomp}).  We conclude that if $8/\kappa\not\in\mathbb{Z}^+$ and $x_i$ and $x_{i+1}$ are not endpoints of a common arc in the half-plane diagram for $\mathcal{F}_\vartheta$, then this function equals a sum of two Frobenius series in powers of $x_{i+1}-x_i$ and with respective indicial powers $1-6/\kappa$ and $2/\kappa$.  These powers are necessarily the same indicial powers that we derived in the analysis preceding lemma \red{3} in \cite{florkleb} by inserting a Frobenius series expansion for $F\in\mathcal{S}_N$ directly into the null-state PDEs centered on $x_i$ and $x_{i+1}$.

Supposing still that $8/\kappa\not\in\mathbb{Z}^+$, we determine if the other elements of $\mathcal{S}_N$ have the expansions encountered in the previous paragraph.  If in addition, $\kappa$ is not an exceptional speed (\ref{exceptional}) with $q\leq N+1$, then item \ref{firstitem} of theorem \ref{maintheorem} states that $\mathcal{B}_N$ is a basis for $\mathcal{S}_N$.  After decomposing $F\in\mathcal{S}_N$ over this basis and inserting (\ref{expandF}) for each term in the decomposition, we conclude that $F$ has this same form (\ref{expandF}).  Moreover, the indicial powers of these series do not increase due to cancellations of lower-order terms in this decomposition over $\mathcal{B}_N$ because they are fixed by the null-state PDE (\ref{nullstate}) centered on $x_i$ or $x_{i+1}$.  (See the calculation preceding lemma \red{3} of \cite{florkleb}.)  However, if $\kappa$ is an exceptional speed (\ref{exceptional}) with $q\leq N+1$, then whether or not all elements of $\mathcal{S}_N$ exhibit the expansion (\ref{expandF}) is unclear.  Indeed, if $F_1\in\mathcal{S}_N\setminus\text{span}\,\mathcal{B}_N$, then the proof of theorem \red{8} in \cite{florkleb3} shows that there is another function $F_2(\varkappa)\in\text{span}\,\mathcal{B}_N(\varkappa)$ such that for all $\varkappa$ sufficiently close to $\kappa$, 
\be\label{taylorseries}F_2(\varkappa)=(\varkappa-\kappa)F_1+O((\varkappa-\kappa)^2).\ee
Thus we may obtain $F_1$ from $F_2$ by differentiating the latter with respect to $\varkappa$, followed by setting $\varkappa=\kappa$.  This involves differentiating (\ref{Fexplicit}) with respect to $\kappa$, which at least initially introduces factors of $\log(x_{i+1}-x_i)$.

Moreover, if $8/\kappa\in\mathbb{Z}^+$, then the difference of the indicial powers in (\ref{expandF}) is an integer.  We recall the following fact of an ordinary differential equation studied near one of its regular singular points \cite{benors}.  If the zeros of the corresponding indicial polynomial differ by an integer, then typically there are two linearly independent solutions with the following properties.  One equals a Frobenius series in powers of the distance to the regular singular point, with its indicial power the bigger root of the polynomial.  The other equals the sum of another such Frobenius series, with its indicial power the smaller root, and the product of the logarithm of the distance to the regular singular point multiplied by another such Frobenius series, with its indicial power the greater root.  If this fact generalizes to the system (\ref{nullstate}, \ref{wardid}), then we may expect to see logarithmic factors multiplying some of these Frobenius series if $8/\kappa\in\mathbb{Z}^+$.

The following theorem shows that this is not quite the case.  Logarithmic terms  appear, but only if $8/\kappa$ is an {\it odd} integer, i.e., if $8/\kappa\in\mathbb{Z}^+$, and $\kappa$ is an exceptional speed (\ref{exceptional}).

\begin{theorem}\label{frobseriescor} Suppose that $\kappa\in(0,8)$, $F\in\mathcal{S}_N$, and $i\in\{1,2,\ldots,2N-1\}$. 
\begin{enumerate}
\item\label{frobitem1} If $8/\kappa\not\in\mathbb{Z}^+$, then there is an $R>0$ (depending on $x_j$ with $j\neq i+1$) and functions $A_m, B_m$ for each $m\in\mathbb{Z}^+\cup\{0\}$ such that if $0<x_{i+1}-x_i<R$, then ($\pi_i$ is the projection map that removes the $i$th coordinate from $\boldsymbol{x}\in\Omega_0$ (\ref{components}))
\be\label{nolog}F(\boldsymbol{x})=(x_{i+1}-x_i)^{1-6/\kappa}\sum_{m=0}^\infty(A_m\circ\pi_i)(\boldsymbol{x})\,(x_{i+1}-x_i)^m+(x_{i+1}-x_i)^{2/\kappa}\sum_{m=0}^\infty(B_m\circ\pi_i)(\boldsymbol{x})\,(x_{i+1}-x_i)^m.\ee
Also, if $A_0=0$ (resp.\ $B_0=0$), then $A_m=0$ (resp.\ $B_m=0$) for all $m\in\mathbb{Z}^+\cup\{0\}$.
\item\label{frobitem2} If $8/\kappa=r$ with $r$ even, then there is an $R>0$ (depending on $x_j$ with $j\neq i+1$) and functions $A_m$ for each $m\in\{0,1,\ldots,r-2\}$ and $B_m$ for each $m\in\mathbb{Z}^+\cup\{0\}$ such that if $0<x_{i+1}-x_i<R$, then
\be\label{stillnolog}F(\boldsymbol{x})=(x_{i+1}-x_i)^{1-6/\kappa}\sum_{m=0}^{r-2}(A_m\circ\pi_i)(\boldsymbol{x})\,(x_{i+1}-x_i)^m+(x_{i+1}-x_i)^{2/\kappa}\sum_{m=0}^\infty(B_m\circ\pi_i)(\boldsymbol{x})\,(x_{i+1}-x_i)^m.\ee
Also, if $A_0=0$, then $A_m=0$ for all $m\in\{0,1,\ldots,r-2\}$, and if $A_0=B_0=0$, then $F$ is zero.
\item\label{frobitem3} If $8/\kappa=r$ with $r>1$ odd, then there is an $R>0$ (depending on $x_j$ with $j\neq i+1$) and functions $A_m$ for each $m\in\{0,1,\ldots,r-2\}$ and $B_m,C_m$ for each $m\in\mathbb{Z}^+\cup\{0\}$ such that if $0<x_{i+1}-x_i<R$, then
\begin{multline}\label{log}\begin{aligned}F(\boldsymbol{x})=(x_{i+1}-x_i)^{1-6/\kappa}\sum_{m=0}^{r-2} (A_m\circ\pi_i)(\boldsymbol{x})\,(x_{i+1}-x_i)^m\,+\,&(x_{i+1}-x_i)^{2/\kappa}\sum_{m=0}^\infty (B_m\circ\pi_i)(\boldsymbol{x})\,(x_{i+1}-x_i)^m\\
+\,\log(x_{i+1}-x_i)&(x_{i+1}-x_i)^{2/\kappa}\sum_{m=0}^\infty (C_m\circ\pi_i)(\boldsymbol{x})\,(x_{i+1}-x_i)^m.\end{aligned}\end{multline}
Also, if $A_0=0$ or $C_0=0$, then $A_m=0$ for all $m\in\{0,1,\ldots,r-2\}$ and $C_m=0$ for all $m\in\mathbb{Z}^+\cup\{0\}$, and if $A_0=B_0=0$, then $F$ is zero.  Finally, the last series in (\ref{log}) with the logarithm factor dropped is in $\mathcal{S}_N$.
\end{enumerate}
In cases \ref{frobitem1}--\ref{frobitem3}, $\partial_iA_0=0$, $A_0\in\mathcal{S}_{N-1}$, and (if $\kappa\neq4$) $A_1=0$. (We discuss the case $\kappa=4$ after the proof.)
\end{theorem}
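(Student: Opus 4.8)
The plan is to reduce the theorem to the two-term structure of the Coulomb gas basis already recorded in (\ref{case2F}) and (\ref{expandF}), and then to transport that structure to an arbitrary $F\in\mathcal{S}_N$ by way of theorem \ref{maintheorem}. The organizing device throughout is the indicial data of (\ref{nullstate}): inserting the Frobenius ansatz into the null-state PDE centered at $x_i$, with $\xi:=x_{i+1}-x_i$, reproduces the indicial polynomial $P$ whose roots are $\lambda_1:=1-6/\kappa$ and $\lambda_2:=2/\kappa$ (the computation preceding lemma~3 of \cite{florkleb}), with gap $\lambda_2-\lambda_1=8/\kappa-1$. The recursion coefficient at order $m$ of the smaller-root series is $P(\lambda_1+m)=m\,(m-(8/\kappa-1))$, which vanishes only when $8/\kappa=r\in\mathbb{Z}^+$ and $m=r-1$; since $P(\lambda_2+m)=m\,(m+8/\kappa-1)\neq0$ for $m\geq1$, the larger-root ($B$) series is always unobstructed. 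This immediately gives the implications $A_0=0\Rightarrow A_m=0$ and $B_0=0\Rightarrow B_m=0$ by induction on the recursion.

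First I would dispose of item \ref{frobitem1} ($8/\kappa\notin\mathbb{Z}^+$). When $\kappa$ is not an exceptional speed with $q\leq N+1$, $\mathcal{B}_N$ is a basis (item \ref{firstitem} of theorem \ref{maintheorem}); decomposing $F$ over it and inserting (\ref{expandF}) termwise yields (\ref{nolog}), with convergence on a common disk $0<\xi<R$ inherited from the convergent contour-integral expansions of the $\mathcal{F}_\vartheta$ and with $R$ governed by the nearest collision of $x_{i+1}$ with another point. When instead $\mathcal{B}_N$ fails to span (an exceptional $\kappa$ with $q\leq N+1$, which here forces $q>2$), I would recover the missing solutions by differentiating (\ref{taylorseries}) in $\varkappa$; although $\partial_\varkappa\xi^{\lambda(\varkappa)}$ formally produces $\log\xi$, the resulting $F$ is a genuine solution, and because $\lambda_2-\lambda_1\notin\mathbb{Z}$ the two pure Frobenius series already span the local solution space, so the logarithms must cancel.

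For items \ref{frobitem2} and \ref{frobitem3} I would exploit the arithmetic that $\kappa=8/r$ is an exceptional speed precisely when $r$ is odd. For $r$ even one has $\kappa=4/(r/2)$, a non-exceptional speed, so $\mathcal{B}_N$ is a basis and the decomposition argument again gives a log-free expansion; the reindexing $\xi^{\lambda_1+m}=\xi^{\lambda_2+(m-(r-1))}$ for $m\geq r-1$ then folds every high-order smaller-root term into the $B$-series, truncating the $A$-sum at $m=r-2$ and producing (\ref{stillnolog}). For $r>1$ odd, $\kappa=8/r$ is exceptional with $q=2\leq N+1$, so $\mathcal{B}_N$ does not span and the differentiated solutions of the previous paragraph reappear; now $\lambda_2-\lambda_1=r-1$ is a positive integer, the resonance at $m=r-1$ is active, and the $\log\xi$ from $\partial_\varkappa$ can no longer cancel. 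That the coefficient of $\log\xi$, with the logarithm dropped, again lies in $\mathcal{S}_N$ is the standard Fuchsian fact that the log-coefficient is annihilated by the same operator, which I would verify directly against (\ref{nullstate}, \ref{wardid}).

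The hard part will be the parity dichotomy itself: showing the resonance obstruction genuinely vanishes for even $r$ but not for odd $r$, rather than merely inferring it from whether $\kappa$ is exceptional. The cleanest route I see is to follow the monodromy of the Coulomb gas integrand around the non-endpoint $x_{i+1}$ in case \ref{sc3}: the factor $(x_{i+1}-u_1)^{-4/\kappa}$ carries monodromy $e^{-2\pi i\cdot4/\kappa}=(-1)^r$ at $8/\kappa=r$, so for $r$ even the integrand is single-valued about $x_{i+1}$ and the deformation (\ref{Idecomp}) survives the integer limit intact with no logarithm, whereas for $r$ odd the genuine square-root branch at $x_{i+1}$ obstructs the deformation and is exactly what surfaces as the logarithm after $\varkappa$-differentiation. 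Finally, the leading-coefficient facts $A_0\in\mathcal{S}_{N-1}$ and $\partial_iA_0=0$ follow by matching the $\xi^{1-6/\kappa}$ term against the limit functional (\ref{lim}), while $A_1=0$ (for $\kappa\neq4$) comes from the $m=1$ recursion, whose coefficient $P(\lambda_1+1)=2-8/\kappa$ vanishes exactly at $\kappa=4$ --- the degenerate case $r=2$ to be treated separately.
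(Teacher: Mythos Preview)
Your overall strategy matches the paper's: reduce to $\mathcal{B}_N$ (or $\mathcal{B}_N^{\scaleobj{0.75}{\bullet}}$ when $\mathcal{B}_N$ fails to span), exploit that $\kappa=8/r$ is exceptional iff $r$ is odd, and track where $\varkappa$-differentiation produces logarithms. Items~\ref{frobitem1} and~\ref{frobitem2} are essentially the paper's argument. One refinement: your cancellation of logarithms at exceptional speeds in item~\ref{frobitem1} appeals to ``the two pure Frobenius series already span the local solution space,'' which invokes a Fuchsian theory for this \emph{PDE system} that you have not justified. The paper avoids this entirely by a direct algebraic observation: after inserting the expansions (\ref{nolog}) for each $\mathcal{F}_\vartheta$ into the $\varkappa$-derivative (\ref{Laurentsubs}), the two series multiplying $\log\xi$ are exactly the two Frobenius series of $s_\varrho(\kappa)$ (\ref{lindep}), which is identically zero; since their indicial powers differ by a non-integer, each vanishes separately (\ref{sseries}). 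No local-solution-space claim is needed.

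The substantive gap is item~\ref{frobitem3}. Your monodromy heuristic $e^{-8\pi i/\kappa}=(-1)^r$ correctly flags the parity dichotomy, but it does not by itself produce the logarithm or the sharper structural claims. The paper's appendix supplies three concrete steps you are missing. First, for case~\ref{sc2} the Pochhammer contour entwining $x_i,x_{i+1}$ decouples into two simple loops when $r$ is odd, so those $\mathcal{F}_\vartheta^{\scaleobj{0.75}{\bullet}}$ contribute \emph{only} to the $B$-series --- no $A_m$ or $C_m$ at all. Second, for case~\ref{sc3} the integral $I_{i-1}$ is rewritten via the identity (\ref{result}), whose prefactor $[\sin\pi(\beta_i+\beta_{i+1})]^{-1}$ has a simple pole at $r$ odd; l'H\^opital gives (\ref{Ii}), and the $\partial_\varkappa I_i$ term (\ref{partialI}) is where $\log\xi$ actually enters. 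Third, the equivalence $A_0=0\Leftrightarrow C_0=0$ is not a generic Fuchsian fact: the paper computes both explicitly ((\ref{A0}) and (\ref{replace2})) and shows each vanishes iff the \emph{same} linear combinations $\sum_{\chi(\varrho)=\vartheta}a_\varrho$ vanish, by linear independence of $\mathcal{B}_N^{\scaleobj{0.75}{\bullet}}$ and $\mathcal{B}_{N-1}^{\scaleobj{0.75}{\bullet}}$. Likewise, that the log-coefficient series lies in $\mathcal{S}_N$ follows from the concrete identification (\ref{replace2}) as a combination of case-\ref{sc2} elements $\mathcal{F}_{\chi(\varrho)}^{\scaleobj{0.75}{\bullet}}\in\mathcal{B}_N^{\scaleobj{0.75}{\bullet}}$, not from an abstract ``standard Fuchsian fact'' for PDE systems.
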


\begin{proof}  
Before we prove the theorem, we note that in items \ref{frobitem2} and \ref{frobitem3}, the difference of the indicial powers in (\ref{stillnolog}, \ref{log}) is $2/\kappa-(1-6/\kappa)=r-1\in\mathbb{Z}^+.$  Therefore, we truncate the first series in (\ref{stillnolog}, \ref{log}) at $m=r-2$ and include its tail with the second series.

First, we prove item \ref{frobitem1}.  The discussion preceding this theorem and leading to (\ref{expandF}) proves that if $8/\kappa\not\in\mathbb{Z}^+$, then every element of $\mathcal{B}_N$ admits the expansion (\ref{nolog}).  If $\kappa$ is not an exceptional speed (\ref{exceptional}) with $q\leq N+1$, then $\mathcal{B}_N$ is a basis for $\mathcal{S}_N$ according to item \ref{firstitem} of theorem \ref{maintheorem}, so every element of $\mathcal{S}_N$ admits the expansion (\ref{nolog}) too.  However, if $\kappa$ is such a speed, then $\mathcal{B}_N$ is not a basis for $\mathcal{S}_N$, so this conclusion does not immediately follow.  In this case, the elements in $\mathcal{B}_N$ satisfy exactly $d$ different linear dependencies, and we write each as $s_\varsigma(\kappa)=0$, where
\be\label{lindep}s_\varsigma(\varkappa):=\sum_{\vartheta=1}^{C_N} a_{\varsigma,\vartheta}\mathcal{F}_\vartheta(\varkappa),\quad \varsigma\in\{1,2,\ldots,d\},\ee
and $\{\boldsymbol{a}_1,\boldsymbol{a}_2,\ldots,\boldsymbol{a}_d\}$ with $\boldsymbol{a}_\varsigma:=(a_{\varsigma,1},a_{\varsigma,2},\ldots,a_{\varsigma,C_N})$ is a basis for $\ker\,(M_N\circ n)(\kappa)$.  With $A$ any $C_N\times C_N$ invertible matrix whose first $d$ columns are $\boldsymbol{a}_1$, $\boldsymbol{a}_2,\ldots,\boldsymbol{a}_d$, the proof of theorem \red{8} in \cite{florkleb3} shows that the linearly independent set
\begin{multline}\label{BNprime}\mathcal{B}_N^{\scaleobj{0.75}{\bullet}}(\varkappa):=\left\{(\varkappa-\kappa)^{-1}\sideset{}{_\vartheta}\sum a_{1,\vartheta}\mathcal{F}_\vartheta(\varkappa),\quad(\varkappa-\kappa)^{-1}\sideset{}{_\vartheta}\sum a_{2,\vartheta}\mathcal{F}_\vartheta(\varkappa),\quad\ldots\right.\\
\left.\ldots,\quad(\varkappa-\kappa)^{-1}\sideset{}{_\vartheta}\sum a_{d,\vartheta}\mathcal{F}_\vartheta(\varkappa),\quad\sideset{}{_\vartheta}\sum a_{d+1,\vartheta}\mathcal{F}_\vartheta(\varkappa),\quad\ldots,\quad\sideset{}{_\vartheta}\sum a_{C_N,\vartheta}\mathcal{F}_\vartheta(\varkappa)\right\}\end{multline}
goes to a basis for $\mathcal{S}_N(\kappa)$ as $\varkappa\rightarrow\kappa$.  Because the last $C_N-d$ elements of this set (\ref{BNprime}) are in the span of $\mathcal{B}_N$, each admits the expansion (\ref{nolog}).  To show that the first $d$ elements of (\ref{BNprime}) have this expansion too, we examine the limit of
\be\label{Pidecomp}\mathcal{F}^{\scaleobj{0.75}{\bullet}}_\varrho(\varkappa):=(\varkappa-\kappa)^{-1}\sideset{}{_\vartheta}\sum a_{\varrho,\vartheta}\mathcal{F}_\vartheta(\varkappa),\quad \varrho\leq d\ee
as $\varkappa\rightarrow\kappa$.  According to the proof of theorem \red{8} in \cite{florkleb3}, the sum on the right side of (\ref{Pidecomp}) equals $a_\varrho(\varkappa-\kappa)+O((\varkappa-\kappa)^2)$ for some nonzero constant $a_\varrho$.  Therefore,
\be\label{Laurent}\mathcal{F}^{\scaleobj{0.75}{\bullet}}_\varrho(\varkappa)=
\partial_\varkappa\left[\sideset{}{_\vartheta}\sum a_{\varrho,\vartheta}\mathcal{F}_\vartheta(\varkappa)\right]_{\varkappa=\kappa}+\,\,\,O(\varkappa-\kappa),\quad \varrho\leq d.\ee
Next, we insert the Frobenius series expansion (\ref{nolog}) for each $\mathcal{F}_\vartheta(\varkappa)$ into (\ref{Laurent}), denoting its expansion coefficients as $A_{\vartheta,m}(\varkappa)$ and $B_{\vartheta,m}(\varkappa)$.  Suppressing dependence on the points in $\{x_j\}_{j\neq i,i+1}$, we find
\be\label{Laurentsubs}\begin{aligned}\mathcal{F}^{\scaleobj{0.75}{\bullet}}_\varrho(\varkappa\,|\,x_i,x_{i+1})=\hspace{.1cm}
&(x_{i+1}-x_i)^{1-6/\kappa}\sideset{}{_{\vartheta,m}}\sum a_{\varrho,\vartheta}\partial_\kappa A_{\vartheta,m}(\kappa)(x_{i+1}-x_i)^m\\
+\hspace{.1cm}&(x_{i+1}-x_i)^{2/\kappa\hphantom{-1}}\sideset{}{_{\vartheta,m}}\sum a_{\varrho,\vartheta}\partial_\kappa B_{\vartheta,m}(\kappa)(x_{i+1}-x_i)^m\\
+\hspace{.1cm}&\frac{6}{\kappa^2}\log(x_{i+1}-x_i)(x_{i+1}-x_i)^{1-6/\kappa}\sideset{}{_{\vartheta,m}}\sum a_{\varrho,\vartheta} A_{\vartheta,m}(\kappa)(x_{i+1}-x_i)^m\\
-\hspace{.1cm}&\frac{2}{\kappa^2}\log(x_{i+1}-x_i)(x_{i+1}-x_i)^{2/\kappa\hphantom{-1}}\sideset{}{_{\vartheta,m}}\sum a_{\varrho,\vartheta} B_{\vartheta,m}(\kappa)(x_{i+1}-x_i)^m+O(\varkappa-\kappa).\end{aligned}\ee
Now, the two Frobenius series that appear in the expansion (\ref{nolog}) for $s_\varrho(\varkappa)$ (\ref{lindep}) match the two series that multiply the logarithms in (\ref{Laurentsubs}) at $\varkappa=\kappa$.  With $s_\varrho(\kappa)=0$, we thus find
\be\label{sseries}(x_{i+1}-x_i)^{1-6/\kappa}\sideset{}{_{\vartheta,m}}\sum a_{\varrho,\vartheta} A_{\vartheta,m}(\kappa)(x_{i+1}-x_i)^m+(x_{i+1}-x_i)^{2/\kappa}\sideset{}{_{\vartheta,m}}\sum a_{\varrho,\vartheta} B_{\vartheta,m}(\kappa)(x_{i+1}-x_i)^m=0,\ee
and with $2/\kappa-(1-6/\kappa)\not\in\mathbb{Z}$, (\ref{sseries}) implies that each series that multiplies a logarithm in (\ref{Laurentsubs}) must vanish.  Hence, all elements of the basis $\mathcal{B}_N^{\scaleobj{0.75}{\bullet}}(\kappa)$, and therefore of $\mathcal{S}_N(\kappa)$, admit the expansion (\ref{nolog}).  Furthermore, the analysis that precedes lemma \red{3} in \cite{florkleb} shows that the null-state PDEs (\ref{nullstate}) centered on $x_i$ and $x_{i+1}$ fix the indicial powers.  Because these powers do not differ by an integer, it follows that if $A_0=0$ (resp.\ $B_0=0$), then $A_m$ (resp.\ $B_m=0$) for all $m\in\mathbb{Z}^+\cup\{0\}$.

Next, we prove item \ref{frobitem2}.   From the formula (\red{35}) of \cite{florkleb3} (with $c\in\{1,2,\ldots2N\}$), we see that every element of the basis $\mathcal{B}_N$, and therefore of $\mathcal{S}_N$, admits the expansion (\ref{stillnolog}).  And again, the analysis that preceded lemma \red{3} in \cite{florkleb} shows that the null-state PDEs (\ref{nullstate}) centered on $x_i$ and $x_{i+1}$ fix the indicial powers.  Because these powers differ by $r-1\in\mathbb{Z}^+$, it follows that if $A_0=0$, then $A_m=0$ for all $m\in\{0,1,\ldots,r-2\}$, and if $A_0=B_0=0$, then $B_m=0$ for all $m\in\mathbb{Z}^+$.  In this latter case, $F$ vanishes if $0<x_{i+1}-x_i<R$.  As a result, the coefficients of its decomposition over $\mathcal{B}_N$ vanish too, and we conclude that $F$ is zero.  (The same reasoning also proves that $F$ is zero if $A_0=B_0=0$ in item \ref{frobitem3}.)

Because the proof of item  \ref{frobitem3} is a bit lengthy, we present it in appendix \ref{appendix}.  Finally, to show that $\partial_iA_0=0$ and that $A_1=0$ if $\kappa\neq4$, we insert the expansions (\ref{nolog}--\ref{log}) into the null-state PDEs centered on $x_i$ and $x_{i+1}$ and repeat the calculation preceding lemma \red{3} in \cite{florkleb}.  That $A_0\in\mathcal{S}_{N-1}$ is an immediate consequence of lemma \red{5} in \cite{florkleb}.
\end{proof}

\begin{table}[t]
\centering
\begin{tabular}{p{2cm}p{2.5cm}p{8cm}l}
Interval & Interval type & Frobenius series expansion in powers of $x_{i+1}-x_i$ & OPE content \\
\hline 
\multirow{3}{*}{$(x_i,x_{i+1})$} &
two-leg & (\ref{nolog}) with $A_m=0$ for all $m\in\mathbb{Z}^+$ & $\psi_1(x_i)\times\psi_1(x_{i+1})\sim\psi_2(x_i)$ \\
&identity & (\ref{nolog}) with $A_m\neq0$ and $B_m=0$ for all $m\in\mathbb{Z}^+$ & $\psi_1(x_i)\times\psi_1(x_{i+1})\sim\boldsymbol{1}$\\
&(neither)& (\ref{nolog}) with $A_m\neq0$ and $B_m\neq0$ for all $m\in\mathbb{Z}^+$
& $\psi_1(x_i)\times\psi_1(x_{i+1})\sim\boldsymbol{1}+\psi_2(x_i)$\\
\end{tabular}
\caption{The forms of Frobenius series expansions of $F\in\mathcal{S}_N$ in powers of $x_{i+1}-x_i$ relative to interval types for $(x_i,x_{i+1})$, assuming $\kappa\in(0,8)$ and $8/\kappa\not\in\mathbb{Z}^+$.  The right column shows the corresponding content of the OPE $\psi_1(x_i)\times\psi_1(x_{i+1})$.}
\label{FrobTable}
\end{table}

Theorem \ref{frobseriescor} complements the interpretation of various solutions $F\in\mathcal{S}_N$ of the system (\ref{nullstate}, \ref{wardid}) as CFT correlation functions of $2N$ one-leg boundary operators $\psi_1$: 
\be\label{2Npoint}F(\boldsymbol{x})=\langle\psi_1(x_1)\psi_1(x_2)\dotsm\psi_1(x_{2N})\rangle.\ee
In CFT, one assumes the existence of an OPE between the adjacent primary operators $\psi_1(x_i)$ and $\psi_1(x_{i+1})$ \cite{bpz, fms, henkel}, and except in logarithmic cases, the position of $\psi_1$ in the Kac table limits the content of this OPE to conformal families of just two other primary operators, the identity operator $\mathbf{1}$ and the two-leg boundary operator $\psi_2(x_i)$ \cite{florkleb,bpz, fms, henkel}.  After we insert this OPE into the correlation function (\ref{2Npoint}), we discover that the correlation function admits precisely the Frobenius series expansion described in theorem \ref{frobseriescor}.  In particular, the indicial powers stated in theorem \ref{frobseriescor} follow from the conformal weights of the one-leg boundary operators and the two primary operators in their OPE \cite{bpz, fms, henkel} thus:
\be\label{indicialpowers} 
\psi_1\times\psi_1\sim\begin{cases}\boldsymbol{1}: & \text{indicial power $=-2\theta_1+\theta_0=1-6/\kappa$} \\
\psi_2: & \text{indicial power $=-2\theta_1+\theta_2=2/\kappa$}\end{cases}.
\ee
(See (\red{6}) in \cite{florkleb2} for a formula for the conformal weight $\theta_s$ of the $s$-leg boundary operator $\psi_s$ in terms of $\kappa$.)  Thus, theorem \ref{frobseriescor} rigorously confirms part of the OPE  assumed in CFT.

The claims of theorem \ref{frobseriescor} that $A_0\in\mathcal{S}_{N-1}$, $\partial_iA_0=0$, and (if $\kappa\neq4$) $A_1=0$ have CFT interpretations too.  The first implies that the conformal family of the first Frobenius series belongs to the identity operator.  The second implies that the identity operator is non-local.  And the third implies that the level-one descendant of the identity operator vanishes, which CFT assumes even if $\kappa=4$ ($c=1$).  In particular, if $\kappa=4$ so there is no $A_1$ in (\ref{stillnolog}), then thanks to this vanishing descendant, only the two-leg family contributes to $B_0$ in (\ref{stillnolog}).  Thus, if $B_0\neq0$ (resp.\ $B_0=0$), then the two-leg family is (resp.\ is not) present in the OPE of the one-leg boundary operators at $x_i$ and $x_{i+1}$.

In definition \red{13} of \cite{florkleb} and definition \ref{cftintervaldefn} of section \ref{purecft} below, we define the terms ``two-leg interval" and ``identity interval" $(x_i,x_{i+1})$ of a solution $F\in\mathcal{S}_N$.  In CFT parlance, these terms are taken to indicate respectively that only the two-leg family or the identity family appear in the OPE of $\psi_1(x_i)$ with $\psi(x_{i+1})$ if $F$ is the $2N$-point function (\ref{2Npoint}).   Taken together, theorem \ref{frobseriescor} and corollary \ref{frobintervalcor}, stated below, elucidate the connection between our definitions of these terms and their CFT usage.  In particular and without reference to CFT, corollary \ref{frobintervalcor} identifies these terms with particular forms of the series expansions (\ref{nolog}, \ref{stillnolog}, \ref{log}).  Table \ref{FrobTable} summarizes this connection for $8/\kappa\not\in\mathbb{Z}^+$.

Two conformal families, with respective weights $\theta_0=0$ and $\theta_2=8/\kappa-1$, appear in the OPE of two one-leg boundary operators, and logarithmic CFT anticipates the presence of terms with logarithms in the series expansions of theorem \ref{frobseriescor} if these conformal weights differ by an integer.  However, theorem \ref{frobseriescor} shows that such terms appear only if this integer $\theta_2-\theta_0$ is even (i.e., $8/\kappa\in2\mathbb{Z}^++1$, item \ref{frobitem3} of theorem \ref{frobseriescor}) and do not appear if it is odd (i.e., $8/\kappa\in2\mathbb{Z}^+$, item \ref{frobitem2} of theorem \ref{frobseriescor}).  In the former case, three conformal families contribute to the sums in (\ref{log}): the identity family contributes to the first and second sum; the two-leg family contributes to the third sum; and the logarithmic partner to the two-leg family contributes to the second sum \cite{gurarie}.  Although logarithmic terms do not appear for other $\kappa\in(0,8)$ as we bring together just two points among $x_1,$ $x_2,\ldots,x_{2N}$, they may appear for some exceptional speeds (\ref{exceptional}) as we bring together three or more of these points.  Ref.\ \cite{js, gurarie, gurarie2, matrid, rgw, vjs} and references therein give more information about logarithmic CFT.  In particular, \cite{gurarie} studies the case $\kappa=8$ (not included here because we restrict to $\kappa\in(0,8)$), and \cite{gurarie2} considers the case $\kappa=8/3$.

\section{Multiple-SLE$_\kappa$ arc connectivities}\label{xingprob}

In this section, we define and explore special properties of certain elements of $\mathcal{S}_N$ that we call ``connectivity weights."  Afterwards, we conjecture a formula for the probability that the curves of a multiple-SLE$_\kappa$ process join their endpoints together in a particular arc connectivity.

First, we motivate the notion of a connectivity weight by interpreting the behaviors of elements of $\mathcal{S}_N$ as two adjacent points approach each other in terms of multiple SLE$_\kappa$ \cite{dub2, bbk, graham, kl, sakai}.  In the introduction \red{I} of \cite{florkleb}, we note that the multiple-SLE$_\kappa$ process is defined up to an unspecified function called an ``SLE$_\kappa$ partition function."
\begin{defn}\label{partitiondefn} A function $F:\Omega_0\rightarrow\mathbb{R}$ is an \emph{SLE$_\kappa$ partition function} if it satisfies the system (\ref{nullstate}, \ref{wardid}) and if $F(\boldsymbol{x})\neq0$ for all $\boldsymbol{x}\in\Omega_0$.
\end{defn}
\noindent
Thanks to the intermediate value theorem, it immediately follows that an SLE$_\kappa$ partition function is either positive-valued or negative-valued.  In this article, we consider only SLE$_\kappa$ partition functions in $\mathcal{S}_N$ (i.e., satisfying (\ref{powerlaw})).

\begin{figure}[t]
\centering
\includegraphics[scale=0.3]{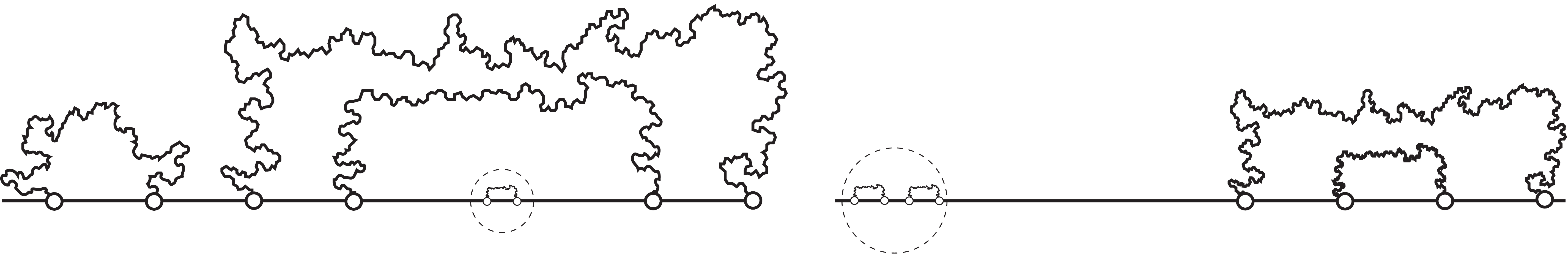}
\caption{With high probability, a boundary arc that shares its endpoints with a very short interval (circled) is microscopic and thus does not influence macroscopic boundary arcs.  The left (resp.\ right) illustration goes with (\ref{Piprefactor}, \ref{Pifactor}) (resp.\ (\ref{Piprevanish}, \ref{Pivanish})).}
\label{Micro}
\end{figure}

In the introduction \red{I} of \cite{florkleb}, we suppose the existence of an SLE$_\kappa$ partition function $\Pi_\vartheta\in\mathcal{S}_N$, called a ``connectivity weight," with the following special property.  Any collection of boundary arcs generated by a multiple-SLE$_\kappa$ process with $\Pi_\vartheta(\boldsymbol{x})$ for its partition function joins the coordinates of $\boldsymbol{x}$ together pairwise in the $\vartheta$th connectivity almost surely.  Supposing that such a partition function exists, we glean information about it from this defining property.  Item \ref{fourthitem} of theorem \ref{maintheorem} implies that in order to determine $\Pi_\vartheta(\boldsymbol{x})$, it suffices to know its asymptotic behavior as we pull the coordinates of $\boldsymbol{x}$ together in various disjoint pairs.  Thus, we take this approach to determining $\Pi_\vartheta$.  For the following discussion, we let $\mathscr{E}_\vartheta$ denote the event that the boundary arcs join the $2N$ coordinates of $\boldsymbol{x}$ in the $\vartheta$th connectivity for some $\vartheta\in\{1,2,\ldots,C_{N-1}\}$.  (Of course, the probability of this event in the multiple-SLE$_\kappa$ process with $\Pi_\vartheta$ for its partition function is one.)  Also in the following discussion, we interpret $\Pi_\vartheta$ as a statistical mechanics partition function (or really, an appropriate ratio of such partition functions) for some critical lattice model (e.g.\ percolation, Potts model, random cluster model), summing exclusively over the event $\mathscr{E}_\vartheta$.  Ref.\ \cite{bbk} explains this interpretation.

To begin, we suppose that the points $x_i$ and $x_{i+1}$ are endpoints of a common arc in the $\vartheta$th connectivity.  If $x_{i+1}-x_i\ll |x_j-x_i|$ for all $j\not\in\{i,i+1\}$, then with high probability, the boundary arc connecting $x_i$ with $x_{i+1}$ in the event $\mathscr{E}_\vartheta$ is very small (figure \ref{Micro}).  As such, it hardly influences the statistics of the other larger arcs in the system, so we expect that $\Pi_\vartheta$, interpreted as a statistical mechanics partition function, factors as
\be\label{Piprefactor}\Pi_\vartheta(\boldsymbol{x})\underset{x_{i+1}\rightarrow x_i}{\sim}\overbrace{\left(\parbox{2.12in}{connectivity weight for microscopic system with only one boundary arc}\right)}^{\in\mathcal{S}_1}\times\overbrace{\left(\parbox{2.12in}{connectivity weight for macroscopic system with $N-1$ boundary arcs}\right)}^{\in\mathcal{S}_{N-1}}.\ee
According to (\red{16}) of \cite{florkleb}, the first factor in (\ref{Piprefactor}), belonging to $\mathcal{S}_1$, equals $(x_{i+1}-x_i)^{1-6/\kappa}$ multiplied by a constant that we  set to one for convenience.  Also, we enumerate the $C_{N-1}$ arc connectivities on the points in $\{x_j\}_{j\neq i,i+1}$ such that the $\vartheta$th connectivity, now with only $N-1$ arcs, follows from dropping the microscopic curve with endpoints at $x_i$ and $x_{i+1}$ from every sample in $\mathscr{E}_\vartheta$.  With $\Xi_\vartheta\in\mathcal{S}_{N-1}$ denoting its corresponding connectivity weight, (\ref{Piprefactor}) becomes
\be\label{Pifactor}\Pi_\vartheta(\boldsymbol{x})\underset{x_{i+1}\rightarrow x_i}{\sim}(x_{i+1}-x_i)^{1-6/\kappa}(\Xi_\vartheta\circ\pi_{i,i+1})(\boldsymbol{x})\quad\text{if a boundary arc joins $x_i$ with $x_{i+1}$ in the $\vartheta$th connectivity.}\ee
In other words, if $\bar{\ell}_1$ (\ref{lim}) sending $x_{i+1}\rightarrow x_i$ is the first limit of some selected element of $[\mathscr{L}_\vartheta]$, then $\bar{\ell}_1\Pi_\vartheta=\Xi_\vartheta$.

Next, we suppose that the points $x_j$ and $x_{j+1}$ are not endpoints of a common arc in the $\vartheta$th connectivity.  For convenience and without loss of generality, we momentarily assume that $j=2$, an arc joins $x_1$ with $x_2$ in this particular connectivity, and a different arc joins $x_3$ with $x_4$ as well.  If $x_4-x_1\ll x_k-x_1$ for all $k\not\in\{1,2,3,4\}$, then with high probability, the two boundary arcs attached to $x_1$, $x_2$, $x_3$, and $x_4$ are very small (figure \ref{Micro}).  As such, they hardly influence the statistics of the other arcs in the system, so we expect that $\Pi_\vartheta$ factors as
\begin{multline}\label{Piprevanish}\Pi_\vartheta(\boldsymbol{x})\underset{\substack{|x_4-x_1|\ll x_5,\\ \quad x_6,\ldots,x_{2N}}}{\sim}\overbrace{\left(\parbox{3.53in}{connectivity weight for microscopic system with one boundary arc joining $x_1$ with $x_2$ and another joining $x_3$ with $x_4$}\right)}^{F\in\mathcal{S}_2}\\
\times\underbrace{\left(\parbox{2.12in}{connectivity weight for macroscopic system with $N-2$ boundary arcs}\right)}_{\in\mathcal{S}_{N-2}}.\end{multline}
According to the previous paragraph, the factor $F\in\mathcal{S}_2$ on the right side of (\ref{Piprevanish}) should asymptotically approach the product of two connectivity weights in $\mathcal{S}_1$, one for an arc joining $x_1$ with $x_2$ and another for an arc joining $x_3$ with $x_4$, as $x_2\rightarrow x_1$.  Moreover, $F$ should not asymptotically approach a different product of two connectivity weights in $\mathcal{S}_1$, one for an arc joining $x_3$ with $x_2$ and another for an arc joining $x_4$ with $x_1$, as $x_3\rightarrow x_2$.  That is,
\be\label{bdycond} F(x_1,x_2,x_3,x_4)\,\,\,\begin{cases}\sim\hphantom{C}(x_4-x_3)^{1-6/\kappa}(x_2-x_1)^{1-6/\kappa}, & x_2\rightarrow x_1 \\ \not\sim C(x_3-x_2)^{1-6/\kappa}(x_4-x_1)^{1-6/\kappa}, & x_3\rightarrow x_2\end{cases},\quad C\in\mathbb{R}.\ee
This boundary condition (\ref{bdycond}) identifies $F$ with a unique element of $\mathcal{S}_2$ whose formula we may find from (\red{17}--\red{19}) in \cite{florkleb}.  It follows from these equations that the first factor $F(x_1,x_2,x_3,x_4)$ in (\ref{Piprevanish}) is $O((x_3-x_2)^{2/\kappa})$ as $x_3\rightarrow x_2$.  Putting everything together in (\ref{Piprevanish}), we expect that (the condition $j=2$ is convenient but not necessary to our present argument, so we do not show it here)
\be\label{Pivanish}\Pi_\vartheta(\boldsymbol{x})\underset{x_{j+1}\rightarrow x_j}{\sim}(x_{j+1}-x_j)^{2/\kappa}(\Lambda_\vartheta\circ\pi_{j+1})(\boldsymbol{x})\quad\text{if no boundary arc joins $x_j$ with $x_{j+1}$ in the $\vartheta$th connectivity}\ee
for some function $\Lambda_\vartheta$.  (We give a partial interpretation of this function beneath (\ref{newxing}).)  In (\ref{Pivanish}), we have ignored the condition $x_4-x_1\ll x_k-x_1$ for all $k>4$ because the power law $(x_3-x_2)^{2/\kappa}$ in (\ref{Pivanish}) (with $j=2$) does not change as $x_4$ moves near the midpoint between $x_3$ and $x_5$.  (Indeed, if it did, then (\ref{nolog}--\ref{log}) implies that $A_0$ would vanish only for $x_4$ close to $x_3$.  But this is impossible because $A_0\in\mathcal{S}_{N-1}$.)  Also, we ignore the condition that the boundary arcs anchored to $x_j$ and $x_{j+1}$ terminate at the points adjacent to them because the behavior (\ref{Pivanish}) of $\Pi_\vartheta(\boldsymbol{x})$ as $x_{j+1}\rightarrow x_j$ should only depend on the fact that no arc joins $x_j$ with $x_{j+1}$ in the $\vartheta$th connectivity.  Hence, we infer from (\ref{Pivanish}) that if $\bar{\ell}_1$ (\ref{lim}) sending $x_{j+1}\rightarrow x_j$ is the first limit of some selected element of $[\mathscr{L}_\varsigma]$, then $\bar{\ell}_1\Pi_\vartheta=0$.  Also, $\varsigma\neq\vartheta$ because $x_j$ and $x_{j+1}$ are not endpoints of a common arc in the $\vartheta$th connectivity but are so in the $\varsigma$th connectivity.

Next, we apply the above analysis again to the connectivity weight $\Xi_\vartheta\in\mathcal{S}_{N-1}$ of (\ref{Pifactor}) and so on.  After $N-1$ repetitions, we eventually discover that $[\mathscr{L}_{\varsigma}]\Pi_\vartheta=\delta_{\varsigma,\vartheta}$.  That is, the $\vartheta$th connectivity weight is dual to $[\mathscr{L}_{\vartheta}]$.  With $\mathscr{B}_N^*:=\{[\mathscr{L}_1],[\mathscr{L}_2],\ldots,[\mathscr{L}_{C_N}]\}$ a basis for $\mathcal{S}_N^*$ thanks to item \ref{fifthitem} of theorem \ref{maintheorem}, this duality relation completely determines the connectivity weights, so we use it as a formal definition for the latter.

\begin{figure}[b]
\centering
\includegraphics[scale=0.3]{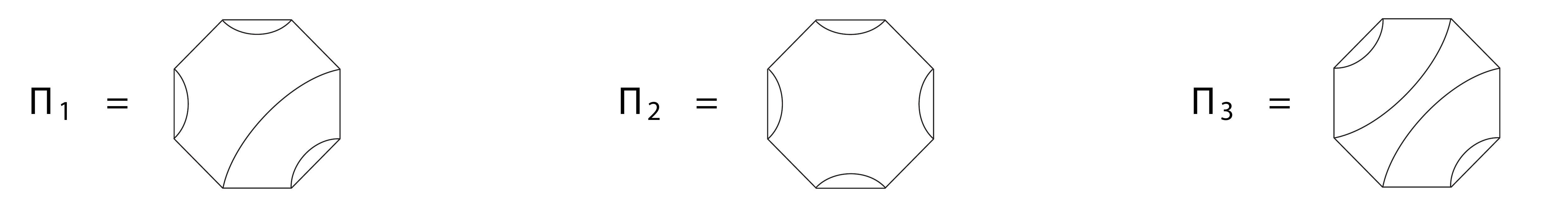}
\caption{Polygon diagrams for three different connectivity weights in $\mathscr{B}_4\subset\mathcal{S}_4$.  We find the other $C_4-3=11$ diagrams by rotating one of these three.  Compare with the polygon diagrams in figure \ref{Csls}.}
\label{PiDiagrams}
\end{figure}

\begin{defn}\label{dualbasis} Supposing that $\kappa\in(0,8)$, we define the \emph{$\varsigma$th connectivity weight} $\Pi_\varsigma$ to be the element of $\mathcal{S}_N$ that is dual to $[\mathscr{L}_{\varsigma}]\in\mathscr{B}_N^*$.  That is
\be\label{duality}\text{$[\mathscr{L}_{\varsigma}]\Pi_\vartheta=\delta_{\varsigma,\vartheta}$ for all $\varsigma,\vartheta\in\{1,2,\ldots,C_N\}$.}\ee
We let $\mathscr{B}_N=\{\Pi_1,\Pi_2,\ldots,\Pi_{C_N}\}$ be the basis for $\mathcal{S}_N$ dual to the basis $\mathscr{B}_N^*=\{[\mathscr{L}_1],[\mathscr{L}_2],\ldots,[\mathscr{L}_{C_N}]\}$ for $\mathcal{S}_N^*$. Finally, we define the \emph{polygon} (resp.\ \emph{half-plane}) diagram for $\Pi_\varsigma\in\mathscr{B}_N$ to be the polygon (resp.\ half-plane) diagram for $[\mathscr{L}_\varsigma]\in\mathscr{B}_N^*$, and we refer to either diagram simply as the \emph{diagram for $\Pi_\varsigma$} (figure \ref{PiDiagrams}).
\end{defn}
\noindent
With this definition, every element $F\in\mathcal{S}_N$ has the following useful decomposition over $\mathscr{B}_N$:
\be\label{decompose}F=a_1\Pi_1+a_2\Pi_2+\dotsm+a_{C_N}\Pi_{C_N},\quad a_{\varsigma}=[\mathscr{L}_\varsigma]F.\ee

Using the formulas (\ref{Fexplicit}) for the elements of $\mathcal{B}_N$, we may calculate explicit formulas for the connectivity weights.  Indeed, according to (\ref{decompose}), the $\varsigma$th coefficient $a_{\vartheta,\varsigma}$ in the decomposition $\mathcal{F}_\vartheta=a_{\vartheta,1}\Pi_1+a_{\vartheta,2}\Pi_2+\dotsm+a_{\vartheta,C_N}\Pi_{C_N}$ is given by (\ref{LkFk}), the $(\varsigma,\vartheta)$th entry of the meander matrix $M_N\circ n$.  Thus, we have
\be\label{F=MPi}\left(\begin{array}{l}\mathcal{F}_1\\\mathcal{F}_2\\\,\,\vdots\\ \mathcal{F}_{C_N}\end{array}\right)=(M_N\circ n)\left(\begin{array}{l}\Pi_1\\ \Pi_2 \\ \,\,\vdots\\ \Pi_{C_N}\end{array}\right).\ee
The formulas for the connectivity weights that follow from (\ref{F=MPi}) are very complicated in general. However, if $N$ is small, then it is often possible to construct simpler formulas by choosing integration contours prudently in (\ref{Fexplicit}).  We use this approach to construct simpler connectivity weight formulas for $N\in\{1,2,3,4\}$ in \cite{fsk}.

In \cite{florkleb3}, we note that $(M_N\circ n)(\kappa)$ is invertible if and only if $\kappa$ is not an exceptional speed (\ref{exceptional}) with $q\leq N+1$.  Hence, if $\kappa$ is such a speed, then we cannot use (\ref{F=MPi}) to calculate the connectivity weights of $\mathscr{B}_N$ explicitly.  However, we may decompose $\Pi_\varsigma$ over the alternative basis $\mathcal{B}_N^{\scaleobj{0.75}{\bullet}}$ used in the proof of theorem \ref{maintheorem} (i.e., theorem \red{8} in \cite{florkleb3}) by replacing  $\mathcal{F}_\vartheta$ with $\mathcal{F}_\vartheta^{\scaleobj{0.75}{\bullet}}$ and $M_N$ with $M_N^{{\scaleobj{0.75}{\bullet}}}$ in (\ref{F=MPi}), where these ``superscript-bullet" quantities are defined in that proof.  Because the elements of $(M_N^{{\scaleobj{0.75}{\bullet}}})^{-1}\circ n$ are continuous functions of $\varkappa\in(\kappa-\epsilon,\kappa+\epsilon)$ for some $\epsilon>0$, we may decompose $\Pi_\varsigma$ over $\mathcal{B}_N^{\scaleobj{0.75}{\bullet}}$ for all $\kappa$ in this interval to show that the limit of $\Pi_\varsigma(\varkappa)$ as $\varkappa\rightarrow\kappa$ exists and equals $\Pi_\varsigma(\kappa)$.  

The previous paragraph shows that we may alternatively invert (\ref{F=MPi}) with $\varkappa=\kappa+\epsilon$ and then send $\epsilon\rightarrow0$ to find a formula for $\Pi_\varsigma(\kappa)$ as a limit of a linear combination of elements of $\mathcal{B}_N$ as $\varkappa\rightarrow\kappa$.  This might seem advantageous because, unlike the elements of $\mathcal{B}_N^{{\scaleobj{0.75}{\bullet}}}$, we already have explicit formulas for those of the former set.  However, there are many quantities in this linear combination that diverge as $\varkappa\rightarrow\kappa$ and therefore must cancel each other in this limit, making this  definition for $\Pi_\varsigma(\kappa)$ too unwieldy for explicit calculations.

We stress that our definition \ref{dualbasis} of a connectivity weight is purely formal.  Whether or not these functions are indeed the desired multiple-SLE$_\kappa$ partition functions (and whether such functions are unique) remains to be proven.  To support the arguments that led to (\ref{duality}) in the first place, we at least prove that conditions (\ref{Pifactor}, \ref{Pivanish}) are satisfied.
\begin{theorem}\label{xingasymplem}Suppose that $\kappa\in(0,8)$ and $\Pi_\vartheta\in\mathscr{B}_N$ is the $\vartheta$th connectivity weight, and let $i\in\{1,2,\ldots,2N-1\}$.
\begin{enumerate}
\item
\begin{enumerate}
\item\label{1a} If $x_i$ and $x_{i+1}$ are endpoints of a common arc in the diagram for $\Pi_\vartheta$, then
\be \lim_{x_{i+1}\rightarrow x_i}(x_{i+1}-x_i)^{6/\kappa-1}\Pi_\vartheta(x_1,x_2,\ldots,x_{2N})=\Xi_\vartheta(x_1,x_2,\ldots,x_{i-1},x_{i+2},\ldots,x_{2N}),\ee
where $\Xi_\vartheta\in\mathscr{B}_{N-1}$ is the connectivity weight whose diagram matches that of $\Pi_\vartheta$ but with the arc deleted.
\item\label{1b} If $x_1$ and $x_{2N}$ are endpoints of a common arc in the diagram for $\Pi_\vartheta$, then
\be \lim_{R\rightarrow\infty}(2R)^{6/\kappa-1}\Pi_\vartheta(-R,x_2,x_3,\ldots,x_{2N-1},R)=\Xi_\vartheta(x_2,x_3,\ldots,x_{2N-1}),\ee
where $\Xi_\vartheta\in\mathscr{B}_{N-1}$ is the connectivity weight whose diagram matches that of $\Pi_\vartheta$ but with the arc deleted.
\end{enumerate}
\item
\begin{enumerate}
\item\label{2a} If $x_i$ and $x_{i+1}$ are not endpoints of a common arc in the diagram for $\Pi_\vartheta$, then
\be\label{twoleglim}\lim_{x_{i+1}\rightarrow x_i}(x_{i+1}-x_i)^{6/\kappa-1}\Pi_\vartheta(x_1,x_2,\ldots,x_{2N})=0.\ee
\item\label{2b} If $x_1$ and $x_{2N}$ are not endpoints of a common arc in the diagram for $\Pi_\vartheta$, then 
\be \lim_{R\rightarrow\infty}(2R)^{6/\kappa-1}\Pi_\vartheta(-R,x_2,x_3,\ldots,x_{2N-1},R)=0.\ee
\end{enumerate}
\end{enumerate}
\end{theorem}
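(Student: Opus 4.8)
The plan is to recognize that the limits in the theorem are exactly the functional $\bar{\ell}_1$ of (\ref{lim}) and its analog at infinity, and then to evaluate them by pairing against the dual basis $\mathscr{B}_{N-1}^*$ of $\mathcal{S}_{N-1}^*$. For parts \ref{1a} and \ref{2a}, the limit in question is precisely $\bar{\ell}_1\Pi_\vartheta$ with $\bar{\ell}_1$ sending $x_{i+1}\to x_i$; by (\ref{lim}) (equivalently, by reading off the coefficient $A_0$ in the Frobenius expansion of theorem \ref{frobseriescor}, which survives the prefactor $(x_{i+1}-x_i)^{6/\kappa-1}$ because $8/\kappa>1$ forces the $B$- and $C$-series to vanish in the limit) this limit exists and lies in $\mathcal{S}_{N-1}$. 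Since $\mathscr{B}_{N-1}^*$ is a basis for $\mathcal{S}_{N-1}^*$ (item \ref{fifthitem} of theorem \ref{maintheorem} applied at level $N-1$), it suffices to compute $[\mathscr{L}'_\tau](\bar{\ell}_1\Pi_\vartheta)$ for every $\tau\in\{1,2,\ldots,C_{N-1}\}$, where $[\mathscr{L}'_\tau]\in\mathscr{B}_{N-1}^*$.

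The key structural input is that an arc joining the consecutive points $x_i$ and $x_{i+1}$ is innermost, so any functional $[\mathscr{L}_\varsigma]\in\mathscr{B}_N^*$ whose diagram carries such an arc admits a representative in which $\bar{\ell}_1$ is the first limit; collapsing that arc first reduces its diagram to an $(N-1)$-arc diagram, giving the factorization $[\mathscr{L}_\varsigma]=[\mathscr{L}'_\tau]\circ\bar{\ell}_1$. Writing $\sigma$ for the injection that inserts an arc at $(i,i+1)$ into an $(N-1)$-diagram, this reads $[\mathscr{L}'_\tau]\circ\bar{\ell}_1=[\mathscr{L}_{\sigma(\tau)}]$, whence
\be [\mathscr{L}'_\tau](\bar{\ell}_1\Pi_\vartheta)=[\mathscr{L}_{\sigma(\tau)}]\Pi_\vartheta=\delta_{\sigma(\tau),\vartheta}\ee
by the duality (\ref{duality}). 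In case \ref{2a} the diagram for $\Pi_\vartheta$ has no arc at $(i,i+1)$, so $\vartheta\neq\sigma(\tau)$ for every $\tau$ and all these pairings vanish; since $\mathscr{B}_{N-1}^*$ is a basis, $\bar{\ell}_1\Pi_\vartheta=0$, proving (\ref{twoleglim}). In case \ref{1a} the diagram for $\Pi_\vartheta$ does carry an arc at $(i,i+1)$, so $\vartheta=\sigma(\tau^\ast)$ for the unique $\tau^\ast$ obtained by deleting that arc; injectivity of $\sigma$ then gives $[\mathscr{L}'_\tau](\bar{\ell}_1\Pi_\vartheta)=\delta_{\tau,\tau^\ast}$, which by definition \ref{dualbasis} at level $N-1$ identifies $\bar{\ell}_1\Pi_\vartheta$ with the connectivity weight $\Xi_{\tau^\ast}=\Xi_\vartheta$, exactly the claim.

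Parts \ref{1b} and \ref{2b} are the ``outermost-arc'' analogs, in which $x_1$ and $x_{2N}$ are brought together through the point at infinity; in the polygon picture this pair is adjacent, so by the rotational symmetry of the diagrams these cases should mirror \ref{1a} and \ref{2a} verbatim. The route I would take is to exploit the M\"obius covariance encoded in the three Ward identities (\ref{wardid}): a conformal map $\phi$ carrying $\infty$ to a finite point sends $(-R,x_2,\ldots,x_{2N-1},R)$ to a configuration in which two adjacent interior points collide as $R\to\infty$, and the Jacobian factors $\phi'(x_1)^{\theta_1}\phi'(x_{2N})^{\theta_1}$ combine with $(2R)^{6/\kappa-1}$ to reproduce the interior prefactor. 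Equivalently, I would introduce the infinity limit $\bar{\ell}_\infty F:=\lim_{R\to\infty}(2R)^{6/\kappa-1}F(-R,x_2,\ldots,x_{2N-1},R)$, show $\bar{\ell}_\infty F\in\mathcal{S}_{N-1}$, and establish the factorization $[\mathscr{L}'_\tau]\circ\bar{\ell}_\infty=[\mathscr{L}_{\sigma_\infty(\tau)}]$ with $\sigma_\infty$ inserting the outermost arc; the duality argument above then applies unchanged.

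The hard part will be exactly this last factorization for \ref{1b}--\ref{2b}: one must show that the infinity limit may be taken \emph{first} and commutes with the interior limits defining $[\mathscr{L}_\varsigma]$, so that collapsing the outermost arc through infinity genuinely yields the reduced functional $[\mathscr{L}'_\tau]$. For the purely interior limits this interchange is built into the equivalence-class definition of $[\mathscr{L}_\varsigma]$, but extending it to the infinity limit requires transporting the regularity and Frobenius estimates of theorem \ref{frobseriescor} to a neighbourhood of infinity via $\phi$ and verifying that the limit commutes with the remaining collapses uniformly in the other points. Once this conformal transport is in place, parts \ref{1b} and \ref{2b} follow from \ref{1a} and \ref{2a} through the rotation $\vartheta\mapsto\rho(\vartheta)$ that $\phi$ induces on the connectivity-weight basis.
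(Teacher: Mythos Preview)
Your argument for items \ref{1a} and \ref{2a} is correct and is essentially the paper's own proof: the paper also chooses a representative of $[\mathscr{L}_\varsigma]$ with $\bar{\ell}_1$ first, invokes the factorization $[\mathscr{L}_\varsigma]F=[\mathscr{M}_\varsigma]\bar{\ell}_1F$ (your $[\mathscr{L}'_\tau]\circ\bar{\ell}_1=[\mathscr{L}_{\sigma(\tau)}]$, justified there by lemma \red{12} of \cite{florkleb}), and then reads off $\bar{\ell}_1\Pi_\vartheta$ from the duality (\ref{duality}) together with the isomorphism of item \ref{fourthitem} in theorem \ref{maintheorem} at level $N-1$. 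Your detour through the Frobenius coefficient $A_0$ is unnecessary, since existence of $\bar{\ell}_1F$ and $\bar{\ell}_1F\in\mathcal{S}_{N-1}$ are already supplied by lemma \red{5} of \cite{florkleb}.

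Where you diverge from the paper is in your assessment of items \ref{1b} and \ref{2b}. You flag the commutation of the infinity limit with the remaining interior collapses as ``the hard part,'' but the paper treats this as a one-line reduction: it simply invokes the M\"obius transformation used in the proof of lemma \red{5} of \cite{florkleb} and imitates the relevant portion of that proof, so that \ref{1b} (resp.\ \ref{2b}) follows directly from \ref{1a} (resp.\ \ref{2a}). In other words, the conformal transport and the well-definedness of the infinity limit as the first step of an allowable sequence are already established in the companion paper, and no new uniform estimates are needed here.
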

\begin{proof}
If item \ref{1a} (resp.\ \ref{2a}) is true, then we may immediately prove item \ref{1b} (resp.\ \ref{2b}) by using the M\"obius transformation of the proof of lemma \red{5} in \cite{florkleb} and imitating the part of that proof where this transformation is used.  Therefore, it suffices to only prove items \ref{1a} and \ref{2a}.  For this purpose, we choose an $i\in\{1,2,\ldots,2N-1\}$ to use throughout.

First, we let $\mathscr{C}_N^*=\{[\mathscr{L}_1],[\mathscr{L}_2],\ldots,[\mathscr{L}_{C_{N-1}}]\}\subset\mathscr{B}_N^*$ be the subset of all equivalence classes whose diagram has an arc with its endpoints at $x_i$ and $x_{i+1}$ (because this arc fixes the connectivity of its two endpoints, only $2N-2$ points, $N-1$ arcs, and thus $C_{N-1}$ equivalence classes, remain), and we let $\mathscr{C}_N=\{\Pi_1,\Pi_2,\ldots,\Pi_{C_{N-1}}\}\subset\mathscr{B}_N$.  Furthermore, we let the symbol $\mathscr{M}$ stand for an allowable sequence of limits in $\mathcal{S}_{N-1}^*$ involving the points in $\{x_j\}_{j\neq i,i+1}$, and we enumerate the elements of $\mathscr{B}_{N-1}^*=\{[\mathscr{M}_1],[\mathscr{M}_2],\ldots,[\mathscr{M}_{C_{N-1}}]\}$ so the diagram for $[\mathscr{M}_\varsigma]$ is created by removing the arc with endpoints at $x_i$ and $x_{i+1}$ from the diagram for $[\mathscr{L}_\varsigma]\in\mathscr{C}_N^*$.  
Throughout this proof, we choose an element $\mathscr{L}_\varsigma$ of $[\mathscr{L}_\varsigma]\in\mathscr{C}_N^*$ that takes the limit $x_{i+1}\rightarrow x_i$ first.  We formally define this limit $\bar{\ell}_1$ in (\ref{lim}), and we may write $\mathscr{L}_\varsigma=\mathscr{M}_{\varsigma}\bar{\ell}_1$ for some $\mathscr{M}_\varsigma\in[\mathscr{M}_\varsigma]$.  
Because $\bar{\ell}_1F\in\mathcal{S}_{N-1}$ for all $F\in\mathcal{S}_N$ according to lemma \red{5} of \cite{florkleb}, lemma \red{12} of \cite{florkleb} implies that
\be\label{decomp}\text{$[\mathscr{L}_\varsigma]F=[\mathscr{M}_\varsigma]\bar{\ell}_1F$ for all $[\mathscr{L}_\varsigma]\in\mathscr{C}_N^*$ and all $F\in\mathcal{S}_N$.}\ee

We now choose an arbitrary $\Pi_\vartheta\in\mathscr{C}_N$ and prove item \ref{1a} for it.  After inserting $F=\Pi_\vartheta$ in (\ref{decomp}) and invoking the dual relation (\ref{duality}), we find that
\be\label{subdual}\text{$[\mathscr{M}_{\varsigma}]\Xi_{\vartheta}=\delta_{\varsigma,\vartheta}$ for each $\varsigma,\vartheta\in\{1,2,\ldots,C_{N-1}\}$, where $\Xi_{\vartheta}:=\bar{\ell}_1\Pi_\vartheta\in\mathcal{S}_{N-1}$}.\ee  
Because it satisfies the dual relation (\ref{duality}) relative to the elements of $\mathscr{B}_{N-1}^*$, $\Xi_\vartheta:=\bar{\ell}_1\Pi_\vartheta$ is the $\vartheta$th connectivity weight in $\mathscr{B}_{N-1}$.  This proves item \ref{1a}.

Now to finish, we choose an arbitrary $\Pi_\vartheta\in\mathscr{B}_N\setminus\mathscr{C}_N$ and prove item \ref{2a} for it.  If $[\mathscr{L}_\varsigma]\in\mathscr{C}_N^*$, then $\varsigma<\vartheta$, so $[\mathscr{L}_\varsigma]\Pi_\vartheta=0$.  After inserting this fact into (\ref{decomp}) with $F=\Pi_\vartheta$, we find
\be\label{nextdecomp}\text{$0=[\mathscr{L}_\varsigma]\Pi_\vartheta=[\mathscr{M}_{\varsigma}]\Xi_{\vartheta}$ for all $[\mathscr{L}_\varsigma]\in\mathscr{C}_N^*$, where $\Xi_{\vartheta}:=\bar{\ell}_1\Pi_\vartheta\in\mathcal{S}_{N-1}$.}\ee
In other words, $w(\Xi_{\vartheta})=0$, where $w:\mathcal{S}_{N-1}\rightarrow\mathbb{R}^{C_{N-1}}$ is the map whose $\varsigma$th component is $w(F)_\varsigma:=[\mathscr{M}_\varsigma]F$.  According to theorem \ref{maintheorem}, $w$ is a linear bijection.  Therefore, $\bar{\ell}_1\Pi_\vartheta=:\Xi_{\vartheta}=0$.
\end{proof}

It follows from theorem \ref{frobseriescor} that in the case of item \ref{2a}, $\Pi_\vartheta(\boldsymbol{x})=O((x_{i+1}-x_i)^{2/\kappa})$ as $x_{i+1}\rightarrow x_i$, in agreement with what we anticipated in (\ref{Pivanish}).

\begin{figure}[t]
\centering
\includegraphics[scale=0.3]{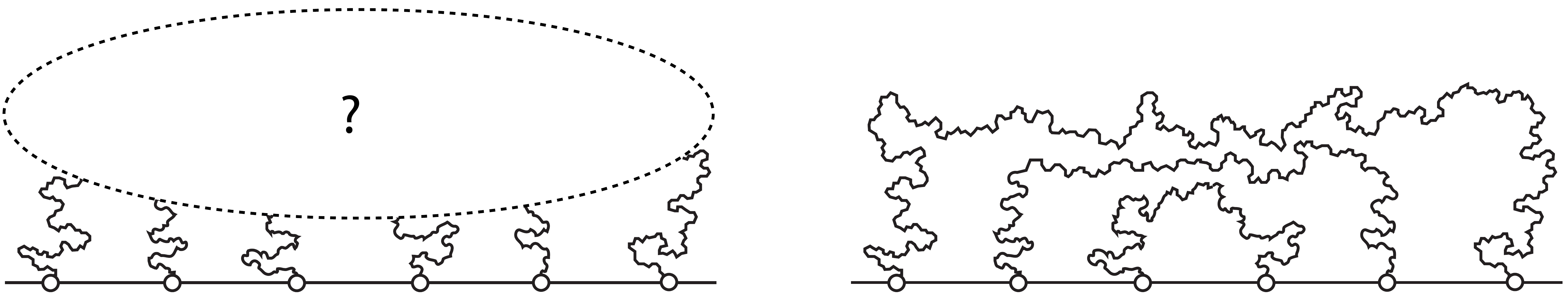}
\caption{Conjecture \ref{connectivityconj} proposes a formula for the probability that the growing curves of a SLE$_\kappa$ process (left) eventually join pairwise in the $\varsigma$th connectivity (right).}
\label{Connect}
\end{figure}

Having verified with theorem \ref{xingasymplem} that the connectivity weights, as defined above, have the desired properties (\ref{Pifactor}, \ref{Pivanish}), we conjecture a formula for the probability that the growing curves of a multiple-SLE$_\kappa$ process with partition function $F\in\mathcal{S}_N$ eventually join pairwise in the $\varsigma$th arc connectivity (figure \ref{Connect}).
\begin{conj}\label{connectivityconj}Suppose that $\kappa\in(0,8)$, and consider a multiple-SLE$_\kappa$ process that grows $2N$ curves in the upper half-plane from the points $x_1<x_2<\ldots<x_{2N}$ with the SLE$_\kappa$ partition function $F\in\mathcal{S}_N$.  Then 
\be\label{xing}
 P_\varsigma(x_1,x_2,\ldots,x_{2N})=[\mathscr{L}_{\varsigma}]F\,\frac{\Pi_\varsigma(x_1,x_2,\ldots,x_{2N})}{F(x_1,x_2,\ldots,x_{2N})},\quad \varsigma\in\{1,2,\ldots,C_N\}
 \ee
gives the ``crossing probability" $P_\varsigma(x_1,x_2,\ldots,x_{2N})$ that these curves eventually join pairwise in the $\varsigma$th connectivity.
\end{conj}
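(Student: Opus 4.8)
The plan is to realize the proposed formula (\ref{xing}) as an optional-stopping identity for a martingale built from the two solutions $\Pi_\varsigma$ and $F$ under the multiple-SLE$_\kappa$ flow. Concretely, I would grow the $2N$ curves according to the multiple-SLE$_\kappa$ process with partition function $F$, let $\boldsymbol{x}_t$ denote the Loewner configuration of the growth points at time $t$, and consider $M_t:=\Pi_\varsigma(\boldsymbol{x}_t)/F(\boldsymbol{x}_t)$. The first step is to show that $M_t$ is a local martingale. Applying It\^o's formula to $\Pi_\varsigma(\boldsymbol{x}_t)/F(\boldsymbol{x}_t)$ and inserting the Loewner SDEs (with drifts proportional to $\kappa\,\partial_j\log F$, i.e.\ the measure Girsanov-weighted by $F$), the drift of $M_t$ reduces to a second-order operator applied to $\Pi_\varsigma$; this drift vanishes identically precisely because both $\Pi_\varsigma$ and $F$ satisfy the null-state PDEs (\ref{nullstate}) and the three Ward identities (\ref{wardid}). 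This is the standard mechanism converting the system (\ref{nullstate}, \ref{wardid}) into driftless SLE$_\kappa$ observables, and it is where the defining PDE membership $\Pi_\varsigma\in\mathcal{S}_N$ enters.

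The second step is to identify the terminal value $M_\infty$. As $t\to\infty$ the curves almost surely close up into a random collection of boundary arcs realizing some connectivity $\eta$, so that $\boldsymbol{x}_t$ degenerates by pulling its coordinates together in the $N$ disjoint pairs prescribed by $\eta$. Iterating theorem \ref{xingasymplem} along this degeneration is the crux: each merge of an adjacent joined pair (item \ref{1a}) multiplies numerator and denominator by the same factor $(x_{i+1}-x_i)^{1-6/\kappa}$, whereas a pair not joined in $\eta$ (item \ref{2a}) would force the numerator to vanish to strictly higher order. Carrying this to the end, the normalizing power laws cancel in the ratio, and since $[\mathscr{L}_\eta]\Pi_\varsigma=\delta_{\varsigma,\eta}$ while $[\mathscr{L}_\eta]F$ is the coefficient of $\Pi_\eta$ in the decomposition (\ref{decompose}) of $F$, one obtains $\lim_t M_t=\delta_{\varsigma,\eta}/[\mathscr{L}_\eta]F$ on the event $\mathscr{E}_\eta$, i.e.\ $M_\infty=\mathbf{1}_{\mathscr{E}_\varsigma}/[\mathscr{L}_\varsigma]F$. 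Taking expectations, using $M_0=\Pi_\varsigma(\boldsymbol{x})/F(\boldsymbol{x})$ and $\mathbb{E}[\mathbf{1}_{\mathscr{E}_\varsigma}]=P_\varsigma$, gives $\Pi_\varsigma/F=P_\varsigma/[\mathscr{L}_\varsigma]F$, which rearranges exactly to (\ref{xing}).

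The hard part will be upgrading the local-martingale property to a genuine optional-stopping identity at the infinite time horizon: one must show that $M_t$ is a uniformly integrable (ideally bounded) martingale converging to $M_\infty$ along the full growth. This demands that the multiple-SLE$_\kappa$ process with an arbitrary $F\in\mathcal{S}_N$ exists, that its curves almost surely terminate in a pairwise connectivity, and that the ratio $\Pi_\varsigma/F$ remains bounded up to and including the degenerate boundary configurations \emph{uniformly}, not merely along a single coordinate merge. The power-law bound (\ref{powerlaw}), the positivity of the partition function $F$ (definition \ref{partitiondefn}), and the iterated asymptotics of theorem \ref{xingasymplem} supply the analytic ingredients, but the almost-sure convergence of the curves and the uniform integrability of $M_t$ require probabilistic input about the multiple-SLE$_\kappa$ measure that lies outside the PDE and linear-algebra framework of \cite{florkleb, florkleb2, florkleb3}. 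This gap is precisely why the statement is posed as a conjecture rather than a theorem; a complete proof would have to draw on the regularity and convergence theory for SLE$_\kappa$ (cf.\ \cite{bbk, dub2, graham, kl, sakai}).
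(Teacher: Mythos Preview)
The paper does not prove this statement; it is explicitly presented as a conjecture (Conjecture~\ref{connectivityconj}) and left unproven. The paper's contribution is to motivate the formula heuristically through the asymptotic analysis (\ref{Pifactor})--(\ref{Pivanish}) that precedes definition~\ref{dualbasis}, to verify the necessary consistency property $P_1+\dotsm+P_{C_N}=1$ via (\ref{xing2}), and to check various limiting behaviors in section~\ref{puresle} (e.g., (\ref{oldxing}), (\ref{Plimit})) that a correct crossing-probability formula must satisfy. No argument resembling a proof is offered.

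Your proposal is therefore not competing with a proof in the paper but sketching one where none exists. The martingale strategy you outline---that $\Pi_\varsigma/F$ evolves as a local martingale under the multiple-SLE$_\kappa$ dynamics because both numerator and denominator solve (\ref{nullstate}, \ref{wardid}), and that its terminal value is identified via the duality $[\mathscr{L}_\eta]\Pi_\varsigma=\delta_{\varsigma,\eta}$---is the standard and correct heuristic, and your use of theorem~\ref{xingasymplem} to read off the terminal value is exactly what that theorem is designed for. You have also correctly located the genuine obstruction: existence of the multiple-SLE$_\kappa$ measure for general $F\in\mathcal{S}_N$, almost-sure termination of the curves in a pairwise connectivity, and uniform integrability of $M_t$ all lie outside the analytic framework of \cite{florkleb,florkleb2,florkleb3}. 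The paper acknowledges this implicitly by pairing the conjecture with the auxiliary conjecture~\ref{assumeconj} and by noting beneath (\ref{dualprob}) that even the positivity of $\Pi_\varsigma$---needed for it to be an admissible partition function at all---is contingent on both conjectures. Your final paragraph accurately diagnoses why the statement remains conjectural.
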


After inserting the decomposition (\ref{decompose}) into the denominator of (\ref{xing}), we find an alternative and more natural form for the conjectured crossing-probability formula
\be\label{xing2}P_\varsigma=\frac{a_\varsigma\Pi_\varsigma}{a_1\Pi_1+a_2\Pi_2+\dotsm+a_{C_N}\Pi_{C_N}},\quad a_\vartheta=[\mathscr{L}_\vartheta]F,\ee
which immediately gives the necessary property $P_1+P_2+\dotsm+P_{C_N}=1$.  In this second form (\ref{xing2}), the formula for $P_\varsigma$ precisely matches the multiple-SLE$_\kappa$ crossing-probability formula originally conjectured in \cite{bauber2}.

If conjecture \ref{connectivityconj} and the following conjecture is true, then the condition $0\leq P_\varsigma\leq1$ implies an important fact concerning the connectivity weights:
\begin{conj}\label{assumeconj} For any $\varsigma\in\{1,2,\ldots,C_N\}$, there is an SLE$_\kappa$ partition function $F\in\mathcal{S}_N$ such that $P_\varsigma(\boldsymbol{x})>0$ for all $\boldsymbol{x}\in\Omega_0$.  (Without loss of generality, we assume that $F$ is positive-valued, so $[\mathscr{L}_\vartheta]F\geq0$ for all $\vartheta\in\{1,2,\ldots,C_N\}$.)
\end{conj}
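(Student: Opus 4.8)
The plan is to recast conjecture \ref{assumeconj} as a pure positivity statement about the connectivity weights and then to attack that statement by induction on $N$. First I would note that an SLE$_\kappa$ partition function $F\in\mathcal{S}_N$ is sign-definite on $\Omega_0$ by definition \ref{partitiondefn} and the remark following it, so we may take $F>0$; moreover each limit $\bar{\ell}_1$ in (\ref{lim}) multiplies $F$ by the strictly positive factor $(x_{i+1}-x_i)^{6/\kappa-1}$ before passing to a limit, so every intermediate function, and hence the scalar $a_\varsigma:=[\mathscr{L}_\varsigma]F$, is nonnegative. This justifies the parenthetical in the statement. Writing the conjectural crossing probability in the form (\ref{xing}) as $P_\varsigma=a_\varsigma\Pi_\varsigma/F$ with $F>0$, we see that $P_\varsigma(\boldsymbol{x})>0$ for all $\boldsymbol{x}\in\Omega_0$ holds if and only if $a_\varsigma\Pi_\varsigma>0$ throughout $\Omega_0$; since $a_\varsigma$ is a constant, this forces $\Pi_\varsigma$ to be nonvanishing, and hence (by the intermediate value theorem) sign-definite, on $\Omega_0$. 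Conversely, if $\Pi_\varsigma$ is sign-definite, then $F:=\pm\Pi_\varsigma$, with the sign chosen so that $F>0$, is itself an SLE$_\kappa$ partition function by definition \ref{partitiondefn}, and the duality relation (\ref{duality}) gives $a_\varsigma=[\mathscr{L}_\varsigma]F=\pm1$ with $a_\vartheta=0$ for $\vartheta\neq\varsigma$, so (\ref{xing2}) yields $P_\varsigma\equiv1>0$. Thus conjecture \ref{assumeconj} for the index $\varsigma$ is \emph{equivalent} to the assertion that $\Pi_\varsigma$ is nonvanishing on $\Omega_0$, and it suffices to prove the latter.

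Second, I would try to establish the nonvanishing of each $\Pi_\varsigma$ by induction on $N$. The base case $N=1$ is immediate, since $\mathscr{B}_1=\{\Pi_1\}$ with $\Pi_1(x_1,x_2)=(x_2-x_1)^{1-6/\kappa}$ by (\red{16}) of \cite{florkleb}, which is strictly positive on $\Omega_0$. For the inductive step, fix a pair $x_i<x_{i+1}$ joined by an arc in the diagram for $\Pi_\varsigma$. Item \ref{1a} of theorem \ref{xingasymplem} gives the factorization $\Pi_\varsigma(\boldsymbol{x})\sim(x_{i+1}-x_i)^{1-6/\kappa}(\Xi_\varsigma\circ\pi_{i,i+1})(\boldsymbol{x})$ as $x_{i+1}\to x_i$, where $\Xi_\varsigma\in\mathscr{B}_{N-1}$ is a connectivity weight and is nonvanishing by the inductive hypothesis; since $(x_{i+1}-x_i)^{1-6/\kappa}>0$, this shows that $\Pi_\varsigma$ has a fixed sign, namely that of $\Xi_\varsigma$, on a neighborhood of the corresponding boundary stratum of $\Omega_0$. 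Running over the adjacent merges and the $R\to\infty$ degeneration of item \ref{1b} of theorem \ref{xingasymplem} pins down the sign of $\Pi_\varsigma$ along an entire collar of the boundary of $\Omega_0$.

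The hard part, and the reason the statement is posed as a conjecture, is to propagate this boundary positivity into the interior of $\Omega_0$, that is, to rule out interior zeros of $\Pi_\varsigma$ and to check that the signs read off at the various boundary strata are mutually consistent. The system (\ref{nullstate}, \ref{wardid}) is not a single second-order elliptic equation with a sign-definite zeroth-order coefficient, and the null-state operators couple the points together, so no maximum principle or barrier argument applies directly. I would therefore pursue the probabilistic route anticipated by the surrounding discussion: realize $\Pi_\varsigma$, up to a positive normalization, as an SLE$_\kappa$ or critical-lattice-model observable that is manifestly nonnegative, for instance a suitable limit of ratios of statistical-mechanics partition functions restricted to configurations realizing the $\varsigma$th connectivity, each such partition function being a sum of strictly positive Boltzmann weights. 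Nonnegativity would then be automatic, and strict positivity would follow from a uniform lower bound, since at least one admissible lattice configuration always exists, together with a tightness and convergence argument transferring the bound to the continuum.

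The main obstacle is precisely this bridge between the algebraically defined solution $\Pi_\varsigma\in\mathcal{S}_N$ and a bona fide probability measure: making it rigorous requires identifying $\Pi_\varsigma$ with the partition function of a multiple-SLE$_\kappa$ process conditioned onto a fixed connectivity and proving the attendant convergence, which is exactly the content left open by conjecture \ref{connectivityconj}. For small $N$ the obstacle can be sidestepped, since the explicit connectivity-weight formulas obtained by inverting (\ref{F=MPi}), and computed for $N\in\{1,2,3,4\}$ in \cite{fsk}, can be checked to be nonvanishing by direct analysis of their hypergeometric representations; this would establish conjecture \ref{assumeconj} in those cases and lend support to the general statement.
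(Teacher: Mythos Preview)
The statement you are asked to prove is labeled a \emph{conjecture} in the paper, and the paper offers no proof of it; rather, the paper uses it together with conjecture \ref{connectivityconj} as a hypothesis to deduce positivity of the connectivity weights (see the paragraph immediately following the statement). Your reduction of conjecture \ref{assumeconj} to the sign-definiteness of $\Pi_\varsigma$ on $\Omega_0$ is correct and is essentially the converse of the paper's own deduction. Note, incidentally, that sign-definiteness already forces $\Pi_\varsigma>0$: were $\Pi_\varsigma<0$ everywhere, then $F:=-\Pi_\varsigma>0$ would yield $[\mathscr{L}_\varsigma]F=-1<0$, contradicting the nonnegativity argument you correctly give in your first paragraph.

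Your honest identification of the obstacle---that the null-state system (\ref{nullstate}, \ref{wardid}) admits no obvious maximum principle, and that a probabilistic identification of $\Pi_\varsigma$ with a manifestly positive observable amounts to establishing conjecture \ref{connectivityconj}---is exactly why the statement remains open in the paper. The inductive boundary argument via theorem \ref{xingasymplem} establishes positivity only near certain boundary strata of $\Omega_0$ and cannot by itself reach the interior. In short, you have not proved the conjecture, but you have correctly diagnosed both its content and the reason it is posed as a conjecture rather than a theorem; there is nothing further in the paper against which to compare.
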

\noindent
Assuming that these two conjectures are true, the positivity of $P_\varsigma$ implies via (\ref{xing}) that $[\mathscr{L}_{\varsigma}]F>0$ (where $F$ is given by conjecture \ref{assumeconj}), so $\Pi_\varsigma$ is positive-valued too.  Hence, each connectivity weight, as formalized in definition \ref{dualbasis}, is an SLE$_\kappa$ partition function (definition \ref{partitiondefn}).  Also, in the multiple-SLE$_\kappa$ process with $\Pi_\vartheta$ for its partition function, we have 
\be\label{dualprob} \left.\begin{array}{l}\text{formula (\ref{xing}) with $F=\Pi_\vartheta$} \\ \text{dual relation $[\mathscr{L}_\varsigma]\Pi_\vartheta=\delta_{\varsigma,\vartheta}$}\end{array}\right\}\quad\Longrightarrow\quad P_\varsigma=\delta_{\varsigma,\vartheta}.\ee
That is, the growing curves of the multiple-SLE$_\kappa$ process with partition function $\Pi_\vartheta$ join pairwise in $\vartheta$th connectivity (as in the diagram for $\Pi_\vartheta$) almost surely.  This property is what informally defines the connectivity weights before their formal definition \ref{dualbasis}.  A proof of conjectures \ref{connectivityconj} and \ref{assumeconj} would thus confirm that these two definitions are equivalent.

It is interesting to consider the multiple-SLE$_\kappa$ process with partition function $\mathcal{F}_\vartheta\in\mathcal{B}_N$.  (Here, we assume that the connectivity weights are indeed positive-valued, as the previous paragraph suggests, and that $n(\kappa)>0$ so $[\mathscr{L}_\varsigma]\mathcal{F}_\vartheta>0$ too (\ref{LkFk}).  Then it follows from its decomposition (\ref{decompose}) over $\mathscr{B}_N$ that $\mathcal{F}_\vartheta$ is an SLE$_\kappa$ partition function.)  Thanks to (\ref{LkFk}), the crossing probability formula (\ref{xing}, \ref{xing2}) with $F=\mathcal{F}_\vartheta$ becomes
\be\label{xing3}P_\varsigma=\frac{n^{l_{\varsigma,\vartheta}}\Pi_\varsigma}{\mathcal{F}_\vartheta}=\frac{n^{l_{\varsigma,\vartheta}}\Pi_\varsigma}{n^{l_{1,\vartheta}}\Pi_1+n^{l_{2,\vartheta}}\Pi_2+\dotsm+n^{l_{C_N,\vartheta}}\Pi_{C_N}}.\ee
In \cite{fkz}, we interpret the $\varrho$th term $n^{l_{\varrho,\vartheta}}\Pi_\varrho$ appearing in the denominator of (\ref{xing3}) as the partition function for a loop-gas model in the upper half-plane, where ``boundary loops" with fugacity $n$ pass into and out of the system through the points $x_1$, $x_2,\ldots,x_{2N}$ and join these points together in the $\varrho$th (resp.\ $\vartheta$th) connectivity in the upper (resp.\ lower) half-plane.  Such partition functions are related to Potts-model and random-cluster-model partition functions \cite{fkz}.

\section{Pure CFT and pure multiple-SLE$_\kappa$ intervals}\label{intervalsect}

In this article and its predecessors \cite{florkleb,florkleb2,florkleb3}, we imagine the point $\boldsymbol{x}=(x_1,x_2,\ldots,x_{2N})$   in the domain $\Omega_0$ (\ref{components}) of $F\in\mathcal{S}_N$ as a collection of adjacent intervals $(x_1,x_2)$, $(x_2,x_3),\ldots,(x_{2N-1},x_{2N})$, and $(x_{2N},x_1)$ (with the last containing infinity) on the real axis.  Typically, some of these intervals have special properties that distinguish them from others  under the lens of either multiple SLE$_\kappa$ or  CFT.  For example, if $F$ is an SLE$_\kappa$ partition function, then an interval may almost surely share its endpoints with a boundary arc after the multiple-SLE$_\kappa$ process completes, or it may not almost surely.  In the former  (resp.\ latter) case, we call the boundary arc (resp.\ pair of distinct boundary arcs) attached to the interval \emph{contractible} (resp.\ \emph{propagating}), and we call the interval itself by the same name.  For another example, if $F\in\mathcal{S}_N$ is the $2N$-point CFT correlation function (\ref{2Npoint}), then as we collapse an interval so the one-leg boundary operators at its endpoints fuse, the resulting OPE may contain only the conformal family of either the identity operator or the two-leg operator.  In the former (resp.\ latter) case, we call the interval an \emph{identity} (resp.\ \emph{two-leg}) \emph{interval} of $F$.  (Actually, we already defined the terms ``identity interval" and ``two-leg interval" in \cite{florkleb}, and in this section, we motivate these definitions.)  A solution with every other interval pure in this CFT sense is called a ``conformal block."  Actually, in CFT, such functions are not necessarily correlation functions like (\ref{2Npoint}).  Rather, they appear as building blocks of such correlation functions.  In general, not every (and indeed possibly no) interval of $F$ must  be one of these mentioned types, but if it is, then we think of it as ``pure" in the multiple-SLE$_\kappa$ sense or in the CFT sense accordingly.

\begin{figure}[t]
\centering
\includegraphics[scale=0.3]{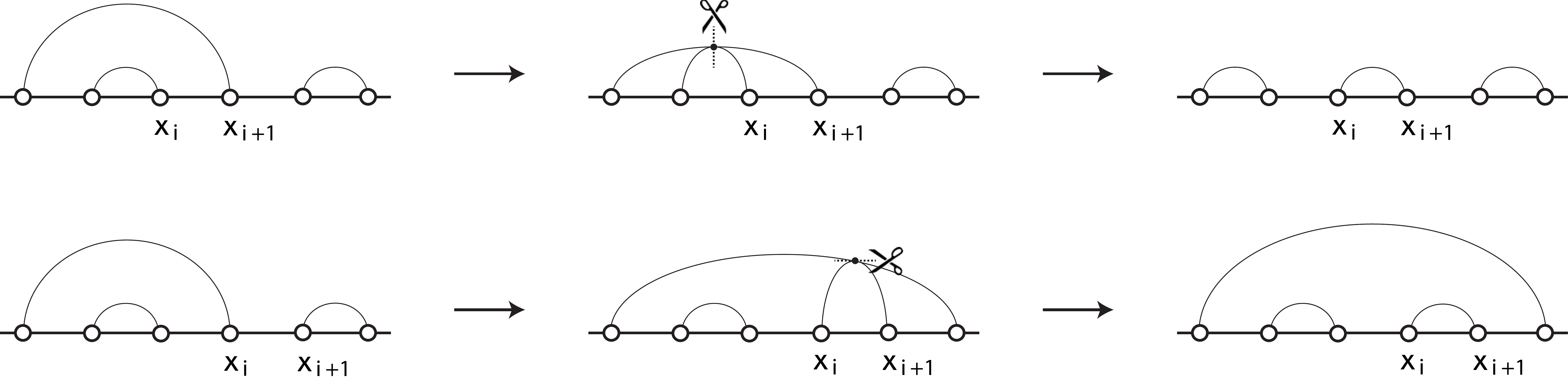}
\caption{Illustrations of the map $\chi$ described in item \ref{cutmap} above definition \ref{sleintervaldefn}.  This map sends the $\varsigma$th connectivity, with $\varsigma\in\{C_{N-1}+1,C_{N-1}+2,\ldots,C_N\}$, to the $\chi(\varsigma)$th connectivity, with $\chi(\varsigma)\in\{1,2,\ldots,C_{N-1}\}$.}
\label{CutMap}
\end{figure}

Throughout this section and appendix \ref{appendix}, we use the following indexing convention and index map $\chi$.
\begin{enumerate}
\item\label{indexorder} We index the $C_N$ available arc connectivities as  in the proof of theorem \ref{xingasymplem} above.  Thus, a unique arc (resp.\ no arc) in the $\vartheta$th arc connectivity has both of its endpoints at $x_i$ and $x_{i+1}$ if $\vartheta\leq C_{N-1}$ (resp.\ $\vartheta>C_{N-1}$).
\item\label{cutmap} We let $\chi:\{C_{N-1}+1,C_{N-1}+2,\ldots,C_N\}\rightarrow\{1,2,\ldots,C_{N-1}\}$ send the $\varsigma$th connectivity to the $\chi(\varsigma)$th connectivity by following this two-step process.  First, we pinch together at a point $p$ the two propagating arcs that share their endpoints with $(x_i,x_{i+1})$ in the $\varsigma$th connectivity diagram.  Then we cut these arcs apart at $p$ and separate them into a contractible arc, sharing its endpoints with $(x_i,x_{i+1})$, and another arc, terminating at the other endpoints of the original two propagating arcs (figures \ref{CutMap}).  The arc connectivity that results is the $\chi(\varsigma)$th connectivity.
\end{enumerate}

\subsection{Pure multiple-SLE$_\kappa$ intervals}\label{puresle}

To begin, we investigate the notion of interval purity in multiple SLE$_\kappa$.  As we previously discussed, we distinguish between two pure interval types: contractible and propagating.

These two interval types are easiest to understand at first in terms of connectivity weights.  Indeed (assuming that conjecture \ref{connectivityconj} is true), formula (\ref{dualprob}) shows that the boundary arcs grown by the multiple-SLE$_\kappa$ process with $\Pi_\varsigma$ for its partition function almost surely join together the endpoints of the intervals in the $\varsigma$th connectivity.  Hence, if a contractible boundary arc (resp.\ an interior arc) shares its endpoints with $(x_i,x_{i+1})$ in the $\varsigma$th connectivity (resp.\ in the half-plane diagram for $\Pi_\varsigma$ (definition \ref{dualbasis})), then this interval must be a contractible interval of $\Pi_\varsigma$.  On the other hand, if a pair of propagating boundary arcs share their (resp.\ no interior arc shares its) endpoints with $(x_i,x_{i+1})$ in the $\varsigma$th connectivity (resp.\ in the half-plane diagram for $\Pi_\varsigma$), then this interval must be a propagating interval of $\Pi_\varsigma$.

More generally, if $F$ is some SLE$_\kappa$ partition function and the probability that a contractible boundary arc shares its endpoints with $(x_i,x_{i+1})$ is either zero or one in the ensuing multiple-SLE$_\kappa$ process, then we may extend these terms to $F$ as well.  We do this in the following definition.
\begin{defn}\label{sleintervaldefn} With $\kappa\in(0,8)$, we select an $F\in\mathcal{S}_N$, and an $i\in\{1,2,\ldots,2N\}$ (below, we identify $i+1=2N+1$ with one).  We say that $(x_i,x_{i+1})$ is a \emph{contractible} (resp.\ \emph{propagating}) \emph{interval of $F\in\mathcal{S}_N\setminus\{0\}$} if, in the decomposition 
\be\label{FdecompPi}F=a_1\Pi_1+a_2\Pi_2+\dotsm+a_{C_N}\Pi_{C_N}\ee
of $F$ over the basis $\mathscr{B}_N$, the half-plane diagram for $\Pi_\varsigma$ has an (resp.\ no) arc with endpoints at $x_i$ and $x_{i+1}$ whenever $a_\varsigma\neq0$.  Also, we say that $(x_i,x_{i+1})$ is a \emph{propagating interval} of the trivial solution $0\in\mathcal{S}_N$.
\end{defn}

Now, we verify that definition \ref{sleintervaldefn} carries the meaning intended by the opening paragraph of section \ref{intervalsect}.  For this purpose, we index the arc connectivities according to item \ref{indexorder} just above this definition, and we use the sets of connectivity weights $\mathscr{C}_N=\{\Pi_1,\Pi_2,\ldots,\Pi_{C_{N-1}}\}\subset\mathscr{B}_N$ and $\mathscr{B}_{N-1}=\{\Xi_1,\Xi_2,\ldots,\Xi_{C_{N-1}}\}$ as defined in the proof of theorem \ref{xingasymplem}.  Thus, the decomposition (\ref{FdecompPi}) of $F$ over $\mathscr{B}_N$ sorts into
\be\label{FdecomposePi}F\,\,=\underbrace{a_1\Pi_1+a_2\Pi_2+\dotsm+a_{C_{N-1}}\Pi_{C_{N-1}}}_{\text{$(x_i,x_{i+1})$ a contractible interval of $\Pi_\varsigma\in\mathscr{C}_N$}}+\,\,\,\underbrace{a_{C_{N-1}+1}\Pi_{C_{N-1}+1}+a_{C_{N-1}+2}\Pi_{C_{N-1}+2}+\dotsm+a_{C_N}\Pi_{C_N}}_{\text{$(x_i,x_{i+1})$ a propagating interval of $\Pi_\varsigma\in\mathscr{B}_N\setminus\mathscr{C}_N$}}\ee
for some real constants $a_1$, $a_2,\ldots,a_{C_N}$.  Then according to definition \ref{sleintervaldefn}, for some SLE$_\kappa$ partition function $F\in\mathcal{S}_N\setminus\{0\}$, $(x_i,x_{i+1})$ is
\begin{enumerate}[I.]
\item\label{intervalit1}  a contractible interval of $F$ if $a_\varsigma=0$ for all $\varsigma>C_{N-1}$ and $a_\varsigma\neq0$ for some $\varsigma\leq C_{N-1}$ in (\ref{FdecomposePi}).  As such, the conjectured formula (\ref{xing2}) shows that the probability of the multiple-SLE$_\kappa$ process with partition function $F(\boldsymbol{x})$ growing a contractible boundary arc that shares its endpoints with $(x_i,x_{i+1})$ is $P_1+P_2+\dotsm+P_{C_{N-1}}=1$.
\item\label{intervalit2} a propagating interval of $F$ if $a_\varsigma=0$ for all $\varsigma\leq C_{N-1}$ and $a_\varsigma\neq0$ for some $\varsigma>C_{N-1}$ in (\ref{FdecomposePi}).   As such, the conjectured formula (\ref{xing2}) shows that the probability of the multiple-SLE$_\kappa$ process with partition function $F(\boldsymbol{x})$ growing a pair of propagating boundary arcs from $x_i$ and $x_{i+1}$ is $P_{C_{N-1}+1}+P_{C_{N-1}+2}+\dotsm+P_{C_N}=1$.
\end{enumerate}
Thus, definition \ref{sleintervaldefn} indeed carries the meaning intended by the opening paragraph of section \ref{intervalsect}.  We note that, while every interval of a connectivity weight is either contractible or propagating, not every interval of each $F\in\mathcal{S}_N$ necessarily falls under one of these two categories.  (Also, although definition \ref{sleintervaldefn} technically extends the meaning of multiple-SLE$_\kappa$ interval purity beyond SLE$_\kappa$ partition functions (definition \ref{partitiondefn}) to all elements of $\mathcal{S}_N$, this extension is really not so useful because only SLE$_\kappa$ partition functions generate the multiple-SLE$_\kappa$ process \cite{bbk}.)

It is interesting to study the effect of shrinking a contractible interval $(x_i,x_{i+1})$ of some SLE$_\kappa$ partition function.  This scenario falls under item \ref{intervalit1} above.  If $x_{i+1}-x_i\ll |x_i-x_j|$ for all $j\not\in\{i,i+1\}$, then with high probability, the contractible boundary arc anchored to $x_i$ and $x_{i+1}$ explores only the area in the upper half-plane near these points, and if $x_{i+1}\rightarrow x_i$, then it contracts to a point.  As such, $P_\varsigma$ with $\varsigma\leq C_{N-1}$ must go to the probability $Q_\varsigma$ of the arc connectivity generated by contracting away this minuscule, isolated boundary arc of the $\varsigma$th connectivity.  Using (\ref{Pifactor}), we see that the conjectured formula for $P_\varsigma$ (\ref{xing2}) has this necessary feature, which further supports conjecture \ref{connectivityconj}:
\be\label{oldxing}P_\varsigma\quad\xrightarrow[x_{i+1}\rightarrow x_i]{}\quad\dfrac{a_\varsigma\Xi_\varsigma}{a_1\Xi_1+a_2\Xi_2+\dotsm+a_{C_{N-1}}\Xi_{C_{N-1}}}=Q_\varsigma,\quad \varsigma\leq C_{N-1}.\ee

Moreover, it is interesting to study the effect of shrinking a propagating interval $(x_i,x_{i+1})$ of some SLE$_\kappa$ partition function.  This scenario falls under item \ref{intervalit2} above.  If $x_{i+1}\rightarrow x_i$, then the pair of propagating boundary arcs anchored to these points join into one arc that touches the real axis at $x_i$, and if we detach that arc from this contact point, then we generate a new connectivity event involving the $2N-2$ remaining points in $\{x_j\}_{j\neq i,i+1}$.  This is in fact the $\chi(\varsigma)$th connectivity event with the contractible arc joining $x_i$ with $x_{i+1}$ removed, where we define the map $\chi$ in item \ref{cutmap} above definition \ref{sleintervaldefn} (figure \ref{CutMap}).  According to (\ref{Pivanish}), the formula for $P_\varsigma$ goes to
\be\label{newxing}P_\varsigma\quad\xrightarrow[x_{i+1}\rightarrow x_i]{}\quad\dfrac{a_\varsigma\Lambda_\varsigma}{a_{C_{N-1}+1}\Lambda_{C_{N-1}+1}+a_{C_{N-1}+2}\Lambda_{C_{N-1}+2}+\dotsm+a_{C_N}\Lambda_{C_N}},\quad \varsigma>C_{N-1},\ee
and we interpret the right side of (\ref{newxing}) as the probability of the $\chi(\varsigma)$th connectivity event involving the points in $\{x_j\}_{j\neq i,i+1}$, conditioned on a particular boundary arc among the $N-1$ available touching the real axis at $x_i$.  Thus, some limits of the crossing-probability formulas (\ref{xing}, \ref{xing2}) may give formulas for crossing probabilities conditioned on boundary visitation events.  Ref.\ \cite{nmk} explicitly constructs formulas for some such probabilities by other means.

\begin{figure}[t]
\centering
\includegraphics[scale=0.35]{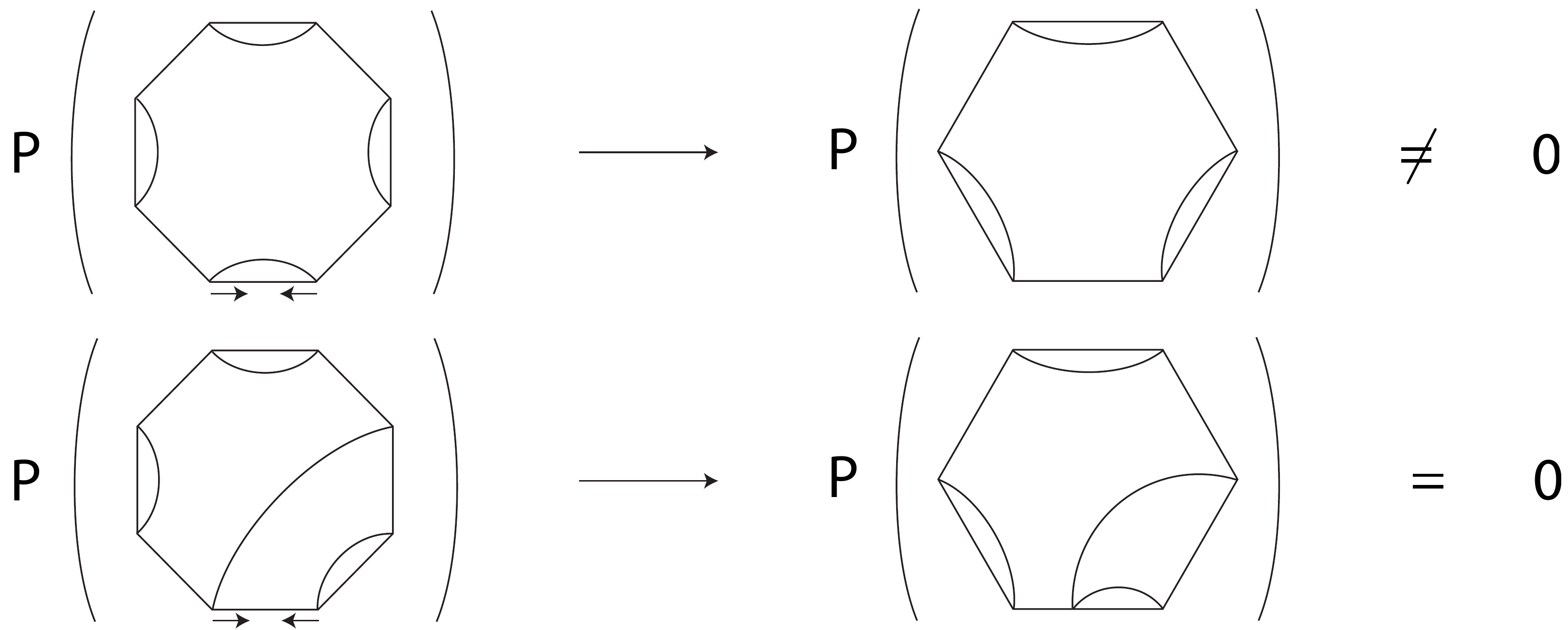}
\caption{If the bottom side of the octagon is not (resp.\ is) a two-leg interval of SLE$_\kappa$ partition function $F\in\mathcal{S}_N$, then the shown octagon $(N=4)$ crossing probability (\ref{xing2}) goes to a hexagon crossing probability (resp.\ zero) as we collapse that side.}
\label{ProbLim}
\end{figure}

Finally, it is interesting to study the effect of shrinking an interval $(x_i,x_{i+1})$ of some SLE$_\kappa$ partition function $F$ that is neither contractible nor propagating.  Such intervals are not pure in the multiple-SLE$_\kappa$ sense described above, but they are combinations of the two possible kinds of pure intervals, contractible and propagating.  As such, they inherit their features from the pure intervals described above, so it is appropriate to discuss them here.  According to definition \ref{sleintervaldefn}, $(x_i,x_{i+1})$ is neither a contractible interval nor a propagating interval of $F$ if $a_\varsigma\neq0$ for some $\varsigma\leq C_{N-1}$ and also for some $\varsigma>C_{N-1}$ in (\ref{FdecomposePi}).  As such, the conjectured formula (\ref{xing2}) shows that the probability of the multiple-SLE$_\kappa$ process with partition function $F(\boldsymbol{x})$ growing a contractible boundary arc that shares its endpoints with (resp.\ growing a pair of propagating boundary arcs from the endpoints of) $(x_i,x_{i+1})$ is $P_1+P_2+\dotsm+P_{C_{N-1}}$ (resp.\ $P_{C_{N-1}+1}+P_{C_{N-1}+2}+\dotsm+P_{C_N}$).  Furthermore, if $x_{i+1}\rightarrow x_i$, then 
\be\label{Plimit}
P_\varsigma\quad\xrightarrow[x_{i+1}\rightarrow x_i]{}\quad\begin{cases}Q_\varsigma=\dfrac{a_\varsigma\Xi_\varsigma}{a_1\Xi_1+a_2\Xi_2+\dotsm+a_{C_{N-1}}\Xi_{C_{N-1}}}, & \varsigma\leq C_{N-1} \\ 0, & \varsigma>C_{N-1}\end{cases}.
\ee
That is, the probability $P_\varsigma$ of the $\varsigma$th connectivity event with $\varsigma\leq C_{N-1}$, where a contractible boundary arc shares its endpoints with $(x_i,x_{i+1})$, approaches the probability $Q_\varsigma<1$ of the connectivity event generated by contracting this boundary arc to a point as $x_{i+1}\rightarrow x_i$.  And also, the probability $P_\varsigma$ of the $\varsigma$th connectivity event with $\varsigma>C_{N-1}$, where a pair of propagating boundary arcs shares endpoints with $(x_i,x_{i+1})$, vanishes as $x_{i+1}\rightarrow x_i$ (figure \ref{ProbLim}).

\begin{figure}[b]
\centering
\includegraphics[scale=0.3]{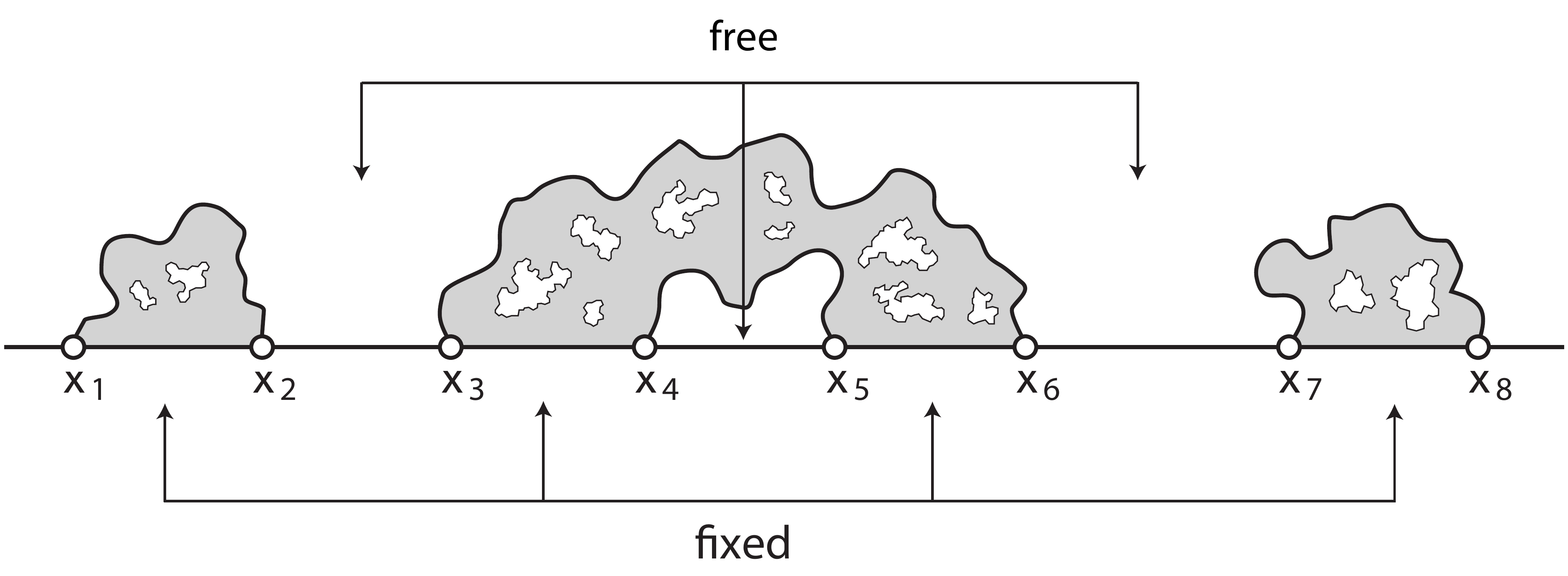}
\caption{Boundary clusters (gray) anchor to fixed intervals, and multiple-SLE$_\kappa$ boundary arcs (thick curves) trace the outer part of these clusters' boundaries.  Boundary cluster connectivities thus correspond one-to-one with boundary arc connectivities.}
\label{FFBCholes}
\end{figure}

The limit (\ref{Plimit}) also has a statistical mechanics interpretation (i.e. for critical percolation, Potts model, random cluster model, etc.).  For this, we choose a partition function $F\in\mathcal{S}_N$ that conditions a critical lattice model in the upper half-plane on a particular side-alternating boundary condition event.  This is an event in which every other interval, say, $\ldots(x_{i-2},x_{i-1}),$ $(x_i,x_{i+1})$, $(x_{i+2},x_{i+3}),\ldots$ exhibits the fixed state while the other intervals exhibit a different state.  In some models, this is the ``free" state (percolation, random cluster model) \cite{car1,car2,salbau}, and in others, it is the fluctuating state (Potts model) \cite{gamcar,fkz}, a distinction that we avoid in our previous article \cite{florkleb}.  (In \cite{fkz}, we identify each element of $\mathcal{B}_N$ with such a partition function.)  Then a (percolation, spin, FK, etc.) ``boundary cluster" anchors to each fixed interval, including $(x_i,x_{i+1})$, and $P_\varsigma$ (\ref{xing}) gives the probability that these clusters join the wired intervals in the $\varsigma$th connectivity (or ``crossing") event \cite{florkleb} (figure \ref{FFBCholes}).  Also, each boundary arc with an endpoint at $x_i$ or $x_{i+1}$ traces part of this cluster's boundary in the upper half-plane.  If $\varsigma\leq C_{N-1}$, then this boundary arc is contractible, and the boundary cluster anchored to $(x_i,x_{i+1})$ is contained within the region between the arc and the interval.  As such, the boundary cluster contracts to a point as $x_{i+1}\rightarrow x_i$, so this limit sends $P_\varsigma$ to the probability $Q_\varsigma$ (\ref{Plimit}) of the crossing event created by removing this lone boundary cluster.  On the other hand, if $\varsigma>C_{N-1}$, then a pair of propagating boundary arcs anchor to $x_i$ and $x_{i+1}$, and the boundary cluster anchored to $(x_i,x_{i+1})$ is contained between them.  As such, the boundary cluster anchored to $(x_i,x_{i+1})$ touches other fixed intervals, and after sending $x_{i+1}\rightarrow x_i$, we find this cluster touching the system boundary exactly at the point $x_i$ within a free boundary segment, an event with probability zero.  Hence, this limit sends $P_\varsigma$ to zero (\ref{Plimit}). 

\subsection{Pure CFT intervals}\label{purecft}

Now we investigate the notion of interval purity in CFT.  As we previously discussed, we distinguish between two different interval types: two-leg versus identity.  (Although CFT inspires these terms, our definitions for them do not use CFT.)  As we will see, there is a straightforward definition for the former that applies to all $\kappa\in(0,8)$.  (In fact, we already gave it as part of definition \red{13} in \cite{florkleb}.)  However, there seems to be a suitable definition for the latter only if $\kappa\in(0,8)$ \emph{and} $8/\kappa\not\in2\mathbb{Z}^++1$.  We approach these definitions in two different ways.

The more straightforward approach is to examine the Frobenius series expansions (\ref{nolog}, \ref{stillnolog}, \ref{log}) of solutions $F\in\mathcal{S}_N\setminus\{0\}$ in powers of $x_{i+1}-x_i$.  To begin, we suppose that $8/\kappa\not\in\mathbb{Z}^+$.  Then as discussed beneath the proof of theorem \ref{frobseriescor} in section \ref{frobsect}, the two sums in (\ref{nolog}) correspond to different conformal families appearing in the OPE of the one-leg boundary operators $\psi_1(x_i)$ and $\psi_1(x_{i+1})$ at the endpoints of $(x_i,x_{i+1})$.  Indeed, the first sum, with its indicial power $1-6/\kappa=-2\theta_1+\theta_0$ (\ref{indicialpowers}), corresponds to the identity family, and the second sum, with its indicial power $2/\kappa=-2\theta_1+\theta_2$ (\ref{indicialpowers}), corresponds to the two-leg family.  As such, if only the first sum vanishes, then the identity family is absent from the OPE of $\psi_1(x_i)$ with $\psi_1(x_{i+1})$, but the two-leg family is present.  Therefore, we call $(x_i,x_{i+1})$ a ``two-leg interval of $F$."  According to item \ref{frobitem1} of theorem \ref{frobseriescor}, this situation arises only if
\be\label{a0vanish}A_0\,\,\,=\lim_{x_{i+1}\rightarrow x_i}(x_{i+1}-x_i)^{6/\kappa-1}F(\boldsymbol{x})=0.\ee
Actually, condition (\ref{a0vanish}) correctly identifies $(x_i,x_{i+1})$ as a two-leg interval of $F$ for $8/\kappa\in\mathbb{Z}^+$ too.  Indeed, in this situation, the indicial powers (\ref{indicialpowers}) differ by an integer, so terms from the identity family mix with terms of the two-leg family to give the second sum in (\ref{stillnolog}, \ref{log}).  But if $A_0=0$, then the first sum in (\ref{stillnolog}, \ref{log}) vanishes, and because only the identity family contributes to it, the identity family must therefore be absent from the OPE of $\psi_1(x_i)$ with $\psi_1(x_{i+1})$.

Now, if $8/\kappa\not\in\mathbb{Z}^+$ and only the second sum in (\ref{nolog}) vanishes, then evidently the two-leg family is absent from the OPE of $\psi_1(x_i)$ with $\psi_1(x_{i+1})$, but the identity family is present.  Therefore, we call $(x_i,x_{i+1})$ an ``identity interval of $F\in\mathcal{S}_N\setminus\{0\}$."  Now, this second sum of (\ref{nolog}) vanishes if the function $(x_{i+1}-x_i)^{6/\kappa-1}F(\boldsymbol{x})$, continuously extended to the part of the boundary of $\Omega_0$ with only the coordinates $x_i$ and $x_{i+1}$ of $\boldsymbol{x}$ equal, is analytic at $x_{i+1}=x_i$.  If it is, then we deem $(x_i,x_{i+1})$ an identity interval of $F$ in item \ref{cftinterval2a} of definition \ref{cftintervaldefn} below.

\begin{figure}[t]
\centering
\includegraphics[scale=0.3]{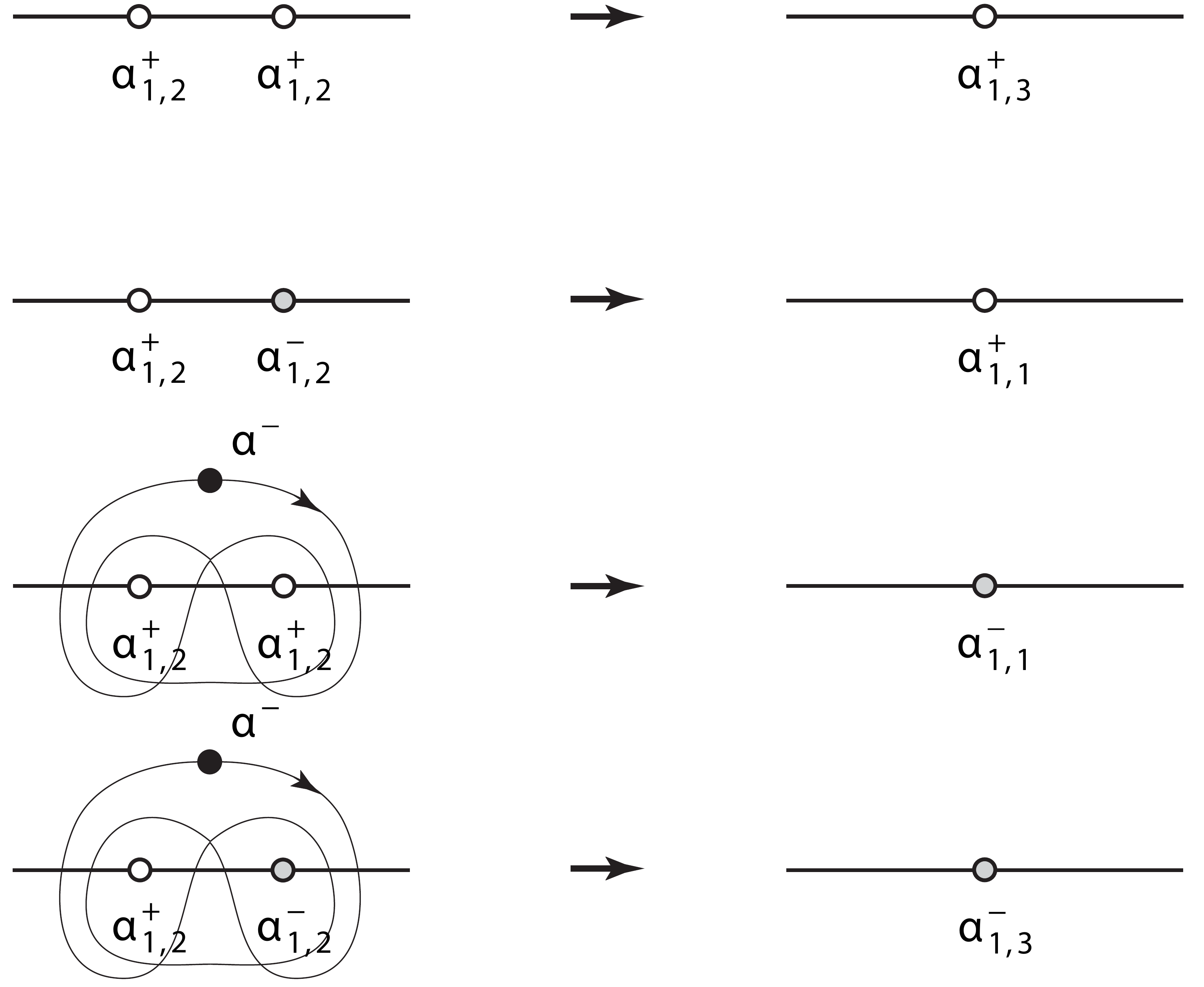}
\caption{Fusion rules for two $V_{1,2}^\pm$ chiral operators, corresponding from top to bottom with (\ref{prodfuse1}--\ref{prodfuse4}).  In the bottom two illustrations, a screening charge entwines these two chiral operators, altering the fusion rules.}
\label{Fusion}
\end{figure}

Unfortunately, this approach to defining an identity interval does not work if $8/\kappa\in\mathbb{Z}^+$.  Indeed, if $8/\kappa\in2\mathbb{Z}^+$, then $(x_{i+1}-x_i)^{6/\kappa-1}F(\boldsymbol{x})$ is always analytic at $x_{i+1}=x_i$ thanks to (\ref{stillnolog}), and if $8/\kappa\in2\mathbb{Z}^++1$ and $A_0\neq0$ (\ref{a0vanish}), then this function is never analytic there thanks to (\ref{nolog}).  To extend the term ``identity interval" at least to $8/\kappa\in2\mathbb{Z}^+$, we re-examine the intended meanings of both this term and the term ``two-leg interval" through the CFT Coulomb gas formalism.  (The next paragraphs assume familiarity with the contents and notations in section \red{II} of \cite{florkleb3}.  Throughout, we use notation for the dense phase $(\kappa>4)$.  In the dilute phase ($\kappa\leq4$), the notation is different, but all formulas and results are the same.  Finally, the  discussion assumes $8/\kappa\not\in2\mathbb{Z}^++1$. We discuss the case $8/\kappa\in2\mathbb{Z}^++1$ in section \ref{8kappaodd} of appendix \ref{appendix}.)  Interpreting a Coulomb gas function $F\in\mathcal{S}_N\setminus\{0\}$ as a correlation function of $2N$ chiral operators and $N-1$ screening operators (as usual, we denote the one point that bears the conjugate charge as $x_c$)
\begin{multline}\label{corrvert}F(x_1,x_2,\ldots,x_{2N})=\langle V_{1,2}^+(x_1)V_{1,2}^+(x_2)\dotsm\\
\dotsm V_{1,2}^+(x_{c-1})V_{1,2}^-(x_c)V_{1,2}^+(x_{c+1})\dotsm V_{1,2}^+(x_{2N-1})V_{1,2}^+(x_{2N})Q_1^-Q_2^-\dotsm Q^-_{N-1}\rangle,\end{multline}
 we investigate how the manner in which a screening operator's integration contour surrounds $x_i$ and $x_{i+1}$ determines what type of pure CFT interval of $F$, if any, the interval $(x_i,x_{i+1})$ is.  Figure \ref{Fusion} shows the four simplest cases, and as we send $x_{i+1}\rightarrow x_i$, their fusion products are as follows (in the same top-to-bottom order as figure \ref{Fusion} shows):
\begin{align}\label{prodfuse1}&i\not\in\{c,c-1\}:&& \alpha_{1,2}^++\alpha_{1,2}^+=\alpha_{1,3}^+&&\Longrightarrow&& V_{1,2}^+(x_i)V_{1,2}^+(x_{i+1})&&\underset{x_{i+1}\rightarrow x_i}{\sim}&&V_{1,3}^+(x_i),\\
\label{prodfuse2}&i\in\{c,c-1\}:&&\alpha_{1,2}^++\alpha_{1,2}^-=\alpha_{1,1}^+&&\Longrightarrow&& V_{1,2}^+(x_i)V_{1,2}^-(x_{i+1})&&\underset{x_{i+1}\rightarrow x_i}{\sim}&&V_{1,1}^+(x_i),\\
\label{prodfuse3}&i\not\in\{c,c-1\}:&&\alpha_{1,2}^++\alpha_{1,2}^++\alpha^-=\alpha_{1,1}^-&&\Longrightarrow&& V_{1,2}^+(x_i)V_{1,2}^+(x_{i+1})Q^-&&\underset{x_{i+1}\rightarrow x_i}{\sim}&&V_{1,1}^-(x_i),\\
\label{prodfuse4}&i\in\{c,c-1\}:&&\alpha_{1,2}^++\alpha_{1,2}^-+\alpha^-=\alpha_{1,3}^-&&\Longrightarrow&& V_{1,2}^+(x_i)V_{1,2}^-(x_{i+1})Q^-&&\underset{x_{i+1}\rightarrow x_i}{\sim}&&V_{1,3}^-(x_i).\end{align}
In first two cases (\ref{prodfuse1}, \ref{prodfuse2}), no integration contour of $F$ surrounds $x_i$ or $x_{i+1}$.  In the last two cases (\ref{prodfuse3}, \ref{prodfuse4}), a single Pochhammer contour entwines $x_i$ with $x_{i+1}$, and sending $x_{i+1}\rightarrow x_i$ contracts this contour to the point $x_i$, draws the screening operator $Q^-$ in with the fusion, and thus adds $\alpha^-$ to the total charge of the fusion product.

Because each of the fusion products of (\ref{prodfuse1}--\ref{prodfuse4}) contains just the conformal family of the chiral operator appearing on the right side of these equations, the interval $(x_i,x_{i+1})$ is pure in the CFT sense. Moreover, the conformal weights of these products are the Kac weights (\ref{hrs}) (again, in the dense phase $\kappa>4$)
\be\label{Kacref} V_{1,3}^\pm\quad\Longleftrightarrow\quad h_{1,3}=8/\kappa-1,\qquad V_{1,1}^\pm\quad\Longleftrightarrow\quad h_{1,1}=0,\ee
which we respectively identify with the boundary two-leg weight $\theta_2$ and identity weight $\theta_0$ \cite{florkleb, florkleb2}.  We infer from (\ref{prodfuse1}) (resp.\ (\ref{prodfuse3})) and (\ref{Kacref}) that $(x_i,x_{i+1})$ is a two-leg (resp.\ an identity) interval of $F$ if no (resp.\ an) integration contour of this function shares its endpoints with this interval and $i\not\in\{c,c-1\}$.  We also infer from (\ref{prodfuse2}) (resp.\ (\ref{prodfuse4})) and (\ref{Kacref}) that $(x_i,x_{i+1})$ is an identity (resp.\ a two-leg) interval of $F$ if no (resp.\ an) integration contour shares both of its endpoints with this interval and $i\in\{c,c-1\}$.  Ref.\ \cite{js} discusses rules (\ref{prodfuse1}--\ref{prodfuse4}) (and more) for percolation $(\kappa=6)$.

Finally, if an integration contour of $F$ entwines points in $\{x_j\}_{j\neq i,i+1}$ in addition to $x_i$ and/or $x_{i+1}$, then whether or not $(x_i,x_{i+1})$ is a pure CFT interval of $F$ is not clear.  This ambiguity arises because such a contour does not contract to a point as we send $x_{i+1}\rightarrow x_i$, so the screening operator that traces it is not completely drawn into the fusion.  For instance, $(x_i,x_{i+1})$ is usually not a pure CFT interval of a case \ref{sc3} or case \ref{sc4} element of $\mathcal{B}_N$ (section \ref{frobsect}).  But if $\kappa$ is an exceptional speed (\ref{exceptional}), then $(x_i,x_{i+1})$ might be such an interval.  The simplest example of this is $\kappa=\kappa_{3,2}=6$, where every element of $\mathcal{B}_N$ equals one (section \ref{n32sect}).  Because all case \ref{sc3} and case \ref{sc4} elements of $\mathcal{B}_N$ equal its case \ref{sc2} elements and $(x_i,x_{i+1})$ is an identity interval of the latter, it follows $(x_i,x_{i+1})$ is an identity interval of the former too.

If an arc in the half-plane diagram for $\mathcal{F}_\vartheta\in\mathcal{B}_N$ shares its endpoints with the interval $(x_i,x_{i+1})$, then either a unique Pochhammer contour shares its endpoints with this interval too, or there is no such contour and $x_c\in\{x_i,x_{i+1}\}$.  (See the discussion beneath the formula (\ref{Fexplicit}) for $\mathcal{F}_\vartheta$ in the introduction \ref{intro}.)  Supposing still that $8/\kappa\not\in2\mathbb{Z}^++1$, we find fusion rules (\ref{prodfuse2}, \ref{prodfuse3}) as we send $x_{i+1}\rightarrow x_i$, so $(x_i,x_{i+1})$ must be an identity interval of $\mathcal{F}_\vartheta$.  (If $8/\kappa\in2\mathbb{Z}^++1$, then we find that $(x_i,x_{i+1})$ is a two-leg interval of $\mathcal{F}_\vartheta$ instead.  See section \ref{s2} of appendix \ref{appendix}.)  Similarly, if $F\in\mathcal{S}_N$ equals a linear combination of elements of $\mathcal{B}_N$ that all have this property, then $(x_i,x_{i+1})$ must be an identity interval of $F$.  Thus, we find a second candidate definition for the term ``identity interval," which we state as item \ref{cftinterval2b} in definition \ref{cftintervaldefn} below.  Unlike the first candidate, the second includes the case $8/\kappa\in2\mathbb{Z}^+$.  However, it does not include the exceptional speeds (\ref{exceptional}) because $\mathcal{B}_N$ is then often not a basis for $\mathcal{S}_N$.

With these considerations, we endow the terms ``two-leg interval" and ``identity interval" with formal definitions following from the above discussion.  Together with the usual restriction $\kappa\in(0,8)$, we state the first candidate definition of an identity interval, for $8/\kappa\not\in\mathbb{Z}^+$, in item \ref{cftinterval2a}, and we state the second, for $8/\kappa\in2\mathbb{Z}^+$ only, in item \ref{cftinterval2b}.  In lemma \ref{identityintervallem} below, we show that these two definitions agree as we extend the second beyond $8/\kappa\in2\mathbb{Z}^+$ to all $\kappa\in(0,8)$ that are not exceptional speeds (\ref{exceptional}).  All of definition \ref{cftintervaldefn} except item \ref{cftinterval2b} restates definition \red{13} in \cite{florkleb}.
\begin{defn}\label{cftintervaldefn}With $\kappa\in(0,8)$, we select an $F\in\mathcal{S}_N$ and an $i\in\{1,2,\ldots,2N-1\}$.  Interpreting $\pi_{i+1}(\Omega_0)$ to be the part of the boundary of $\Omega_0$ (\ref{components}) whose points have only their $i$th and $(i+1)$th coordinates equal, we let
\be\label{secondH}H:\Omega_0\cup\pi_{i+1}(\Omega_0)\rightarrow\mathbb{R},\quad H(\boldsymbol{x}):=(x_{i+1}-x_i)^{6/\kappa-1}F(\boldsymbol{x})\quad\text{for $\boldsymbol{x}\in\Omega_0$},\ee
and with the formula for $H$ on $\Omega_0$ continuously extended to $\pi_{i+1}(\Omega_0)$.  We also let
\be\label{secondF0} F_0:\pi_{i+1}(\Omega_0)\rightarrow\mathbb{R},\quad (F_0\circ\pi_{i+1})(\boldsymbol{x}):=\lim_{x_{i+1}\rightarrow x_i}(x_{i+1}-x_i)^{6/\kappa-1}F(\boldsymbol{x}).\ee
\begin{enumerate}
\item\label{cftinterval1} We define $(x_i,x_{i+1})$ to be a \emph{two-leg interval of $F$} if the limit $F_0$ (\ref{secondF0}) vanishes. 
\item\label{cftinterval2} We define $(x_i,x_{i+1})$ to be an \emph{identity interval of $F$} if the limit $F_0$ (\ref{secondF0}) does not vanish, and
\begin{enumerate}
\item\label{cftinterval2a} if $8/\kappa\not\in\mathbb{Z}^+$ and $H$ (\ref{secondH}) is analytic at every point in $\pi_{i+1}(\Omega_0)$.
\item\label{cftinterval2b} if $8/\kappa\in2\mathbb{Z}^+$ and in the decomposition $F=a_1\mathcal{F}_1+a_2\mathcal{F}_2+\dotsm+a_{C_N}\mathcal{F}_{C_N}$ of $F$ over the basis $\mathcal{B}_N$, the half-plane diagram for $\mathcal{F}_\vartheta$ has an arc with endpoints at $x_i$ and $x_{i+1}$ for all $\vartheta\in\{1,2,\ldots,C_N\}$ with $a_\vartheta\neq0$.
\end{enumerate}
\end{enumerate}
In definition \red{13} of \cite{florkleb}, we naturally extend these terms to the interval $(x_{2N},x_1)$ of $F$ containing infinity.  We describe precisely how to do this in that article, and because this description is somewhat long, we do not restate it here.
\end{defn}
\noindent
(In section \ref{8kappaodd} of appendix \ref{appendix}, we explain the absence of a definition for the term ``identity interval" if $8/\kappa\in2\mathbb{Z}^++1$.)

Consistent with the discussion that began this section, this corollary, following immediately from theorem \ref{frobseriescor}, states that these pure CFT interval types are closely related to the forms of Frobenius series expansions of solutions in $\mathcal{S}_N$.
\begin{cor}\label{frobintervalcor}
Suppose that $\kappa\in(0,8)$, $F\in\mathcal{S}_N$, and $i\in\{1,2,\ldots,2N-1\}$.  Then 
\begin{enumerate}
\item\label{frobintervalcor1} for $8/\kappa$ not odd (resp.\ odd), $(x_i,x_{i+1})$ is a two-leg interval of $F$ if and only if $A_m=0$ (resp.\ $A_m=C_m=0$) for all $m\in\mathbb{Z}^+\cup\{0\}$ in the Frobenius series expansion (\ref{nolog}, \ref{stillnolog}) (resp.\ (\ref{log})) of $F(\boldsymbol{x})$ in powers of $x_{i+1}-x_i$.
\item\label{frobintervalcor2} for $8/\kappa\not\in\mathbb{Z}^+$, $(x_i,x_{i+1})$ is an identity interval of $F$ if and only if $A_0\neq0$ and $B_m=0$ for all $m\in\mathbb{Z}^+\cup\{0\}$ in the Frobenius series expansion (\ref{nolog}) of $F(\boldsymbol{x})$ in powers of $x_{i+1}-x_i$.
\end{enumerate}
\end{cor}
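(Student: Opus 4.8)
The plan is to read Definition \ref{cftintervaldefn} through the Frobenius expansions supplied by Theorem \ref{frobseriescor}, turning both pure-interval conditions into statements about the coefficient functions $A_m$, $B_m$, and $C_m$. The single computation that drives everything is the evaluation of the limit $F_0$ of (\ref{secondF0}). First I would multiply the relevant expansion (\ref{nolog}), (\ref{stillnolog}), or (\ref{log}) by $(x_{i+1}-x_i)^{6/\kappa-1}$. In all three regimes this converts the first ($1-6/\kappa$-indexed) series into an ordinary power series with constant term $A_0$, while it multiplies every $2/\kappa$-indexed term (and, when $8/\kappa$ is odd, every logarithmic term) by the factor $(x_{i+1}-x_i)^{8/\kappa-1}$. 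Since $\kappa<8$ forces $8/\kappa-1>0$, and since $(x_{i+1}-x_i)^{8/\kappa-1}\log(x_{i+1}-x_i)\to0$ as well, every one of these contributions vanishes as $x_{i+1}\to x_i$. Hence $F_0=A_0$ uniformly across the three cases.

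With $F_0=A_0$ in hand, item \ref{frobintervalcor1} follows quickly. By Definition \ref{cftintervaldefn}, $(x_i,x_{i+1})$ is a two-leg interval exactly when $F_0=A_0=0$. The vanishing clauses of Theorem \ref{frobseriescor} then upgrade $A_0=0$ to the full conclusion: for $8/\kappa\not\in\mathbb{Z}^+$ (item \ref{frobitem1}) and $8/\kappa$ even (item \ref{frobitem2}), $A_0=0$ forces $A_m=0$ for all admissible $m$; for $8/\kappa$ odd (item \ref{frobitem3}), $A_0=0$ forces both $A_m=0$ and $C_m=0$ for all $m$. The converse is immediate, since $A_m=0$ (resp.\ $A_m=C_m=0$) for all $m$ includes $A_0=0$ and hence $F_0=0$. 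This is precisely the ``not odd'' (resp.\ ``odd'') statement of item \ref{frobintervalcor1}.

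For item \ref{frobintervalcor2} I would restrict to $8/\kappa\not\in\mathbb{Z}^+$ and analyze the function $H=(x_{i+1}-x_i)^{6/\kappa-1}F$ of (\ref{secondH}) directly from (\ref{nolog}), namely $H=\sum_{m\geq0}(A_m\circ\pi_i)(x_{i+1}-x_i)^m+(x_{i+1}-x_i)^{8/\kappa-1}\sum_{m\geq0}(B_m\circ\pi_i)(x_{i+1}-x_i)^m$. The key observation is that the first series carries only nonnegative integer powers of $x_{i+1}-x_i$, whereas each power in the second series equals $8/\kappa-1+m$, a noninteger since $8/\kappa\not\in\mathbb{Z}^+$ and $8/\kappa-1>0$. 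The two families of exponents are therefore disjoint, no cancellation between the series is possible, and $H$ extends analytically to $\pi_{i+1}(\Omega_0)$ if and only if the branch-cut series vanishes identically, i.e.\ $B_0=0$, which by item \ref{frobitem1} forces $B_m=0$ for all $m$. Combining this with the condition $F_0=A_0\neq0$ from Definition \ref{cftintervaldefn} yields exactly the identity-interval characterization of item \ref{frobintervalcor2}.

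The only step that requires real care is this non-cancellation argument: one must confirm that the integer-exponent ($A$) series and the noninteger-exponent ($B$) series occupy disjoint sets of exponents, so that analyticity of their sum is equivalent to the separate vanishing of the branch term. This is guaranteed precisely by the hypothesis $8/\kappa\not\in\mathbb{Z}^+$, which is exactly why item \ref{cftinterval2a} of Definition \ref{cftintervaldefn} imposes it; when $8/\kappa\in\mathbb{Z}^+$ the exponents collide, the equivalence breaks, and the separate criterion of item \ref{cftinterval2b} is needed instead. Everything else is bookkeeping with convergent series valid on $0<x_{i+1}-x_i<R$, so I expect the argument to be short once the limit $F_0=A_0$ and the exponent disjointness are established.
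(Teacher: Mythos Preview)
Your proposal is correct and is exactly the unpacking that the paper intends when it says the corollary ``follows immediately from theorem \ref{frobseriescor}'': you compute $F_0=A_0$ from the expansions, invoke the vanishing clauses of Theorem \ref{frobseriescor} to propagate $A_0=0$ (resp.\ $B_0=0$) to all $A_m$ and $C_m$ (resp.\ all $B_m$), and use the disjointness of integer versus $8/\kappa-1+\mathbb{Z}^{\geq0}$ exponents to read off analyticity of $H$. There is nothing to add.
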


Although restricted to $8/\kappa\in2\mathbb{Z}^+$, the statement of item \ref{cftinterval2b} in definition \ref{cftintervaldefn} is clearly extendible to other $\kappa$ for which $\mathcal{B}_N$ is a basis of $\mathcal{S}_N$ (item \ref{firstitem} of theorem \ref{maintheorem}).  On making this extension, the two definitions (items \ref{cftinterval2a} and \ref{cftinterval2b} in definition \ref{cftintervaldefn})  have overlapping applicability, so we need to verify that they  agree, which is the purpose of the following lemma. (We note that the characterization of an identity interval via the basis $\mathcal{B}_N$ in this lemma is similar to the  description of a contractible interval via the basis $\mathscr{B}_N$ in definition \ref{sleintervaldefn}.)
\begin{lem}\label{identityintervallem}
Suppose that $\kappa\in(0,8)$ is not an exceptional speed (\ref{exceptional}) with $q\leq N+1$ and $F\in\mathcal{S}_N\setminus\{0\}$.  Then $(x_i,x_{i+1})$ is an identity interval of $F$ if and only if in the decomposition 
\be\label{FdecompBN}F=a_1\mathcal{F}_1+a_2\mathcal{F}_2+\dotsm+a_{C_N}\mathcal{F}_{C_N}\ee
of $F$ over the basis $\mathcal{B}_N$, the half-plane diagram for $\mathcal{F}_\vartheta$ has an arc with endpoints at $x_i$ and $x_{i+1}$ whenever $a_\vartheta\neq0$.
\end{lem}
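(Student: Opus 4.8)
The plan is to recast the lemma as an equality of two subspaces of $\mathcal{S}_N$ and settle it by a dimension count. Since $\kappa$ is not an exceptional speed (\ref{exceptional}) with $q\leq N+1$, item \ref{firstitem} of theorem \ref{maintheorem} guarantees that $\mathcal{B}_N$ is a basis, so the decomposition (\ref{FdecompBN}) is unique. I index the connectivities as in the proof of theorem \ref{xingasymplem}, so that the half-plane diagram for $\mathcal{F}_\vartheta$ carries an arc with endpoints $x_i,x_{i+1}$ exactly when $\vartheta\leq C_{N-1}$. Setting $V_{\mathrm{id}}:=\text{span}\{\mathcal{F}_\vartheta\,:\,\vartheta\leq C_{N-1}\}$, the right-hand condition of the lemma says precisely that $F\in V_{\mathrm{id}}$. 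For $8/\kappa\in2\mathbb{Z}^+$ this condition is literally item \ref{cftinterval2b} of definition \ref{cftintervaldefn} (the only extra clause, $F_0\neq0$, will fall out of the argument below), so I concentrate on $8/\kappa\not\in\mathbb{Z}^+$, where corollary \ref{frobintervalcor} (item \ref{frobintervalcor2}) characterizes $(x_i,x_{i+1})$ as an identity interval of $F$ iff $A_0\neq0$ and $B_m=0$ for all $m$ in the expansion (\ref{nolog}).

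First I would record the inclusion $V_{\mathrm{id}}\subseteq W_{\mathrm{id}}$, where $W_{\mathrm{id}}:=\{G\in\mathcal{S}_N\,:\,B_m=0\text{ for all }m\}$ is the (linear) subspace of solutions whose two-leg Frobenius series vanishes. Indeed, each $\mathcal{F}_\vartheta$ with $\vartheta\leq C_{N-1}$ falls under case \ref{sc2} for the interval $(x_i,x_{i+1})$, so by (\ref{case2F}) it equals $(x_{i+1}-x_i)^{1-6/\kappa}$ times a function analytic and non-vanishing at $x_{i+1}=x_i$; hence its expansion (\ref{nolog}) has no second (two-leg) series, and $\mathcal{F}_\vartheta\in W_{\mathrm{id}}$. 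Because the expansion coefficients depend linearly on $F$, this gives $V_{\mathrm{id}}\subseteq W_{\mathrm{id}}$. Since the $\mathcal{F}_\vartheta$ are linearly independent, $\dim V_{\mathrm{id}}=C_{N-1}$.

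The heart of the proof is the reverse bound $\dim W_{\mathrm{id}}\leq C_{N-1}$, which I would obtain by pairing $W_{\mathrm{id}}$ against the two-leg subspace $\ker\bar{\ell}_1=\{G\,:\,A_0=0\}$. By item \ref{frobitem1} of theorem \ref{frobseriescor}, $A_0=B_0=0$ forces all coefficients to vanish, so $F=0$ for $0<x_{i+1}-x_i<R$ and hence (by the decomposition argument used in that theorem's proof) $F=0$; thus $W_{\mathrm{id}}\cap\ker\bar{\ell}_1=\{0\}$. Moreover $\bar{\ell}_1:\mathcal{S}_N\to\mathcal{S}_{N-1}$ is onto, because item \ref{1a} of theorem \ref{xingasymplem} gives $\bar{\ell}_1\Pi_\vartheta=\Xi_\vartheta$ for $\vartheta\leq C_{N-1}$ and $\{\Xi_\vartheta\}$ is a basis of $\mathcal{S}_{N-1}$; hence $\dim\ker\bar{\ell}_1=C_N-C_{N-1}$ by rank-nullity. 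Since $W_{\mathrm{id}}$ and $\ker\bar{\ell}_1$ meet only at $0$, we get $\dim W_{\mathrm{id}}+\dim\ker\bar{\ell}_1\leq C_N$, so $\dim W_{\mathrm{id}}\leq C_{N-1}$. Combined with $V_{\mathrm{id}}\subseteq W_{\mathrm{id}}$ and $\dim V_{\mathrm{id}}=C_{N-1}$, this forces $V_{\mathrm{id}}=W_{\mathrm{id}}$.

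Finally I would assemble the equivalence. For $F\neq0$ in $W_{\mathrm{id}}=V_{\mathrm{id}}$, the vanishing of the $B_m$ together with $W_{\mathrm{id}}\cap\ker\bar{\ell}_1=\{0\}$ forces $A_0\neq0$, so $(x_i,x_{i+1})$ is an identity interval; conversely, an identity interval has $B_m=0$, i.e.\ $F\in W_{\mathrm{id}}=V_{\mathrm{id}}$, which is exactly the statement that every $\mathcal{F}_\vartheta$ appearing in (\ref{FdecompBN}) carries an arc at $(x_i,x_{i+1})$. The same count disposes of $8/\kappa\in2\mathbb{Z}^+$: there the right-hand condition is the definition, and $F\in V_{\mathrm{id}}\setminus\{0\}$ still yields $A_0=F_0\neq0$, confirming the non-vanishing clause of item \ref{cftinterval2} of definition \ref{cftintervaldefn}. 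I expect the main obstacle to be the reverse direction — ruling out the possibility that cancellations among the two-leg tails of the no-arc basis functions accidentally produce $B_m=0$ without all no-arc coefficients vanishing; the dimension count above is precisely what lets me sidestep a direct and delicate linear-independence argument for those two-leg series.
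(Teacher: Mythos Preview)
Your proof is correct and takes a genuinely different route from the paper's. The paper establishes the ``only if'' direction by explicit contour manipulation: each case-\ref{sc3} basis element $\mathcal{F}_\vartheta$ (with $\vartheta>C_{N-1}$) is rewritten via the contour deformation of section \red{A\,3} in \cite{florkleb3} as $n^{-1}\mathcal{F}_{\chi(\vartheta)}$ plus case-\red{1} terms that are all two-leg (equation (\ref{decompF1})); substituting this into (\ref{decompF}) and using that the two-leg piece must vanish for an identity interval yields a second decomposition (\ref{decompF3}) of $F$ over $\mathcal{B}_N$, and comparing it with the first forces $a_\vartheta=0$ for $\vartheta>C_{N-1}$.

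You bypass all of that with a clean dimension count: you show $V_{\mathrm{id}}\subseteq W_{\mathrm{id}}$ from (\ref{case2F}), then use $W_{\mathrm{id}}\cap\ker\bar\ell_1=\{0\}$ (from item~\ref{frobitem1} of theorem~\ref{frobseriescor}) together with the surjectivity of $\bar\ell_1$ (from theorem~\ref{xingasymplem}) to get $\dim W_{\mathrm{id}}\le C_{N-1}$, forcing equality. This is more economical---it avoids the Coulomb-gas contour analysis entirely and makes transparent that the identity-interval subspace and the two-leg subspace $\ker\bar\ell_1$ are complementary in $\mathcal{S}_N$. The paper's approach, on the other hand, is constructive: the explicit decomposition (\ref{decompF1}, \ref{decompF2}) it produces is reused elsewhere (e.g.\ in deriving (\ref{Thetadefn}, \ref{finallincmb})), so it earns its keep beyond this lemma. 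Your last paragraph correctly handles the $8/\kappa\in2\mathbb{Z}^+$ case; the only ingredient needed there is $V_{\mathrm{id}}\cap\ker\bar\ell_1=\{0\}$, which indeed follows from your argument (or directly from $\bar\ell_1\mathcal{F}_\vartheta=n\mathcal{G}_\vartheta$ with $n\neq0$ and $\mathcal{B}_{N-1}$ a basis).
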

\begin{proof} If $8/\kappa\in2\mathbb{Z}^+$, then the lemma simply restates item \ref{cftinterval2b} in definition \ref{cftintervaldefn}, and if $8/\kappa\in2\mathbb{Z}^++1$, then $\kappa$ is an exceptional speed (\ref{exceptional}) with $q=2\leq N+1$, which is excluded by this lemma.  Thus, we assume $8/\kappa\not\in\mathbb{Z}^+$ here, so an identity interval of $F\in\mathcal{S}_N\setminus\{0\}$ is specified by item \ref{cftinterval2a} in definition \ref{cftintervaldefn}.  We also assume that $i\in\{1,2,\ldots,2N-1\}$ without loss of generality, and we index the arc connectivities according to item \ref{indexorder} above definition \ref{sleintervaldefn}.

To prove the ``if" statement of the lemma, we assume that $a_\vartheta=0$ for all $\vartheta>C_{N-1}$ in (\ref{FdecompBN}) and then show that $(x_i,x_{i+1})$ is an identity interval of $F$.  To this end, we choose a formula $\mathcal{F}_{c,\vartheta}$ (\ref{Fexplicit}) with $c\not\in\{i,i+1\}$ for each $\mathcal{F}_\vartheta\in\mathcal{B}_N$ with $\vartheta\leq C_{N-1}$.  As such, a Pochhammer contour shares its endpoints with $(x_i,x_{i+1})$ in each chosen formula, and by the discussion beneath (\red{44}) in \cite{florkleb3}, it follows that $(x_i,x_{i+1})$ is an identity interval of every $\mathcal{F}_{c,\vartheta}=\mathcal{F}_\vartheta$ with $\vartheta\leq C_{N-1}$.  Then from definition \ref{cftinterval2a} and (\ref{FdecompBN}), we have that $(x_i,x_{i+1})$ is an identity interval of $F$ too, proving the ``if" part.

To prove the ``only if" statement of the lemma, we assume that $(x_i,x_{i+1})$ is an identity interval of $F$ and then show that $a_\vartheta=0$ for all $\vartheta>C_{N-1}$ in (\ref{FdecompBN}).  For each $\vartheta\leq C_{N-1}$, we choose a formula $\mathcal{F}_{c,\vartheta}$ (\ref{Fexplicit}) for $\mathcal{F}_\vartheta$ as in the previous paragraph, and for each $\vartheta>C_{N-1}$, we choose a formula $\mathcal{F}_{c,\vartheta}$ (\ref{Fexplicit}) for $\mathcal{F}_\vartheta$ such that $x_i$ but not $x_{i+1}$ is an endpoint of a Pochhammer contour.  Section \ref{frobsect} calls the former (resp.\ latter) a ``case \ref{sc2}" (resp.\ ``case \ref{sc3}") term, using terminology from the proof of lemma \red{6} and appendix \red{A} of \cite{florkleb3}.  With $\mathcal{F}_{c,\vartheta}=\mathcal{F}_{c',\vartheta}$ for all $c,c'\in\{1,2,\ldots,2N\}$ and $\vartheta\in\{1,2,\ldots,C_N\}$, we may suppress the subscript $c$ from our notation.  As such, the decomposition (\ref{FdecompBN}) sorts into
\be\label{decompF} F=\underbrace{a_1\mathcal{F}_1+a_2\mathcal{F}_2+\dotsm+a_{C_{N-1}}\mathcal{F}_{C_{N-1}}}_{\text{case \ref{sc2} terms}}+\,\,\underbrace{a_{C_{N-1}+1}\mathcal{F}_{C_{N-1}+1}+a_{C_{N-1}+2}\mathcal{F}_{C_{N-1}+2}+\dotsm+a_{C_N}\mathcal{F}_{C_N}}_{\text{case \ref{sc3} terms}}.\ee
Now, by appropriately deforming the Pochhammer contour with an endpoint at $x_i$, we may write each case \ref{sc3} term in (\ref{decompF}) as a sum of a case \ref{sc2} term (multiplied by $n^{-1}$) with so-called ``case \red{1}" terms, that is, terms with neither $x_i$ nor $x_{i+1}$ an endpoint of any contour.  We describe how this is done in item \red{3} of the proof of lemma \red{6} in \cite{florkleb3}, and we explicitly perform the calculation in section \red{A 3} of \cite{florkleb3}.  (See also section \ref{frobsect} of this article.)  We may write the result as
\be\label{decompF1}\mathcal{F}_\vartheta=\underbrace{n^{-1}\mathcal{F}_{\chi(\vartheta)}}_{\text{case \ref{sc2} term}}\,\,+\,\,\underbrace{\text{terms with $(x_i,x_{i+1})$ a two-leg interval}}_{\text{case \red{1} terms}},\quad\vartheta>C_{N-1},\ee
where we define $\chi$ in item \ref{cutmap} above definition \ref{sleintervaldefn} and $n$ in (\ref{LkFk}).  (We note that $n(\kappa)\neq0$ because $8/\kappa\not\in2\mathbb{Z}^++1$.)  Equation  (\ref{decompF1}) states that $(x_i,x_{i+1})$ is a two-leg interval of each case \red{1} term, a fact verified in section \red{A 1} of \cite{florkleb3}.  After isolating the sum of case \red{1} terms in (\ref{decompF1}), we may rewrite (\ref{decompF}) as a linear combination of case \red{1} and case \red{2} terms thus:
\be\label{decompF2}F=\underbrace{\sideset{}{_{\vartheta=1}^{C_{N-1}}}\sum a_\vartheta\mathcal{F}_\vartheta+\sideset{}{_{\vartheta=C_{N-1}+1}^{C_N}}\sum n^{-1}a_\vartheta\mathcal{F}_{\chi(\vartheta)}}_{\text{case \ref{sc2} terms}}+\underbrace{\sideset{}{_{\vartheta=C_{N-1}+1}^{C_N}}\sum a_\vartheta(\mathcal{F}_\vartheta-n^{-1}\mathcal{F}_{\chi(\vartheta)})}_{\text{case \red{1} terms}}.\ee
Because $(x_i,x_{i+1})$ is an identity (resp.\ two-leg) interval of each case \ref{sc2} (resp.\ case \red{1}) term  in (\ref{decompF2}), corollary \ref{frobintervalcor} says that the sum of these terms equals the first (resp.\ second) sum in the Frobenius series expansion (\ref{nolog}) for $F$.  However, $(x_i,x_{i+1})$ is an identity interval of $F$ by assumption, so the third sum in (\ref{decompF2}) must vanish, giving
\be\label{decompF3}F=\sum_{\vartheta=1}^{C_{N-1}}\Bigg(a_\vartheta\,\,\,+\,\,\,\sum_{\mathclap{\substack{\varrho=C_{N-1}+1 \\ \chi(\varrho)=\vartheta}}}^{C_N}\,\,\,n^{-1}a_\varrho\Bigg)\mathcal{F}_\vartheta,\ee
which is a second decomposition of $F$ over $\mathcal{B}_N$.  Because the decomposition of $F$ over the basis $\mathcal{B}_N$ is unique, the two decompositions (\ref{decompF}) and (\ref{decompF3}) must agree.  And because $n(\kappa)^{-1}\neq0$ for any real $\kappa$ (\ref{LkFk}), we therefore have $a_\vartheta=0$ for all $\vartheta>C_{N-1}$, proving the ``only if" part.
\end{proof} 

\subsection{Pure multiple-SLE$_\kappa$ versus pure CFT intervals}

Finally, we investigate the relationships between the types of pure multiple-SLE$_\kappa$ versus pure CFT intervals.  To begin, there is a very simple relationship between between propagating intervals and two-leg intervals of $F\in\mathcal{S}_N$.

\begin{lem}\label{proptwoleglem}
Suppose that $\kappa\in(0,8)$ and $F\in\mathcal{S}_N$.  Then $(x_i,x_{i+1})$ is a propagating interval of $F$ if and only if it is a two-leg interval of $F$.  (Here, we again identify the index $i+1=2N+1$ with one.)
\end{lem}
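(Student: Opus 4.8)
The plan is to reduce both the propagating and the two-leg conditions to a single statement about the coefficients in the $\mathscr{B}_N$-decomposition of $F$, and then read off their equivalence directly from Theorem \ref{xingasymplem}. I adopt the indexing convention of item \ref{indexorder}, so that the half-plane diagram for $\Pi_\vartheta$ has an arc joining $x_i$ to $x_{i+1}$ precisely when $\vartheta\leq C_{N-1}$. Writing $F=\sum_{\varsigma=1}^{C_N}a_\varsigma\Pi_\varsigma$, Definition \ref{sleintervaldefn} says that $(x_i,x_{i+1})$ is a propagating interval of $F$ exactly when $a_\varsigma=0$ for every $\varsigma\leq C_{N-1}$. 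On the CFT side, item \ref{cftinterval1} of Definition \ref{cftintervaldefn}, together with the identification (\ref{a0vanish}) of the limit (\ref{secondF0}) with the leading Frobenius coefficient, says that $(x_i,x_{i+1})$ is a two-leg interval of $F$ exactly when $A_0=F_0$ vanishes. Thus it suffices to show that $A_0=0$ if and only if $a_\varsigma=0$ for all $\varsigma\leq C_{N-1}$.

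The main step is to compute $A_0$ from the decomposition. Since the sum is finite, I would pass the limit (\ref{secondF0}) through it termwise, obtaining $A_0=\sum_{\varsigma=1}^{C_N}a_\varsigma\lim_{x_{i+1}\to x_i}(x_{i+1}-x_i)^{6/\kappa-1}\Pi_\varsigma$. Now I apply Theorem \ref{xingasymplem} term by term: for $\varsigma>C_{N-1}$ the diagram for $\Pi_\varsigma$ has no arc joining $x_i$ to $x_{i+1}$, so item \ref{2a} kills that term; for $\varsigma\leq C_{N-1}$ the diagram does have such an arc, so item \ref{1a} replaces the term by $a_\varsigma\Xi_\varsigma$, where $\Xi_\varsigma\in\mathscr{B}_{N-1}$ is the corresponding $(N-1)$-point connectivity weight with that arc deleted. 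Hence $A_0=\sum_{\varsigma=1}^{C_{N-1}}a_\varsigma\Xi_\varsigma$. Because the $\Xi_\varsigma$ are exactly the elements of $\mathscr{B}_{N-1}$, which form a basis of $\mathcal{S}_{N-1}$ and are therefore linearly independent, this linear combination vanishes if and only if every $a_\varsigma$ with $\varsigma\leq C_{N-1}$ vanishes. This is precisely the propagating condition, establishing the equivalence for each $(x_i,x_{i+1})$ with $i\leq 2N-1$.

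The remaining case is the interval $(x_{2N},x_1)$ containing infinity, where $i+1$ is identified with $1$. Here the relevant limit is the $R\to\infty$ limit of items \ref{1b} and \ref{2b} of Theorem \ref{xingasymplem} rather than the $x_{i+1}\to x_i$ limit, and the two-leg and propagating notions are the ones extended to this interval in definition \red{13} of \cite{florkleb}. I expect the cleanest route is to reuse the M\"obius transformation already invoked in the proof of Theorem \ref{xingasymplem} (from the proof of lemma \red{5} in \cite{florkleb}), which maps this interval to an ordinary adjacent pair and conjugates the $R\to\infty$ limit to an $x_{i+1}\to x_i$ limit, so that the finite-interval argument above applies verbatim. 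The trivial solution $F=0$ is consistent with the statement, since it is declared propagating by Definition \ref{sleintervaldefn} and has $A_0=0$. I anticipate that the only genuine subtlety is this bookkeeping at infinity; the termwise limit and the linear-independence step are immediate consequences of Theorem \ref{xingasymplem} and so present no real obstacle.
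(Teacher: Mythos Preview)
Your proposal is correct and follows essentially the same approach as the paper: decompose $F$ over $\mathscr{B}_N$, apply Theorem~\ref{xingasymplem} to compute $\bar{\ell}_1F=\sum_{\varsigma\leq C_{N-1}}a_\varsigma\Xi_\varsigma$, and use the linear independence of $\mathscr{B}_{N-1}$ to conclude that $A_0=0$ iff $a_\varsigma=0$ for all $\varsigma\leq C_{N-1}$. The paper separates the two implications and, for the interval containing infinity, invokes the limit $\underline{\ell}_1$ directly rather than the M\"obius transformation, but the substance is identical.
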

\begin{proof} Without loss of generality, we assume that $i\in\{1,2,\ldots,2N-1\}$.  If $i=2N$, then the proof is identical to the one shown here, except that we use the version $\underline{\ell}_1$ of the limit (\ref{lim}) that sends $-x_1=x_{2N}=R\rightarrow\infty$ \cite{florkleb,florkleb3}.

To begin, if $(x_i,x_{i+1})$ is a propagating interval of $F$, then definition \ref{sleintervaldefn} and item \ref{2a} of theorem \ref{xingasymplem} together imply that the limit (\ref{secondF0}) vanishes.  As such, item \ref{cftinterval1} of definition \ref{cftintervaldefn} says that $(x_i,x_{i+1})$ is a two-leg interval of $F$.

Next, if $(x_i,x_{i+1})$ is a two-leg interval of $F$, then theorem \ref{xingasymplem} shows that collapsing the interval $(x_i,x_{i+1})$ in the decomposition (\ref{FdecomposePi}) gives (as usual, we have indexed the arc connectivities as per item \ref{indexorder} above definition \ref{sleintervaldefn})
\be\label{col0}0\,\,=\,\,\bar{\ell}_1F\,\,=\sum_{\vartheta=1}^{C_{N-1}}a_\vartheta\bar{\ell}_1\Pi_\vartheta\,\,\,+\,\,\,\sum_{\mathclap{\vartheta=C_{N-1}+1}}^{C_N}\quad a_\vartheta\bar{\ell}_1\Pi_\vartheta\,\,\,=\sum_{\vartheta=1}^{C_{N-1}}a_\vartheta\Xi_\vartheta,\ee
where $\Xi_\vartheta=\bar{\ell}_1\Pi_\vartheta$ is a connectivity weight in $\mathcal{S}_{N-1}$.  Because $\mathscr{B}_{N-1}=\{\Xi_1,\Xi_2,\ldots,\Xi_{C_{N-1}}\}$ is a basis for $\mathcal{S}_{N-1}$, (\ref{col0}) implies that $a_\vartheta=0$ for all $\vartheta\leq C_{N-1}$ in (\ref{FdecomposePi}).  Thus by definition \ref{sleintervaldefn}, $(x_i,x_{i+1})$ is a propagating interval of $F$.
\end{proof}
\noindent
This lemma shows why the ``two-leg interval" is aptly named.  Because a propagating pair of boundary arcs anchors to a two-leg interval's endpoints, we find two arcs, or ``legs," anchored to one point after the endpoints come together.

The relationship between a contractible interval and an identity interval is not as simple. Intuitively, this is because, as mentioned, an identity interval does not condition the connectivity of its attached boundary arcs (figure \ref{TwoLegFuse}).  To understand  in more detail, we compare the probability that a multiple-SLE$_\kappa$ process generates a contractible boundary arc sharing its endpoints with an identity interval to the probability of a pair of propagating boundary arcs doing the same.  Because these probabilities depend on the choice of SLE$_\kappa$ partition function, we make this comparison in only one case.  We promote the connectivity weight $\Xi_\varsigma\in\mathscr{B}_{N-1}$, a function of the points in $\{x_j\}_{j\neq i,i+1}$ for some $i\in\{1,2,\ldots, 2N-1 \}$, to a new SLE$_\kappa$ partition function $\Theta_\varsigma\in\mathcal{S}_N$ of all points in $\{x_j\}_{j=1}^{2N}$ and with two special properties: first, $(x_i,x_{i+1})$ is an identity interval of it, and second, $\Theta_\varsigma\rightarrow n\Xi_\varsigma$ as $x_{i+1}\rightarrow x_i$.  By decomposing $\Theta_\varsigma$ over $\mathscr{B}_N$, we then determine how the lone identity interval $(x_i,x_{i+1})$ decomposes into a combination of contractible and propagating intervals.

Now we construct $\Theta_\varsigma$.  After indexing the arc connectivities according to item \ref{indexorder} above definition \ref{sleintervaldefn}, we define the sets $\mathscr{C}_N=\{\Pi_1,\Pi_2,\ldots,\Pi_{C_{N-1}}\}\subset\mathscr{B}_N$ and $\mathscr{B}_{N-1}=\{\Xi_1,\Xi_2,\ldots,\Xi_{C_{N-1}}\}$ relative to the interval $(x_i,x_{i+1})$ as in the proof of theorem \ref{xingasymplem}, and we write $\mathcal{C}_N=\{\mathcal{F}_1,\mathcal{F}_2,\ldots,\mathcal{F}_{C_{N-1}}\}\subset\mathcal{B}_N$ and $\mathcal{B}_{N-1}=\{\mathcal{G}_1,\mathcal{G}_2,\ldots,\mathcal{G}_{C_{N-1}}\}$, where the function $\mathcal{G}_\vartheta$ is such that its half-plane diagram follows from removing the arc with endpoints at $x_i$ and $x_{i+1}$ from the half-plane diagram of $\mathcal{F}_\vartheta\in\mathcal{C}_N$.  (Thus, the functions in $\mathscr{B}_{N-1}$ and $\mathcal{B}_{N-1}$ depend only on the points in $\{x_j\}_{j\neq i,i+1}$.)  Said differently, we create the function $\mathcal{F}_\vartheta\in\mathcal{C}_N$ from its corresponding function $\mathcal{G}_\vartheta\in\mathcal{B}_{N-1}$ by inserting the missing points $x_i<x_{i+1}$ between $x_{i-1}$ and $x_{i+2}$ (resp.\ to the right of $x_{2N-2}$, resp.\ to the left of $x_3$) if $i\not\in\{1,2N-1\}$ (resp.\ if $i=2N-1$, resp.\ if $i=1$) and entwining them together with a new Pochhammer contour (figure \ref{PochhammerContour}).  
Now, for each connectivity weight $\Xi_\varsigma\in\mathscr{B}_{N-1}$, we invert (\ref{F=MPi}) to find the decomposition
\be\label{Xidecomp}\Xi_\varsigma(\kappa\,|\,\boldsymbol{x})=\sum_{\vartheta=1}^{\mathclap{C_{N-1}}} b_{\varsigma,\vartheta}(\kappa)\mathcal{G}_\vartheta(\kappa\,|\,\boldsymbol{x}),\ee
where the $b_{\varsigma,\vartheta}(\kappa)$ are the entries of the inverse $(M_{N-1}^{-1}\circ n)(\kappa)$ of the $C_{N-1}\times C_{N-1}$ meander matrix \cite{fgg, fgut, difranc, franc}.  Then, for each $\varsigma\in\{1,2,\ldots,C_{N-1}\}$, we define the new function $\Theta_\varsigma\in\mathcal{S}_N$ by
\be\label{Thetadefn}\Theta_\varsigma:(0,8)\times\Omega_0\rightarrow\mathbb{R},\quad\Theta_\varsigma(\kappa\,|\,\boldsymbol{x}):=\sum_{\vartheta=1}^{\mathclap{C_{N-1}}} b_{\varsigma,\vartheta}(\kappa)\mathcal{F}_\vartheta(\kappa\,|\,\boldsymbol{x}).\ee
(If $\kappa$ is an exceptional speed (\ref{exceptional}) with $q\leq N$, then the meander matrix $(M_{N-1}\circ n)(\kappa)$ is not invertible, so the coefficients $b_{\varsigma,\vartheta}(\kappa)$ of (\ref{Thetadefn}) do not exist.  Below, we derive the alternative formula (\ref{finallincmb}) for $\Theta_\varsigma$.  This formula shows that $\Theta_\varsigma(\varkappa)$ has a limit as $\varkappa\rightarrow\kappa$, and we define $\Theta_\varsigma(\kappa)$ to be that limit.)  We think of $\Theta_\varsigma$ as ``almost" a connectivity weight because, according to case \red{2} in the proof of lemma \red{6} in \cite{florkleb3}, if we collapse the inserted interval $(x_i,x_{i+1})$, then we find $n$ (\ref{LkFk}) times the original connectivity weight $\Xi_\varsigma\in\mathscr{B}_{N-1}$ (\ref{Xidecomp}).

\begin{figure}[b]
\centering
\includegraphics[scale=0.38]{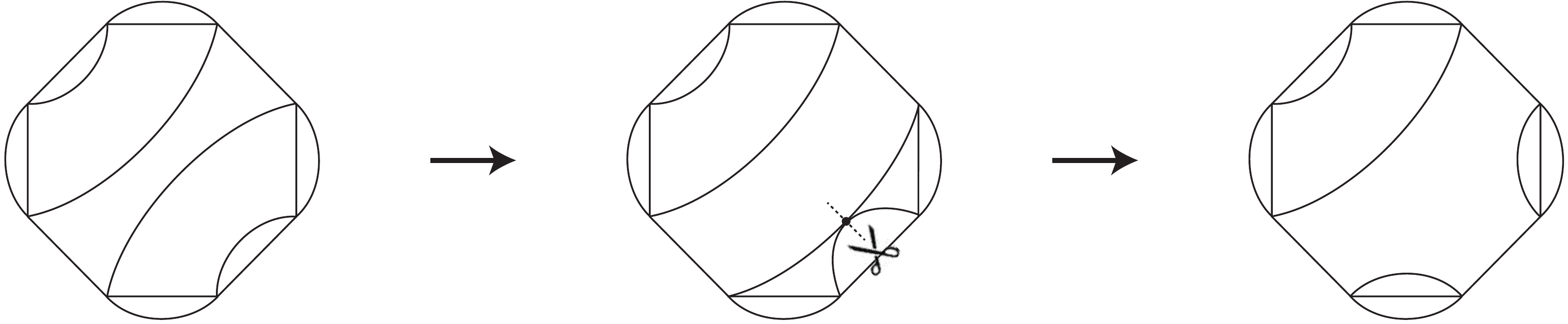}
\caption{The index map $\chi$ of figure \ref{CutMap} applied to the diagram for $[\mathscr{L}_\varrho]\mathcal{F}_\vartheta$ with $\varrho>C_{N-1}$ and $\vartheta\leq C_{N-1}$.  The octagons' bottom sides correspond to $(x_i,x_{i+1})$; the left octagon is the diagram for $[\mathscr{L}_\varrho]\mathcal{F}_\vartheta$; the right octagon is the diagram for $[\mathscr{L}_{\chi(\varrho)}]\mathcal{F}_\vartheta$.}
\label{Cut}
\end{figure}

Anticipating from lemma \ref{identityintervallem} that $(x_i,x_{i+1})$ is an identity interval of $\Theta_\varsigma$, we decompose it over $\mathscr{B}_N$ (\ref{finallincmb}) to determine what boundary arc connectivities the multiple-SLE$_\kappa$ process with $\Theta_\varsigma$ for its partition function generates.  (Later, we use this decomposition to help confirm that $(x_i,x_{i+1})$ is indeed an identity interval of $\Theta_\varsigma(\kappa)$ for all $\kappa\in(0,8)$, except for $8/\kappa$ odd.)  Because $\Theta_\varsigma\rightarrow n\Xi_\varsigma$ as $x_{i+1}\rightarrow x_i$, the only connectivity weight in $\mathscr{C}_N$ to appear in this decomposition is $\Pi_\varsigma$, with coefficient $n$.  (We recall that by theorem \ref{xingasymplem}, the connectivity weights in $\mathscr{C}_N$ do not vanish in this limit.)  Similarly, not every connectivity weight in $\mathscr{B}_N\setminus\mathscr{C}_N$ necessarily appears in this decomposition.  To determine which do appear, we act on both sides of (\ref{Thetadefn}) with $[\mathscr{L}_\varrho]$ for $\varrho>C_{N-1}$.  Thus, we must find $[\mathscr{L}_\varrho]\mathcal{F}_\vartheta$ for $\vartheta\leq C_{N-1}$ and $\varrho>C_{N-1}$.  Now, while $x_i$ and $x_{i+1}$ are not endpoints of the same interior arc in the diagram for each $[\mathscr{L}_\varrho]$ with $\varrho>C_{N-1}$, they are endpoints of the same exterior arc in the diagram for each $\mathcal{F}_\vartheta$ with $\vartheta\leq C_{N-1}$.  The two distinct interior arcs join with this exterior arc to form part of the same loop in the polygon diagram for the product $[\mathscr{L}_\varrho]\mathcal{F}_{\vartheta}$ (figure \ref{innerproduct} with the polygon deleted).  Next, we pinch these two interior arcs together at a point in the polygon and cut them there to form a new loop passing exclusively through the $i$th and $(i+1)$th vertex and separate from what remains of the old loop.  While this does not change the diagram of $\mathcal{F}_\vartheta$, this does change the diagram of $[\mathscr{L}_\varrho]$ to that of some $[\mathscr{L}_\varpi]\in\mathscr{C}_N^*$ (figure \ref{Cut}).  In fact, we have $\chi(\varrho)=\varpi$, where $\varrho>C_{N-1}$, $\varpi\leq C_{N-1}$, and $\chi$ is the map defined in item \ref{cutmap} above definition \ref{sleintervaldefn}.  For all $\vartheta\leq C_{N-1}$, the diagram of $[\mathscr{L}_{\chi(\varrho)}]\mathcal{F}_\vartheta$ has one more loop than that of $[\mathscr{L}_\varrho]\mathcal{F}_\vartheta$, so
\be\label{chimap}\text{$[\mathscr{L}_{\chi(\varrho)}]\mathcal{F}_\vartheta=n[\mathscr{L}_\varrho]\mathcal{F}_\vartheta$ if $\varrho>C_{N-1}$ and $\vartheta\leq C_{N-1}$}\quad\Longrightarrow\quad\text{$[\mathscr{L}_{\chi(\varrho)}]\Theta_\varsigma=n[\mathscr{L}_\varrho]\Theta_\varsigma$ if $\varrho>C_{N-1}$}\ee  
thanks to (\ref{LkFk}, \ref{Thetadefn}).  We previously showed that $[\mathscr{L}_\varpi]\Theta_\varsigma=n\delta_{\varpi,\varsigma}$ if $\varpi\leq C_{N-1}$.  Combining these two equations with $\varpi=\chi(\varrho)\leq C_{N-1}$, we have $[\mathscr{L}_\varrho]\Theta_\varsigma=\delta_{\chi(\varrho),\varsigma}$ for all $\varrho>C_{N-1}$, and hence
\be\label{finallincmb}\Theta_\varsigma\,\,=\,\,n\Pi_\varsigma\,\,+\,\,\sum_{\mathclap{\substack{\varrho=C_{N-1}+1 \\ \chi(\varrho)=\varsigma}}}^{C_N}\,\,\Pi_\varrho.\ee

\begin{figure}[b]
\centering
\includegraphics[scale=0.25]{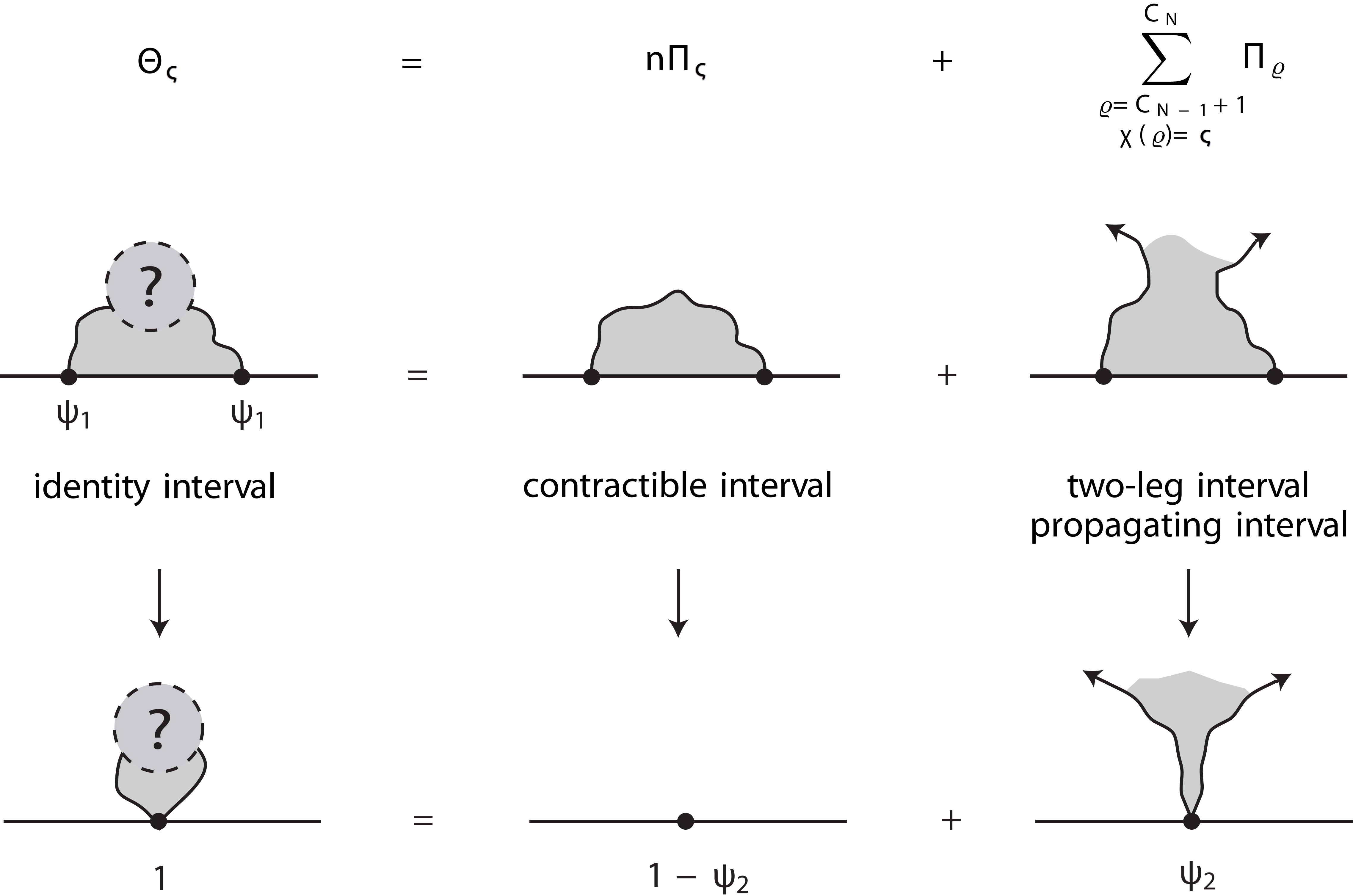}
\caption{Relation between the pure SLE$_\kappa$ (``contractible" and ``propagating") and pure CFT (``identity" and ``two-leg") interval types, the inter-connectivities of their boundary arcs, and the OPE content of their one-leg boundary operators.  (See table \ref{FrobTable}.)}
\label{TwoLegFuse}
\end{figure}

This second formula for $\Theta_\varsigma$ has important uses.  First, (\ref{finallincmb}) gives the continuous extension of (\ref{Thetadefn}) to the exceptional speeds with $q\leq N$.  Second, (\ref{finallincmb}) shows that $\Theta_\varsigma$ is an SLE$_\kappa$ partition function (definition \ref{partitiondefn}) if $n(\kappa)>0$ (\ref{LkFk}), assuming the positivity of the connectivity weights.  (See the discussion beneath (\ref{xing2}) in section \ref{xingprob}.)  Third, (\ref{finallincmb}) helps us to confirm that $(x_i,x_{i+1})$ is an identity interval of $\Theta_\varsigma(\kappa)$ for all $\kappa\in(0,8)$ with $8/\kappa\not\in2\mathbb{Z}^++1$.  Indeed, if $\kappa$ is not an exceptional speed (\ref{exceptional}) with $q\leq N+1$, then this fact follows from the decomposition (\ref{Thetadefn}) and lemma \ref{identityintervallem}.  However, if $\kappa$ is such a speed with $q>2$, then this fact follows in part from (\ref{finallincmb}), and the argument proceeds this way: If $(x_i,x_{i+1})$ is an identity interval of $\Theta_\varsigma(\varkappa)$ for all $\varkappa\in(\kappa-\epsilon,\kappa)\cup(\kappa,\kappa+\epsilon)$ with $\epsilon>0$ small, then according to item \ref{frobintervalcor2} of corollary \ref{frobintervalcor}, the second sum in the expansion (\ref{nolog}) for $\Theta_\varsigma(\varkappa)$ is zero in these two intervals.  But because $\Pi_\varsigma$ is a continuous function of $\kappa\in(0,8)$, $\Theta_\varsigma(\varkappa)$ is continuous on $(\kappa-\epsilon,\kappa+\epsilon)$, so this second sum necessarily vanishes at $\varkappa=\kappa$ too.  Finally, because the right side of (\ref{finallincmb}) is not zero, $\Theta_\varsigma(\kappa)$ is not zero, so the first sum in the expansion (\ref{nolog}) for $\Theta_\varsigma(\kappa)$, being the only one present, is not zero either.  Hence, we conclude from item \ref{frobintervalcor2} of corollary \ref{frobintervalcor} that $(x_i,x_{i+1})$ is an identity interval of $\Theta_\varsigma(\kappa)$ too.  (If, on the other hand, $\kappa$ is an exceptional speed (\ref{exceptional}) with $q=2$, then $n(\kappa)=0$, so the first term on the right side of (\ref{finallincmb}) vanishes.  According to definition \ref{sleintervaldefn} and lemma \ref{proptwoleglem}, $(x_i,x_{i+1})$ is then a two-leg interval of $\Theta_\varsigma$.)  Finally, (\ref{finallincmb}) shows that an identity interval may be thought of as a superposition of a contractible interval with a propagating interval.  The first term on the right side of (\ref{finallincmb}) contributes to the former because $\varsigma\leq C_{N-1}$, and the other terms contribute to the latter.  Figure \ref{TwoLegFuse} expresses this superposition in terms of the OPE of the associated one-leg boundary operators.

But perhaps the most interesting use of (\ref{finallincmb}) is for determining the possible connectivities of the boundary arcs attached to the identity interval $(x_i,x_{i+1})$ in the multiple-SLE$_\kappa$ process with $\Theta_\varsigma$ for its partition function.  (Because of definition \ref{partitiondefn} and the positivity of the connectivity weights assumed beneath (\ref{xing2}), we need $n(\kappa)>0$.  This restriction is satisfied in many intervals of $\kappa$ values, such as $\kappa\in(8/3,8)$.)  Using the conjectured crossing-probability formula (\ref{xing}, \ref{xing2}) with $F=\Theta_\varsigma$ from (\ref{finallincmb}) shows that these boundary arcs almost surely join the points $x_1$, $x_2,\ldots,x_{2N}$ in either the $\varsigma$th connectivity (with $\varsigma\leq C_{N-1}$) or in the $\varrho$th connectivity (with $\varrho>C_{N-1}$ and $\chi(\varrho)=\varsigma$).  (See item \ref{cutmap} above definition \ref{sleintervaldefn} for the definition of $\chi$.)  In the first case, a contractible boundary arc anchors to the points $x_i$ and $x_{i+1}$, and in the second, a pair of propagating boundary arcs anchors to these points.  The restriction $\chi(\varrho)=\varsigma$ determines the possible endpoints of the pair of propagating boundary arcs anchored to $x_i$ and $x_{i+1}$ in the latter case.  In general, these are not just any points among $x_1$, $x_2,\ldots,x_{2N}$ (figure \ref{ThetaFig}).

The decomposition (\ref{finallincmb}) also has a statistical mechanics interpretation (i.e., percolation, Potts models, random cluster model, etc.) that originates in the last paragraph of section \ref{puresle}.  As in the discussion surrounding (\ref{Piprefactor}--\ref{Pivanish}), we interpret each connectivity weight $\Pi_\vartheta\in\mathscr{B}_N$, and also $\Theta_\varsigma$ through (\ref{finallincmb}), as a statistical mechanics partition function for these systems, with the BC of the intervals $(x_1,x_2)$, $(x_3,x_4),\ldots,(x_{2N-1},x_{2N})$ in the ``fixed" state.  As such, a boundary cluster anchors to each fixed interval, and we indicate this cluster by coloring black the region bounded off by the boundary arc (or arcs) that surround it in the polygon diagram for $\Pi_\vartheta$.  Supposing that the identity interval $(x_i,x_{i+1})$ of $\Theta_\varsigma$ is fixed, we modify the diagram for the connectivity weight $\Pi_\varsigma$ into one for $\Theta_\varsigma$ by recoloring the lone black region anchored to this interval gray and extending it to touch every black region to which it has access.  (By definition, the gray region has ``access" to a black region if it can touch that region without touching another black region.)  Figure \ref{ThetaFig} shows an example.  Each connection between the gray region and an accessible black region represents the possibility that \emph{only} the boundary cluster of the latter joins with that anchored to $(x_i,x_{i+1})$, forming one contiguous boundary cluster.  Thanks to (\ref{finallincmb}), all of these possible cluster-crossing events and no others contribute to $\Theta_\varsigma$.  Hence, we interpret $\Theta_\varsigma$ as a (statistical mechanics) partition function summing exclusively over the $\varsigma$th cluster-crossing event and over every cluster-crossing event generated by joining the lone cluster of the $\varsigma$th event anchored to $(x_i,x_{i+1})$ with exactly one of the other boundary clusters accessible to it.

\begin{figure}[t]
\centering
\includegraphics[scale=0.38]{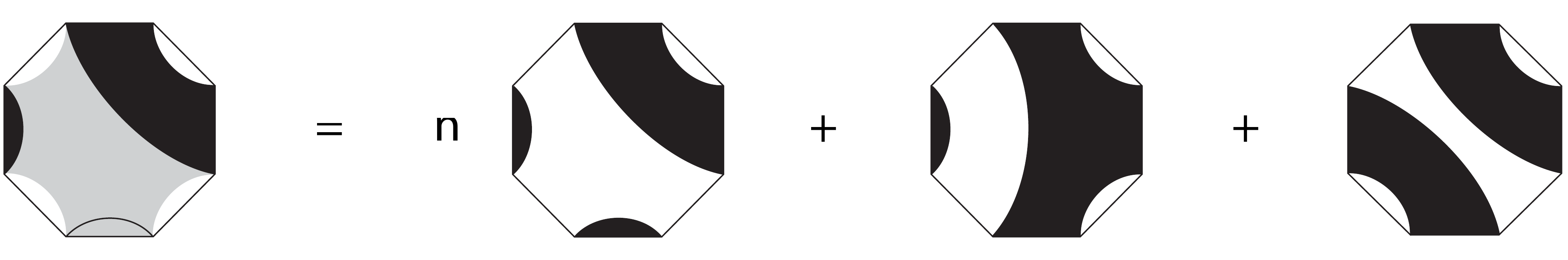}
\caption{The decomposition (\ref{finallincmb}).  The left diagram is $\Theta_2$; the first diagram on the right side is $\Pi_2\in\mathscr{C}_4$; the other two diagrams on the right side are $\Pi_6,\Pi_7\in\mathscr{B}_4\setminus\mathscr{C}_4$.  As such, we have $\chi(6)=\chi(7)=2$, where $\chi$ is the ``cutting map" of figures \ref{CutMap} and \ref{Cut}.}
\label{ThetaFig}
\end{figure}

\section{Exceptional speeds and CFT minimal models}\label{minmodelsect}

In this section, we study the correspondence between the exceptional speeds (\ref{exceptional}) and the CFT minimal models \cite{bpz, fms, henkel}.  Our main purpose is to propose conjecture \ref{minmodelconj}, which suggests a likely explanation for this correspondence.

To begin, we note a relationship between the rational SLE$_\kappa$ speeds and the central charge of the $\mathcal{M}(p,p')$ CFT minimal models.  We index the minimal model central charge \cite{bpz, fms, henkel}, and for convenience, the rational speeds as
\bea\label{mincentral}c_{p,p'}&:=&1-\frac{6(p-p')^2}{pp'},\quad 1<p'<p,\quad \text{$p,p'$ coprime,}\\
\label{rationalspeed}\kappa_{q,q'}&:=&4q/q',\quad q,q'\in\mathbb{Z}^+,\quad \text{$q,q'$ coprime}\eea
respectively.  Now, we insert the rational speed (\ref{rationalspeed}) into the formula $c(\kappa)=(6-\kappa)(3\kappa-8)/2\kappa$ that expresses the CFT central charge in terms of the SLE$_\kappa$ speed parameter \cite{bauber}.  This gives
\be\label{ck} 
c(\kappa)=\frac{(6-\kappa)(3\kappa-8)}{2\kappa}\quad\Longrightarrow\quad c(\kappa_{q,q'})=c_{p,p'},\quad\begin{array}{lll}p&:=&\text{max}\,\{q,q'\}, \\ p'&:=&\text{min}\,\,\{q,q'\}.\end{array}
\ee
Because $p'>1$ for a CFT minimal model, we find that each rational speed $\kappa_{q,q'}$ with $q>1$ and $q'>1$ (i.e., $\kappa\neq4r$ and $\kappa\neq4/r$ for any $r\in\mathbb{Z}^+$) corresponds to a CFT minimal model. In the other direction, we have
\be\label{ck2}
c(\kappa)=c_{p,p'}\quad\Longrightarrow\quad \kappa=\text{$\kappa_{p,p'}$ or $\kappa_{p',p}$},\quad1<p'<p,\quad \text{$p,p'$ coprime.}
\ee
That is, each minimal model central charge $c_{p,p'}$ corresponds to two rational speeds $\kappa_{q,q'}$ with $q>1$ and $q'>1$ (i.e., $\kappa\neq4r$ and $\kappa\neq4/r$ for any $r\in\mathbb{Z}^+$).  To summarize, we have the following two-to-one correspondence.
\begin{fact}\label{fact1}The set of all rational speeds $\kappa_{q,q'}$ (\ref{rationalspeed}) with $q,q'>1$ corresponds two-to-one with the set of all CFT minimal models $\mathcal{M}(p,p')$.  In this correspondence, we have $p=\text{max}\,\{q,q'\}$ and $p'=\text{min}\,\{q,q'\}$.
\end{fact}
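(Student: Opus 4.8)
The plan is to obtain the statement directly from the two implications already displayed as (\ref{ck}) and (\ref{ck2}), so the real content is to verify those two and then assemble them into the claimed correspondence. The first implication supplies a well-defined map from rational speeds with $q,q'>1$ to minimal models, and the second shows this map is surjective and exactly two-to-one. After those are in hand, matching $p=\max\{q,q'\}$ and $p'=\min\{q,q'\}$ is immediate from the symmetry of the central-charge formula.

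First I would establish the forward map by direct substitution. Inserting $\kappa_{q,q'}=4q/q'$ into $c(\kappa)=(6-\kappa)(3\kappa-8)/2\kappa$ and clearing the powers of $q'$, the numerator factors as $(3q'-2q)(3q-2q')$, giving
\be c(\kappa_{q,q'})=\frac{(3q'-2q)(3q-2q')}{qq'}=13-6\,\frac{q^2+q'^2}{qq'}.\ee
This expression is manifestly symmetric under $q\leftrightarrow q'$. Rewriting $c_{p,p'}=1-6(p-p')^2/pp'=13-6(p^2+p'^2)/pp'$ and comparing, I read off $c(\kappa_{q,q'})=c_{p,p'}$ with $\{p,p'\}=\{q,q'\}$, i.e.\ $p=\max\{q,q'\}$ and $p'=\min\{q,q'\}$, which is exactly (\ref{ck}). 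The hypothesis that $q,q'$ are coprime with $q,q'>1$ then forces $p>p'>1$ with $p,p'$ coprime, so the target is a genuine minimal model $\mathcal{M}(p,p')$ and the assignment is well defined.

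Next I would prove surjectivity together with the two-to-one multiplicity by inverting $c(\cdot)$. Fixing a minimal model with central charge $c_{p,p'}$ and clearing denominators in $c(\kappa)=c_{p,p'}$ yields the quadratic $3\kappa^2-(26-2c_{p,p'})\kappa+48=0$, whose two roots have product $48/3=16$ and are therefore interchanged by the involution $\kappa\mapsto 16/\kappa$. Since the forward computation shows $\kappa_{p,p'}=4p/p'$ is one root (and $\kappa_{p',p}=4p'/p$ is the other, as $16/(4p/p')=4p'/p$), these are precisely the two speeds of (\ref{ck2}). They are distinct because $p>p'$ gives $4p/p'\neq 4p'/p$, and both are admissible rational speeds of the required form since the coprime pairs $(p,p')$ and $(p',p)$ each consist of integers exceeding one. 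Because a reduced fraction determines its coprime numerator and denominator uniquely, any $\kappa_{q,q'}$ mapping to $c_{p,p'}$ must equal one of these two, so every minimal model has exactly two preimages.

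I expect no deep obstacle: the only matters requiring care are arithmetic bookkeeping rather than any substantive difficulty. The points to watch are that the constant term $48$ is nonzero, so $\kappa=0$ is never a spurious root of the quadratic; that the quadratic contributes no third solution; and, most importantly, that the coprimality and $q,q'>1$ conditions are genuinely preserved under the swap $q\leftrightarrow q'$, so that the excluded degenerate families $\kappa=4r$ and $\kappa=4/r$ stay excluded and $p,p'$ remain coprime. Since the central identity is an elementary rational-function manipulation, the main effort is confirming these side conditions, after which Fact \ref{fact1} follows by collecting the forward map and its two-valued inverse.
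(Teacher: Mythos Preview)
Your proposal is correct and follows essentially the same approach as the paper: the paper establishes Fact~\ref{fact1} precisely by asserting the implications (\ref{ck}) and (\ref{ck2}) and then summarizing them as the stated two-to-one correspondence, and your argument simply fills in the algebraic verifications of those two implications that the paper leaves implicit. Your additional observations (the symmetric form $13-6(q^2+q'^2)/qq'$, the quadratic with root product $16$, and the coprimality bookkeeping) are all sound and make explicit what the paper takes for granted.
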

Now, the set of all exceptional speeds (\ref{exceptional}) is identical to the subset of rational speeds described in fact \ref{fact1} above, except that the former includes speeds of the form $\kappa=4r$ with $r\in\mathbb{Z}^++1$, while the latter does not.  Because $4r\geq8$, we see that these two sets do agree over the range $\kappa\in(0,8)$ to which the results of this article and its predecessors \cite{florkleb,florkleb2,florkleb3} apply.  Observing also that $\kappa_{p,p'}<8$ if and only if $\kappa_{p',p}>2$ in (\ref{ck2}), we surmise the following from fact \ref{fact1}:
\begin{fact}\label{fact2} The set of all exceptional speeds $\kappa_{q,q'}\in(2,8)$ (\ref{exceptional}) corresponds two-to-one with a subset of CFT minimal models $\mathcal{M}(p,p')$.  In this correspondence, we have $p=\text{max}\,\{q,q'\}$ and $p'=\text{min}\,\{q,q'\}$.
\end{fact}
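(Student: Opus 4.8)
The plan is to obtain Fact \ref{fact2} as a restriction of the two-to-one correspondence already established in Fact \ref{fact1}. The first step is to identify, within the window $(2,8)$, the exceptional speeds (\ref{exceptional}) with the rational speeds $\kappa_{q,q'}$ having $q,q'>1$ that appear in Fact \ref{fact1}. The only discrepancy between the two sets is that the exceptional speeds also include those of the form $\kappa=4r$ (that is, $q'=1$ and $q=r>1$), but these satisfy $\kappa\geq 8$ and so lie outside $(2,8)$. Conversely, within $(2,8)$ the exceptional condition $q>1$ already forces $q'>1$: if $q'=1$ then $\kappa=4q$, and $4q\in(2,8)$ with $q\in\mathbb{Z}^+$ would force $q=1$, contradicting $q>1$. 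Hence on $(2,8)$ the exceptional speeds coincide exactly with the rational speeds of Fact \ref{fact1}, and the assignment $p=\max\{q,q'\}$, $p'=\min\{q,q'\}$ carries over verbatim.

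The second step is to determine which minimal models survive the restriction to $(2,8)$. Each minimal model $\mathcal{M}(p,p')$ with $1<p'<p$ is matched in Fact \ref{fact1} (via (\ref{ck2})) to the two speeds $\kappa_{p,p'}=4p/p'$ and $\kappa_{p',p}=4p'/p$. The key observation, which I would verify directly from these definitions, is the reciprocal relation $\kappa_{p,p'}\,\kappa_{p',p}=16$, from which one reads off that $\kappa_{p,p'}<8$ if and only if $\kappa_{p',p}>2$, both being equivalent to $p<2p'$. Since $\kappa_{p,p'}>4>2$ and $\kappa_{p',p}<4<8$ hold automatically, this means: if $p<2p'$ then both speeds lie in $(2,8)$, while if $p>2p'$ (equality being impossible because $\gcd(p,p')=1$ and $p'>1$) then $\kappa_{p,p'}>8$ and $\kappa_{p',p}<2$, so neither lies in $(2,8)$.

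Consequently each minimal model contributes either both of its two speeds or neither to the window $(2,8)$; no model contributes exactly one. This is precisely what guarantees that the restriction of the map of Fact \ref{fact1} preserves its two-to-one character. The final step is then to assemble these observations: the exceptional speeds in $(2,8)$ are exactly the two speeds $\kappa_{p,p'},\kappa_{p',p}$ of those minimal models with $p<2p'$, so the correspondence restricts to a two-to-one map onto the subset $\{\mathcal{M}(p,p')\,:\,p<2p'\}$, with $p=\max\{q,q'\}$ and $p'=\min\{q,q'\}$, as claimed.

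I do not anticipate a genuine obstacle here, since the statement is essentially a bookkeeping refinement of Fact \ref{fact1}. The one point requiring care is the verification that the two speeds attached to a single minimal model always enter or leave the window $(2,8)$ together, which is exactly what the identity $\kappa_{p,p'}\,\kappa_{p',p}=16$ secures; were this to fail for some model, the restricted map could become one-to-one on that model and the two-to-one conclusion would break.
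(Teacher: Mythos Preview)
Your proposal is correct and follows essentially the same approach as the paper. The paper's argument is the brief paragraph preceding Fact~\ref{fact2}: it notes that exceptional speeds and the rational speeds of Fact~\ref{fact1} agree on $(0,8)$ because the extra speeds $4r$ with $r>1$ are all $\geq 8$, and then observes that $\kappa_{p,p'}<8$ if and only if $\kappa_{p',p}>2$; your argument expands on this with the product identity $\kappa_{p,p'}\kappa_{p',p}=16$ and goes further to explicitly identify the subset of minimal models as $\{\mathcal{M}(p,p')\,:\,p<2p'\}$, which the paper leaves implicit.
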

\noindent
On the other hand, if $\kappa_{p',p}<2$ in (\ref{ck2}), then we have $\kappa_{p,p'}>8$.  Because this second speed is outside the range $\kappa\in(0,8)$ to which our results apply, we state the correspondence of fact \ref{fact1} differently for $\kappa_{q,q'}\in(0,2]$:
\begin{fact}\label{fact3} The set of all exceptional speeds $\kappa_{q,q'}\in(0,2]$ (\ref{exceptional}) corresponds one-to-one with the subset of CFT minimal models $\mathcal{M}(p,p')$ excluded from fact \ref{fact2}.  In this correspondence, we have $p=q'$ and $p'=q$.
\end{fact}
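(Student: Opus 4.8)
The plan is to obtain Fact \ref{fact3} as the complementary half of the analysis already underlying Fact \ref{fact2}, reusing the two-to-one correspondence of Fact \ref{fact1} but now tracking which of a minimal model's two associated speeds survives the restriction $\kappa\in(0,8)$. First I would recall from (\ref{ck2}) that a minimal model $\mathcal{M}(p,p')$ has exactly two rational speeds in its preimage under the central-charge map, namely $\kappa_{p,p'}=4p/p'$ and $\kappa_{p',p}=4p'/p$; these satisfy $\kappa_{p,p'}\kappa_{p',p}=16$ and, since $p>p'$, obey $\kappa_{p,p'}>4>\kappa_{p',p}$. The single inequality $p\lessgtr 2p'$ controls everything, because $\kappa_{p,p'}<8\iff p<2p'\iff\kappa_{p',p}>2$ (the observation recorded just above Fact \ref{fact2}), and because $p=2p'$ is impossible when $\gcd(p,p')=1$ and $p'>1$, so every minimal model falls strictly into one of two cases.

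Next I would separate the cases. When $p<2p'$, both speeds lie in $(2,8)$ and both are genuinely exceptional (their reduced numerators $p$ and $p'$ both exceed $1$); this is exactly the content of Fact \ref{fact2}. When $p>2p'$, the larger speed $\kappa_{p,p'}=4p/p'>8$ leaves the admissible range, while $\kappa_{p',p}=4p'/p$ lies in $(0,2)$ and remains an exceptional speed (\ref{exceptional}) because its reduced index $q=p'$ satisfies $q>1$. Thus for each minimal model excluded from Fact \ref{fact2}, precisely one of its two preimage speeds survives in $(0,8)$, and that speed lies in $(0,2]$.

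To finish I would verify that the resulting map is a bijection with the stated indexing. For surjectivity onto the excluded models, the previous paragraph exhibits a preimage speed $\kappa_{p',p}\in(0,2]$ for each excluded $\mathcal{M}(p,p')$; writing this speed as $\kappa_{q,q'}$ in lowest terms gives $q=p'$ and $q'=p$, i.e.\ $p=q'$ and $p'=q$ as claimed. For injectivity, any exceptional speed $\kappa_{q,q'}\in(0,2]$ determines $(q,q')$ uniquely through its reduced form; since $\kappa_{q,q'}\leq 2<4$ forces $q'>q$, Fact \ref{fact1} assigns it to $\mathcal{M}(p,p')$ with $p=\max\{q,q'\}=q'$ and $p'=\min\{q,q'\}=q$, and the bound $\kappa_{q,q'}\leq 2$ gives $q'\geq 2q$, hence $q'>2q$ by coprimality, i.e.\ $p>2p'$, confirming that the model is indeed one excluded from Fact \ref{fact2}. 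Distinct speeds therefore produce distinct excluded models, completing the one-to-one correspondence.

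The step I expect to require the most care is not any computation but the bookkeeping at the boundary $\kappa=2$: I must invoke coprimality together with $p'>1$ to rule out $p=2p'$, which is precisely what guarantees that the ranges $(2,8)$ and $(0,2]$ genuinely partition the in-range exceptional speeds, with none landing on the seam and none omitted. I should equally be careful to confirm at each stage that the surviving speed is \emph{exceptional} (reduced numerator $>1$) rather than merely rational, a property that holds exactly because minimal models require $p'>1$.
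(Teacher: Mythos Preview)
Your proposal is correct and follows the same approach the paper uses: the paper's entire justification for Fact~\ref{fact3} is the single sentence preceding it (``if $\kappa_{p',p}<2$ in (\ref{ck2}), then we have $\kappa_{p,p'}>8$''), which is precisely the product relation $\kappa_{p,p'}\kappa_{p',p}=16$ you invoke. You have simply supplied the careful bookkeeping---the coprimality argument ruling out $p=2p'$, the verification that the surviving speed is exceptional, and the explicit bijectivity check---that the paper leaves implicit.
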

\noindent
Although our results do not involve the range $\kappa\in[8,\infty)$, we note from fact \ref{fact1} that only in this range, not every exceptional speed corresponds to a CFT minimal model:
\begin{fact} The set of all exceptional speeds $\kappa_{q,q'}\in[8,\infty)$ (\ref{exceptional}) with $q'>1$ corresponds one-to-one with the subset of CFT minimal models $\mathcal{M}(p,p')$ excluded from fact \ref{fact2}.  In this correspondence, we have $p=q$ and $p'=q'$.
\end{fact}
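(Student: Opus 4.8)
The plan is to exhibit an explicit bijection and verify it directly, reusing the characterizations already established for facts \ref{fact2} and \ref{fact3}. First I would translate the defining inequality $\kappa_{q,q'}\in[8,\infty)$ into a condition on the coprime pair $(q,q')$: since $\kappa_{q,q'}=4q/q'$, the bound $\kappa_{q,q'}\geq8$ is equivalent to $q\geq2q'$, and the coprimality of $q,q'$ together with the hypothesis $q'>1$ forces the inequality to be strict, $q>2q'$ (equality $q=2q'$ would give $q'\mid q$, hence $q'=1$). In particular the restriction $q'>1$ makes the endpoint $\kappa=8$ unattainable, so the range is effectively $(8,\infty)$. Having $q>2q'>q'$, I record that $p:=\max\{q,q'\}=q$ and $p':=\min\{q,q'\}=q'$, and that $1<p'<p$ with $\gcd(p,p')=1$, so $(p,p')$ indexes a genuine minimal model $\mathcal{M}(p,p')$.

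Next I would identify the target set. By the computation (\ref{ck}) already carried out, $c(\kappa_{q,q'})=c_{p,p'}$ with this same $p,p'$, so the speed $\kappa_{q,q'}$ corresponds to $\mathcal{M}(q,q')$. The crux is to recognize that this minimal model lies in the subset excluded from fact \ref{fact2}. Here I invoke the observation made just before fact \ref{fact2}, namely $\kappa_{p,p'}<8\iff\kappa_{p',p}>2\iff p<2p'$: the models captured two-to-one by fact \ref{fact2} are exactly those with $p<2p'$ (both associated speeds $\kappa_{p,p'},\kappa_{p',p}$ then fall in $(2,8)$), whereas the complementary, excluded models are precisely those with $p>2p'$ (the case $p=2p'$ being impossible by coprimality with $p'>1$). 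Since our speed forces $p=q>2q'=2p'$, the model $\mathcal{M}(q,q')$ is excluded, and indeed it belongs to the same family already parametrized one-to-one in fact \ref{fact3}.

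With the map $\Phi:\kappa_{q,q'}\mapsto\mathcal{M}(q,q')$ (i.e.\ $p=q$, $p'=q'$) well-defined into the excluded subset, I would finish by checking it is a bijection. Injectivity follows from the uniqueness of the reduced representation: the value $\kappa_{q,q'}/4=q/q'$ is a fraction in lowest terms, so $\kappa_{q,q'}$ determines $(q,q')$, hence $(p,p')$, hence the model. For surjectivity, given any excluded model $\mathcal{M}(p,p')$ (so $p>2p'$, $p'>1$, coprime), I set $q=p$ and $q'=p'$; then $\kappa_{q,q'}=4p/p'>8$ is an exceptional speed with $q=p>1$ and $q'=p'>1$, and it maps back to $\mathcal{M}(p,p')$. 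This yields the announced one-to-one correspondence with $p=q$, $p'=q'$.

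The only genuinely delicate point is the bookkeeping of the two rational speeds attached to each excluded model: each such $\mathcal{M}(p,p')$ carries one speed $\kappa_{p',p}=4p'/p<2$ (picked up by fact \ref{fact3}) and one speed $\kappa_{p,p'}=4p/p'>8$ (picked up here), and I must ensure these two one-to-one correspondences use the opposite vertex-labelings ($p=q'$ in fact \ref{fact3} versus $p=q$ here) so that together they account for both speeds of the excluded models without double counting, while fact \ref{fact2} accounts two-to-one for the remaining models with $p<2p'$. Once the dichotomy $p\gtrless2p'$ is pinned down as the separating condition, the remainder is routine inequality-chasing with the coprimality constraint.
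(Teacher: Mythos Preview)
Your proof is correct. The paper does not prove this fact; it merely states it as a consequence of fact~\ref{fact1} together with the observation $\kappa_{p,p'}<8\iff\kappa_{p',p}>2$ made just before fact~\ref{fact2}. Your explicit verification---translating $\kappa_{q,q'}\ge 8$ into $q>2q'$ via coprimality with $q'>1$, identifying the excluded models as exactly those with $p>2p'$, and checking bijectivity---is precisely the routine unpacking the paper leaves to the reader, and it follows the same dichotomy $p\gtrless 2p'$ that underlies the paper's informal discussion of facts~\ref{fact2}--\ref{fact3}.
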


Now we consider the reason for the connection between exceptional speeds (\ref{exceptional}) and CFT minimal models conveyed in facts \ref{fact2} and \ref{fact3} above.  The fact that $\kappa\in(0,8)$ is an exceptional speed $\kappa_{q,q'}$ (\ref{exceptional}) with $q\leq N+1$ if and only if $\mathcal{B}_N$ is linearly dependent (lemma \red{6} of \cite{florkleb3}) is the former's distinguishing property.  And what distinguishes minimal models from other CFTs is that they may be constructed from  a finite collection of primary operators \cite{bpz, fms, henkel}.  Thus, (\ref{ck}) suggests that these properties are related.

Before we propose a reason for this relation, we recall some relevant facts about the structure of CFT.  First, in a CFT with central charge $c\leq1$, the Verma module $V(c,h)$ is reducible if its conformal weight $h$ is among
\be\label{hrs}h_{r,s}(\kappa)=\frac{1-c(\kappa)}{96}\Bigg[\Bigg(r+s+(r-s)\sqrt{\frac{25-c(\kappa)}{1-c(\kappa)}}\,\Bigg)^2-4\Bigg]=\frac{1}{16\kappa}\begin{cases}(\kappa r-4s)^2-(\kappa-4)^2,&\kappa>4\\(\kappa s-4r)^2-(\kappa-4)^2,&\kappa\leq4\end{cases},\quad r,s\in\mathbb{Z}^+,\ee
(called a \emph{Kac weight}) because the module then harbors a level $rs$ null-state vector.  CFT translates this reducibility into a ``null-state" PDE that governs any correlation function containing the primary operator $\phi_{r,s}$, called a \emph{Kac operator}, with weight $h_{r,s}$ \cite{bpz, fms, henkel}.  For example, we identify the one-leg boundary operator $\psi_1(x_j)$ with the Kac operator $\phi_{1,2}(x_j)$ (resp.\,$\phi_{2,1}(x_j)$) if $\kappa>4$ (resp.\,$\kappa\leq4$) in appendix \red{A} of \cite{florkleb}.  Thus, the correlation function
\be\label{corrfunc} 
F(\boldsymbol{x})=\langle\psi_1(x_1)\psi_1(x_2)\dotsm\psi_1(x_{2N})\rangle,\quad\psi_1(x):=\begin{cases}\phi_{1,2}(x),&\kappa>4 \\ \phi_{2,1}(x),&\kappa\leq 4\end{cases}\ee
satisfies a distinct PDE associated with $\psi_1(x_j)$ for each $j\in\{1,2,\ldots,2N\}$ \cite{florkleb}, the null-state PDE centered on $x_j$ (\ref{nullstate}), in addition to the conformal Ward identities (\ref{wardid}) that any CFT correlation function of primary operators must satisfy.

Next, we recall two properties of $\mathcal{M}(p,p')$.  First, its operator content entirely comprises the conformal families of the Kac operators $\phi_{r,s}$ with $1\leq r<p'$ and $1\leq s< p$ \cite{bpz, fms, henkel}.  Second, the Verma module $V_{r,s}:=V(h_{r,s},c_{p,p'})$ has not just one null-state vector at level $l=rs$, but an infinite tower of null-state vectors at levels \cite{bpz, fms, henkel}
\be\label{nulllevels} l\in\{rs,\quad(p'-r)(p-s),\quad rs+(p'-r)(p+s),\quad rs+(p'+r)(p-s),\quad\ldots\}.\ee  
Therefore, a correlation function containing the $\phi_{r,s}$ Kac operator (such as (\ref{corrfunc}), where $(r,s)=(1,2)$ or $(2,1)$)  satisfies not just a single level-$rs$ null-state PDE, but an infinite system of null-state PDEs at different levels.  In most cases, the precise form of these PDEs is not explicitly known, although \cite{bauerfranc} gives a (non-trivial) recipe for finding them.

According to theorem \ref{maintheorem}, $\mathcal{B}_N$ is linearly dependent if and only if $\kappa\in(0,8)$ is an exceptional speed $\kappa_{q,q'}$ (\ref{exceptional}) with $q\leq N+1$.  Now, we posit that if $\kappa=\kappa_{q,q'}$, then the span of $\mathcal{B}_N$ equals the span of all $2N$-point correlation functions of one-leg boundary operators (\ref{corrfunc}).  This means that every element of $\mathcal{B}_N$ must satisfy not just the original system of null-state PDEs (\ref{nullstate}), but the entire infinite system described in the previous paragraph.  This requirement may explain the linear dependence of $\mathcal{B}_N$.  In fact, we propose something a little stronger.
\begin{conj}\label{minmodelconj}Let $\mathcal{R}_N\subset\mathcal{S}_N$ be the solution space for the infinite collection of null-state PDEs governing the correlation function (\ref{corrfunc}) with $\kappa\in(0,8)$  an exceptional speed $\kappa_{q,q'}$ (\ref{exceptional}) with $q\leq N+1$. Then $\text{span}\,\mathcal{B}_N=\mathcal{R}_N$. 
\end{conj}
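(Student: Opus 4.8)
The plan is to establish the two inclusions $\text{span}\,\mathcal{B}_N\subseteq\mathcal{R}_N$ and $\mathcal{R}_N\subseteq\text{span}\,\mathcal{B}_N$ separately, with the second being the substantive one.

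For the first inclusion, because $\mathcal{R}_N$ is the solution space of a linear system and the span is itself linear, it suffices to show that each Coulomb gas function $\mathcal{F}_\vartheta\in\mathcal{B}_N$ satisfies every null-state PDE in the infinite tower indexed by (\ref{nulllevels}). The level-$rs=2$ equation is already (\ref{nullstate}), which every $\mathcal{F}_\vartheta$ obeys by construction. For the higher-level equations I would use the Coulomb gas (Felder) realization directly: each $\mathcal{F}_\vartheta$ is a screened correlator (\ref{corrvert}) of degenerate chiral vertex operators $V_{1,2}^\pm$ realizing $\psi_1=\phi_{1,2}$ (or $\phi_{2,1}$), whose Fock module at the minimal-model charge $c=c_{p,p'}$ carries the full singular-vector structure. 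Each singular vector at level $l$ translates, through the conformal Ward identities, into a differential operator $\mathcal{D}_l$ acting on the correlator; applying $\mathcal{D}_l$ to the integrand of (\ref{Fexplicit}) and integrating by parts over the closed Pochhammer screening contours $\Gamma_1,\ldots,\Gamma_{N-1}$ should leave only the Fock-space singular-vector relation plus boundary terms that vanish because the contours are closed. Making this rigorous amounts to adapting Felder's BRST argument to the boundary (half-plane) setting; the one-variable case is standard Dotsenko-Fateev, and the multi-variable bookkeeping parallels the derivation that already placed $\mathcal{F}_\vartheta$ in $\mathcal{S}_N$ in \cite{florkleb3}.

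For the reverse inclusion I would argue dimensionally. By the first inclusion and the injectivity of the map $v$ of theorem \ref{maintheorem}, we have $C_N-d\le\dim\mathcal{R}_N\le C_N$, where $d:=\dim\ker(M_N\circ n)(\kappa)$ is the corank of the meander matrix at the exceptional speed, so that $\text{rank}\,\mathcal{B}_N=C_N-d$. It therefore suffices to prove the upper bound $\dim\mathcal{R}_N\le C_N-d$. Since $v$ restricts to an injection on $\mathcal{R}_N$ and sends $\mathcal{F}_\vartheta$ to the $\vartheta$th column of $M_N\circ n$ by (\ref{LkFk}), this is equivalent to $v(\mathcal{R}_N)\subseteq\text{col}(M_N\circ n)$, i.e.\ (the meander matrix being symmetric) to showing that every kernel vector $\boldsymbol{a}_\varsigma\in\ker(M_N\circ n)$ yields a linear functional $F\mapsto\sum_\vartheta a_{\varsigma,\vartheta}[\mathscr{L}_\vartheta]F$ that annihilates $\mathcal{R}_N$. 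The content of this is that the very linear dependencies (\ref{lindep}) among the $\mathcal{F}_\vartheta$ are dual to relations that the higher null-state PDEs impose on the identity-channel limits $[\mathscr{L}_\vartheta]$ of any genuine minimal-model correlator: at $c=c_{p,p'}$ the fusion $\phi_{1,2}\times\phi_{1,2}$ truncates, so certain fusion channels surviving for generic $\kappa$ become forbidden, and each forbidden channel is exactly one such relation.

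The hard part is this upper bound, which reduces to the identity $\dim\mathcal{R}_N=\text{rank}(M_N\circ n)(\kappa)$ at $n=n_{q,q''}$ (\ref{thezeros}). I would attack it from both sides and match: on the CFT side, compute $\dim\mathcal{R}_N$ as the number of conformal blocks of $\langle\phi_{1,2}^{\otimes 2N}\rangle$ in $\mathcal{M}(p,p')$ by counting admissible fusion paths under the truncated Kac-table fusion rules; on the combinatorial side, invoke the known evaluation of the rank of the meander (Gram) matrix of the Temperley-Lieb algebra at loop weight $n_{q,q''}$. Both counts are governed by the same restricted-height/Temperley-Lieb data, so they should coincide, but converting this morally-clear agreement into a proof is where the difficulty lies: one must identify the abstract Temperley-Lieb quotient controlling the meander rank with the combinatorial object counting minimal-model blocks, and—more delicately—rule out any solution of the full infinite system escaping $\text{span}\,\mathcal{B}_N$, i.e.\ show the higher PDEs impose no constraint beyond those already visible through $v$. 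This last point, an upper bound on the solution space of an infinite and only implicitly known system (\ref{nulllevels}), is the genuine obstacle and the reason the statement is posed as a conjecture.
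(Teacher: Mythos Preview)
The statement is a \emph{conjecture}, and the paper does not prove it. The authors explicitly write that ``this conjecture is difficult to state more precisely because the infinite system of PDEs is very difficult to determine entirely,'' and the only case they verify completely is $\kappa=\kappa_{3,2}=6$ (the $\mathcal{M}(3,2)$ model), where the extra null-state PDEs include the trivial system $\partial_jF=0$ and one checks directly that $\text{span}\,\mathcal{B}_N=\mathbb{R}=\mathcal{R}_N$. For $\mathcal{M}(4,3)$ and the other $q=3$ models they check only that the unique element of $\mathcal{B}_2$ satisfies the level-three PDE (\ref{13nullstate}) and that nothing outside $\text{span}\,\mathcal{B}_2$ does, giving $\dim\mathcal{R}_2\le 1$; they do not verify the full infinite tower even in that case.

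Your outline is therefore not competing against a proof in the paper but rather sketching what a proof would require. You have the architecture right and you correctly identify the genuine obstacle: the upper bound $\dim\mathcal{R}_N\le\text{rank}(M_N\circ n)$, which demands controlling an infinite, only implicitly known system of PDEs. Your proposed route---matching the Temperley--Lieb/meander rank to a count of minimal-model conformal blocks via restricted fusion paths---is a natural strategy and goes well beyond anything the paper attempts. One point to flag: even the ``easy'' inclusion $\text{span}\,\mathcal{B}_N\subseteq\mathcal{R}_N$ is not as routine as your sketch suggests. The paper regards it as nontrivial already for the single level-three equation at $\kappa=3$, and the Felder/BRST argument you invoke has, to the authors' knowledge, not been carried out rigorously in this boundary multi-screening setting. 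So both inclusions are open, not just the reverse one.
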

\noindent
This conjecture is difficult to state more precisely because the infinite system of PDEs is very difficult to determine entirely.  However, if it holds, then (see the discussion surrounding (\ref{LkFk}--\ref{exceptional}) in the introduction \ref{intro} and corollary \red{7} of \cite{florkleb3})
\be\dim\mathcal{R}_N=\text{rank}\,\mathcal{B}_N=\text{rank}\,M_N\circ n\ee
where, again, $M_N$ is the $C_N\times C_N$ meander matrix \cite{fgg,fgut,difranc,franc}.  According to \cite{franc}, the rank of $M_N$ equals the multiplicity $d_N(q,q'')$ of the zero $n_{q,q''}$ (\ref{thezeros}) of the meander determinant for any $q''\in\{1,2,\ldots,N\}$.  Explicitly, this is \cite{fgg}
\be\label{rankzero}\text{rank}\,M_N=d_N(q,q'')=\frac{1}{2q}\sum_{p=1}^{q-1}\left(2\sin\frac{\pi p}{q}\right)^2\left(2\cos\frac{\pi p}{q}\right)^{2N}.\ee
We note that this formula in fact does not depend on $q''$, and only the first index $q$ of the exceptional speed $\kappa_{q,q'}$ enters.  Ref.\ \cite{fgg} shows that (\ref{rankzero}) is a positive integer.

In what follows, we explore conjecture \ref{minmodelconj} further for the zeros (\ref{thezeros}) $n_{2,1}$, $n_{3,1}$, $n_{3,2}$, and $n_{N+1,q''}$ with $q''\in\{1,2\ldots,N\}$ of the meander determinant.  We focus our attention mainly on solutions of (\ref{thezeros}) with $\kappa\geq2$. In this range, the scaling limit of loops in the loop-gas representation of the O$(n)$-model are conjectured to (locally) have the law of SLE$_\kappa$, with the loop fugacity $n$ and $\kappa$ related through (\ref{LkFk}) \cite{gruz, rgbw, smir4,smir}.  Because many critical lattice models may be mapped onto this loop-gas representation, their SLE$_\kappa$ descriptions therefore also have $\kappa\geq2$.

Now if $n_{q,q''}\in[0,2)$, then exactly two speeds in the range $\kappa\in[2,8]$ satisfy (\ref{thezeros}).  They are
\be\label{minimalmodelkappa}\kappa_{q,q''}=4q/q''\in(4,8],\quad\kappa_{q,2q-q''}=4q/(2q-q'')\in[8/3,4),\ee
and they respectively belong to the dense ($\kappa>4$) and dilute ($\kappa<4$) phase of both SLE$_\kappa$ and the O$(n)$ model.  One well-studied example is $n_{2,1}=0$, corresponding to $\kappa_{2,1}=8$ for the uniform spanning tree \cite{lsw} and $\kappa_{2,3}=8/3$ for the self-avoiding walk \cite{lsw2}.  Another is $n_{3,2}=1$, corresponding to $\kappa_{3,2}=6$ (percolation cluster boundaries) \cite{smir2, lsw3} and $\kappa_{3,4}=3$ (Ising spin cluster boundaries) \cite{smir3}.  On the other hand, if $n_{q,q''}\in[-2,0)$, then exactly one exceptional speed in the range $[2,8]$ satisfies (\ref{thezeros}).  It is $\kappa_{q,2q-q''}\in[2,8/3)$, and it is in the dilute phase of SLE$_\kappa$ and the O$(n)$ model.

\subsection{The case $n_{2,1}=0$}

From (\ref{thezeros}), we see that if $\kappa\in(0,8)$ and $n(\kappa)=n_{2,1}=0$, then $\kappa=8/r$ for some odd $r>1$.  This case has some distinctive features that we explore.  Here, $\mathcal{B}_N$ exhibits $d_N(2,1)=C_N$ \cite{fgg} distinct linear dependencies. Because its cardinality is also $C_N$, each of its elements equals zero.  In \cite{florkleb3}, the proof of theorem \red{8} shows that the set
\be\label{BN'}\mathcal{B}_N^{\scaleobj{0.75}{\bullet}}=\left\{\mathcal{F}_1^{\scaleobj{0.75}{\bullet}}:=\lim_{\varkappa\rightarrow\kappa}n(\varkappa)^{-1}\mathcal{F}_1(\varkappa),\quad\mathcal{F}_2^{\scaleobj{0.75}{\bullet}}:=\lim_{\varkappa\rightarrow\kappa}n(\varkappa)^{-1}\mathcal{F}_2(\varkappa),\quad\ldots,\quad\mathcal{F}_{C_N}^{\scaleobj{0.75}{\bullet}}:=\lim_{\varkappa\rightarrow\kappa}n(\varkappa)^{-1}\mathcal{F}_{C_N}(\varkappa)\right\}\ee
(defined slightly differently from what is shown in that proof for simplicity) has full rank and is therefore a basis for $\mathcal{S}_N$.  We note that the change from $\mathcal{B}_N$ to $\mathcal{B}_N^{{\scaleobj{0.75}{\bullet}}}$ amounts to dropping a single vanishing factor of $n(\kappa)$ from the formula (\ref{Fexplicit}) of every element of $\mathcal{B}_N$, leaving a function that is not zero.

The correspondence between this case and its associated minimal models is trivial.  Indeed, the solutions $\kappa=8/r$ with $r>1$ odd of the equation $n(\kappa)=n_{2,1}=0$ are also the exceptional speeds $\kappa_{2,4m\pm1}$ with $m\in\mathbb{Z}^+$.  These correspond with the $\mathcal{M}(p,2)$ minimal models with $p=4m\pm1>2$.  (We note that $\{4m\pm1\,|\,m\in\mathbb{Z}^+\}$ is the set of all odd positive integers greater than one.)  Because $\kappa_{2,4m\pm1}\leq4$, the one-leg boundary operator $\psi_1$ is now the Kac operator $\phi_{2,1}$, which does not appear in this model.  Thus, the correlation functions (\ref{corrfunc}) do not exist at all, and the set $\mathcal{B}_N$ that would contain them instead contains only zero.

\subsection{The cases $n_{3,2}=1$ and $n_{3,1}=-1$.}\label{n32sect}

From (\ref{thezeros}), we see that if $\kappa\in(0,8)$ and $n(\kappa)=n_{3,2}=1$, then $\kappa=\kappa_{3,2}=6$ or $\kappa_{3,6m\pm2}<4$ for some $m\in\mathbb{Z}^+$, corresponding with the $\mathcal{M}(3,2)$ and $\mathcal{M}(6m\pm2,3)$ minimal models respectively.  Among these, the $\mathcal{M}(3,2)$ (resp.\  $\mathcal{M}(4,3)$) model, after being appropriately extended \cite{matrid}, may be used to calculate correlation functions of critical percolation clusters (resp.\ Ising model spin clusters).  In all of these cases, $\mathcal{B}_N$ exhibits $d_N(3,2)=C_N-1$ \cite{fgg} distinct linear dependencies, and because its cardinality is $C_N$, we infer that all of its elements are multiples of each other.  Furthermore, the image of each element under the map $v$ of item \ref{fifthitem} in theorem \ref{maintheorem} is a vector with all components one, thanks to (\ref{LkFk}).  Because $v$ is a bijection, we infer that all elements of $\mathcal{B}_N$ are the same nonzero function $\mathcal{F}_1$.

Using these facts and conjecture \ref{minmodelconj}, we study the $\mathcal{M}(3,2)$ minimal model.  Because $\kappa_{3,2}=6>4$, the one-leg boundary operator $\psi_1$ (\ref{corrfunc}) is now the Kac operator $\phi_{1,2}$, and with $h_{1,2}(6)=h_{1,1}(6)$ (\ref{hrs}), we identify it with $\phi_{1,1}$ in this model.  This identification implies that the correlation functions (\ref{corrfunc}) satisfy the null-state PDEs associated with $\phi_{1,1}$, which are
\be\label{subsystem}\partial_j F(\boldsymbol{x})=0,\quad j\in\{1,2,\ldots,2N\},\ee
in addition to the original system (\ref{nullstate}, \ref{wardid}).  They also satisfy the infinite collection of other null-state PDEs described above \cite{fms} (all of which are linear, homogeneous, and lack a constant term if $\kappa=\kappa_{3,2}$), but the sub-system (\ref{subsystem}) alone is enough to see that $\mathbb{R}$ is the solution space $\mathcal{R}_N$ for this infinite system.  Now to verify our conjecture that $\text{span}\,\mathcal{B}_N=\mathcal{R}_N$ for this case, we check that $\text{span}\,\mathcal{B}_N=\mathbb{R}$ too.  Because each entry of the meander matrix $(M_N\circ n)(\kappa_{3,2})$ is one, the map $v$ of item \ref{fourthitem} in theorem \ref{maintheorem} sends the single unique element $\mathcal{F}_1\in\mathcal{B}_N$ to the image $v(1)$ of the constant solution $1\in\mathcal{S}_N$.  (Indeed, it is easy to see that constants satisfy (\ref{nullstate}, \ref{wardid}) if $\kappa=\kappa_{3,2}$.)
Furthermore, because $v$ is a bijection, we must have $\mathcal{F}_1=1$, so $\text{span}\,\mathcal{B}_N=\mathbb{R}=\mathcal{R}_N$.  This is  the only case for which we have completely verified conjecture \ref{minmodelconj}.

Incidentally, we may use the result that $\mathcal{F}_1=1$ to indirectly evaluate the Coulomb gas integral $\mathcal{J}$ that appears in each element of $\mathcal{B}_N$ if $\kappa=\kappa_{3,2}=6$.  Equation (\ref{Fexplicit}) gives an explicit formula for $\mathcal{F}_1$ for any $c\in\{1,2,\ldots,2N\}$.  We find that for any collection $\{\Gamma_1,\Gamma_2,\ldots,\Gamma_N\}$ of simple nonintersecting contours in the upper half-plane, each with both of its endpoints distinct and among $x_1<x_2<\ldots<x_{2N}$, 
\begin{multline}\label{defint}\sideset{}{_{\Gamma_{N-1}}}\int\ldots\sideset{}{_{\Gamma_2}}\int\sideset{}{_{\Gamma_1}}\int\mathcal{N}\Bigg[\Bigg(\prod_{l\neq c}^{2N}\prod_{m=1}^{N-1}(x_l-u_m)^{-2/3}\Bigg)\Bigg(\prod_{p<q}^{N-1}(u_p-u_q)^{4/3}\Bigg)\Bigg]{\rm d}u_1\,{\rm d}u_2\dotsm {\rm d}u_{N-1}\\
=\frac{\Gamma(1/3)^{2N-2}}{\Gamma(2/3)^{N-1}}\prod_{\substack{i<j \\ i,j\neq c}}^{2N}(x_i-x_j)^{-1/3},\end{multline}
where $x_c$ is an endpoint of $\Gamma_N$ (this being the only contour along which we do not integrate in (\ref{defint})).  (We recall that the symbol $\mathcal{N}$ determines the branch of the logarithm for each individual power function in the integrand so the Coulomb gas integral (\ref{defint}) is real-valued.  See both item \red{3} of definition \red{4} and figure \red{6} of \cite{florkleb3}.)

Next, we study the $\mathcal{M}(4,3)$ minimal model.  Because $\kappa_{3,4}=3<4$, the one-leg boundary operator $\psi_1$ is now the Kac operator $\phi_{2,1}$ (\ref{corrfunc}), and with $h_{2,1}(3)=h_{1,3}(3)$ (\ref{hrs}), we identify it with $\phi_{1,3}$ in this model.  This identification implies that the correlation functions (\ref{corrfunc}) satisfy the null-state PDEs associated with $\phi_{1,3}$, which are
\begin{multline}\label{13nullstate}\Bigg[\frac{2}{\kappa}\partial_j^3+2\sum_{k\neq j}^{2N}\left(\frac{\partial_k}{x_k-x_j}-\frac{\kappa/2-1}{(x_k-x_j)^2}\right)\partial_j\\
-\left(\frac{\kappa}{2}-1\right)\sum_{k\neq j}^{2N}\left(\frac{\partial_k}{(x_k-x_j)^2}-\frac{\kappa-2}{(x_k-x_j)^3}\right)\Bigg]F(\boldsymbol{x})=0,\quad j\in\{1,2,\ldots,2N\},\end{multline}
in addition to those of the original system (\ref{nullstate}, \ref{wardid}).  They also satisfy the infinite collection of other null-state PDEs described above, but as we previously mentioned, these PDEs are not so easy to determine.  Because of this complexity, we have not completely verified our conjecture that $\text{span}\,\mathcal{B}_N=\mathcal{R}_N$ for this case.  Indeed, we would need to show that the unique element of $\mathcal{B}_N$ satisfies the sub-system (\ref{13nullstate}) and the infinite collection of other null-state PDEs first.  Then we would need to show that each element of a basis for the quotient space $\mathcal{S}_N/\mathcal{B}_N$ satisfies neither (\ref{13nullstate}) nor one of these extra PDEs next.  If $N=2$, then it is easy to show that the unique element $\mathcal{F}_1\in\mathcal{B}_2$ satisfies (\ref{13nullstate}) but no element of $\mathcal{S}_2$ outside the span of $\mathcal{B}_2$ satisfies (\ref{13nullstate}).  Thus, $\dim\mathcal{R}_2\leq1$.  Showing that $\mathcal{F}_1$ satisfies the entire infinite system would  give $\text{span}\,\mathcal{B}_2\subset\mathcal{R}_2$, but to do this explicitly is difficult.  However, if we could do it, then because $\text{rank}\,\mathcal{B}_2=1$, it would immediately follow that $\text{span}\,\mathcal{B}_2=\mathcal{R}_2$, confirming conjecture \ref{minmodelconj} for this case.

More generally, we see from (\ref{thezeros}) that if $\kappa\in(0,4]$ and $n(\kappa)=n_{3,q''}=\pm1$ with $q''\in\{1,2\}$, then $\kappa=\kappa_{3,6m\pm q''}$ for some $m\in\mathbb{Z}^+$, corresponding with the $\mathcal{M}(6m\pm q'',3)$ minimal model.  In all of these cases, $\mathcal{B}_N$ exhibits $d_N(3,q'')=C_N-1$ \cite{fgg} linear dependencies, so again, all of its elements are multiples of each other.  Because $\kappa_{3,6m\pm q''}<4$, the one-leg boundary operator $\psi_1$ is now the Kac operator $\phi_{2,1}$ (\ref{corrfunc}), and with $h_{2,1}(\kappa_{3,6m\pm q''})=h_{1,6m\pm q''-1}(\kappa_{3,6m\pm q''})$ (\ref{exceptional}, \ref{hrs}), we identify it with the first-column Kac operator $\phi_{1,6m\pm q''-1}$ in this model.  This identification implies that the correlation functions (\ref{corrfunc}) satisfy the null-state PDEs associated with $\phi_{1,6m\pm q''-1}$ in addition to those (\ref{nullstate}) of the original system (\ref{nullstate}, \ref{wardid}) that are associated with $\phi_{2,1}$.  Thanks to the Benoit-Saint-Aubin formula \cite{fms,benaub}, these PDEs are explicitly known.  Hence, we may repeat the analysis of the previous paragraph (corresponding to our present situation if $m=1$ and $q''=2$ in $\phi_{1,6m-q''-1}$) to again support conjecture \ref{minmodelconj}.

As we previously observed, all elements of $\mathcal{B}_N$ are equal if $n(\kappa)=n_{3,2}=1$.  However, in the case $n(\kappa)=n_{3,1}=-1$, this is not true, which implies an interesting identity concerning the number $l_{\varsigma,\vartheta}$ (\ref{LkFk}) of loops in the diagram for $[\mathscr{L}_\varsigma]\mathcal{F}_\vartheta$ (figure \ref{innerproduct}).  Indeed, with the rank of $\mathcal{B}_N$ still $C_N-d_N(3,1)=1$, it follows that all elements of $\mathcal{B}_N$ are multiples of each other.  However, unlike the case $n(\kappa)=n_{3,2}=1$, they are not necessarily equal.  Rather,
\be\label{Fvarth}\mathcal{F}_\vartheta=(-1)^{p_{\vartheta,\varrho}}\mathcal{F}_\varrho,\quad p_{\vartheta,\varrho}\in\{1,2\},\quad\vartheta,\varrho\in\{1,2,\ldots,C_N\},\quad n(\kappa)=n_{3,1}=-1.\ee
Indeed, this relation follows from the fact that $v$ is a bijection (item \ref{fourthitem} of theorem \ref{maintheorem}) and that either $v(\mathcal{F}_\vartheta)=v(\mathcal{F}_\varrho)$ or $v(\mathcal{F}_\vartheta)=-v(\mathcal{F}_\varrho)$ thanks to (\ref{LkFk}).  Now after acting on both sides of (\ref{Fvarth}) with $[\mathscr{L}_\varsigma]$ and using (\ref{LkFk}), we find
\be\label{loopdiff}p_{\vartheta,\varrho}=l_{\varsigma,\vartheta}-l_{\varsigma,\varrho}\mod2,\ee
where $l_{\varsigma,\vartheta}-l_{\varsigma,\varrho}$ is the difference of the number of loops appearing in the diagram for $[\mathscr{L}_\varsigma]\mathcal{F}_\vartheta$ versus $[\mathscr{L}_\varsigma]\mathcal{F}_\varrho$ (with the polygons deleted).  Interestingly, (\ref{loopdiff}) shows that this difference in modulo two is the same for all $\varsigma\in\{1,2,\ldots,C_N\}$.

\begin{figure}[b]
\centering
\includegraphics[scale=0.3]{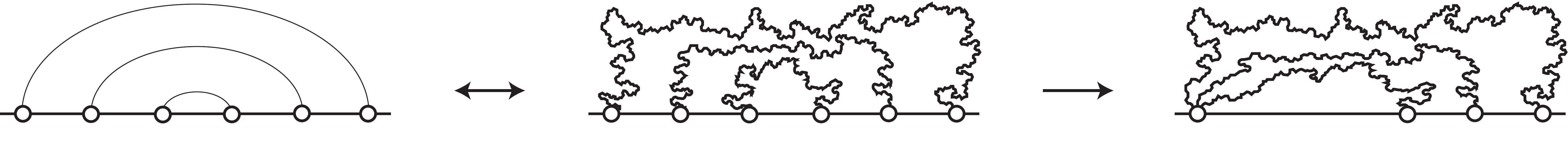}
\caption{The ``rainbow diagram" for $\Pi_5\in\mathscr{B}_3$ (left), the corresponding fifth boundary arc connectivity (middle), and the configuration found by sending $x_3,x_2\rightarrow x_1$ in (\ref{multilim}) to form the three-leg boundary operator $\psi_3(x_1)$ in (\ref{Nptop}) with $N=3$ (right).}
\label{Rainbow}
\end{figure}

\subsection{The case $n=n_{N+1,q''}$}

The number $n_{q,q''}$ (\ref{thezeros}) is a zero of the meander determinant $\det M_N\circ n$ only if $q\leq N+1$, and at $q=N+1$, its multiplicity is $d_N(N+1,q'')=1$ \cite{fgg}.  From (\ref{thezeros}), we see that if $\kappa\in(0,8)$ and $n(\kappa)=n_{N+1,q''}$, then $\kappa=\kappa_{N+1,q'}$ with $q'=q''$ or $q'=2m(N+1)\pm q''$ for some $m\in\mathbb{Z}^+$, corresponding with the $\mathcal{M}(N+1,q'')$ and $\mathcal{M}(2m(N+1)\pm q'',N+1)$ minimal models respectively.  In all of these cases, $\mathcal{B}_N$ exhibits $d_N(N+1,q'')=1$ distinct linear dependence \cite{fgg}.

Supposing that the half-plane diagram for the $C_N$th connectivity is the \emph{rainbow diagram} (figure \ref{Rainbow}), with its $j$th arc having endpoints at $x_j$ and $x_{2N-j+1}$ \cite{fgg}, we may generate a basis for $\mathcal{S}_N$ from the full-rank set $\{\mathcal{F}_1,\mathcal{F}_2,\ldots,\mathcal{F}_{C_N-1}\}\subset\mathcal{B}_N$  by adding one element not in the span of $\mathcal{B}_N$.  We choose this to be the connectivity weight $\Pi_{C_N}$.  Indeed, to prove that $\Pi_{C_N}\not\in\text{span}\,\mathcal{B}_N$, we assume otherwise and allow every $[\mathscr{L}_\varsigma]$ with $\varsigma\in\{1,2,\ldots,C_N-1\}$ to act on 
\be \Pi_{C_N}=a_1\mathcal{F}_1+a_2\mathcal{F}_2+\dotsm+a_{C_N-1}\mathcal{F}_{C_N-1},\quad a_1,a_2,\ldots,a_{C_N-1}\in\mathbb{R}.\ee
With $\boldsymbol{a}=(a_1,a_2,\ldots,a_{C_N-1})\in\mathbb{R}^{C_N-1}\setminus\{0\}$, we find the matrix equation $M\cdot\boldsymbol{a}=0$, where $M$ is the upper-left $(C_N-1)\times(C_N-1)$ sub-matrix of the meander matrix $M_N\circ n$.  According to \cite{franc}, the determinant of this matrix is $(n-n_{N+1,q''})^{-1}\det (M_N\circ n)$, which does not approach zero as $n(\kappa)\rightarrow n_{N+1,q''}$.  Hence $\boldsymbol{a}=0$, a contradiction.  From this, we conclude that $\Pi_{C_N}\not\in\text{span}\,\mathcal{B}_N$, so the following set is a basis for $\mathcal{S}_N$:
\be\label{basis}\{\mathcal{F}_1,\mathcal{F}_2,\ldots,\mathcal{F}_{C_N-1},\Pi_{C_N}\}.\ee

\begin{figure}[t]
\centering
\includegraphics[scale=0.3]{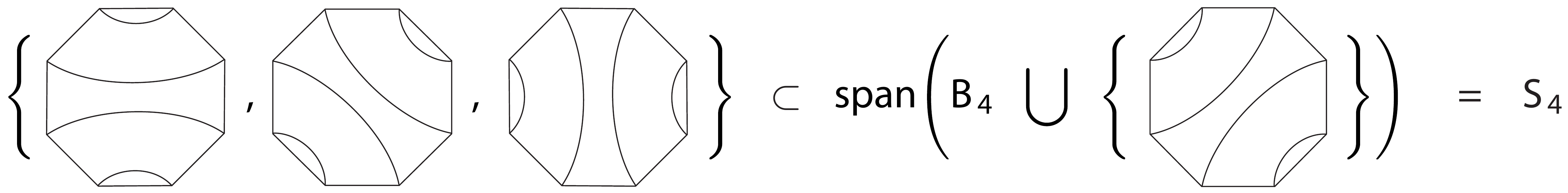}
\caption{If $\kappa=\kappa_{5,q'}$ for some $q'\in\{1,2,3,4\}$, then $\text{span}\,(\mathcal{B}_4\cup\{\Pi_{C_4}\})=\mathcal{S}_4$, and the other $4-1=3$ connectivity weights from which we may generate a four-leg boundary operator (polygon diagrams shown on the left side) are therefore in this span.}
\label{Contain}
\end{figure}

This new basis (\ref{basis}) has an interesting interpretation in terms of CFT minimal models.  According to definition \ref{sleintervaldefn} and lemma \ref{proptwoleglem}, if all of $(x_1,x_2)$, $(x_2,x_3),\ldots,(x_{N-2},x_{N-1})$, and $(x_{N-1},x_N)$ are simultaneously two-leg intervals of $F\in\mathcal{S}_N$, then no diagram of any connectivity weight with a nonzero coefficient in the decomposition (\ref{FdecompPi}) of $F$ over $\mathscr{B}_N$ may have an arc that shares both of its endpoints with one of these intervals.  Because only the rainbow diagram (figure \ref{Rainbow}) has this property, we infer that $F\propto\Pi_{C_N}$.  With this observation, it follows from \cite{florkleb2} that the limit
\be\label{multilim}\lim_{x_N\rightarrow x_{N-1}}\dotsm\lim_{x_3\rightarrow x_2}\lim_{x_2\rightarrow x_1}\prod_{j=1}^{N-1}(x_{j+1}-x_j)^{-\Delta^+(\theta_{N-j+1})}F(\boldsymbol{x}),\quad\Delta^+(\theta_s)=2s/\kappa\ee
(with $\theta_s$ and $\Delta^+$ defined in (\red{6}) and (\red{29}) of \cite{florkleb2} respectively) exists if and only if $F\propto\Pi_{C_N}$, and the discussion in section \red{I A} of \cite{florkleb2} suggests that we identify such a solution with the $(N+1)$-point correlation function (figure \ref{Rainbow})
\be\label{Nptop}\langle\psi_N(x_1)\psi_1(x_{N+1})\psi_1(x_{N+2})\dotsm\psi_1(x_{2N})\rangle
\ee
containing the $N$-leg boundary operator $\psi_N$.  In particular, the limit (\ref{multilim}) does not exist for any element of $\mathcal{B}_N$.  Hence, if the span of $\mathcal{B}_N$ does contain all $2N$-point correlation functions (\ref{corrfunc}) of one-leg boundary operators, then the $N$-leg boundary operator $\psi_N$ must be absent from the associated minimal model.  Indeed, this is true.  If $\kappa=\kappa_{N+1,q'}$ with $q'=q''$ so $\kappa>4$ and $\psi_1=\phi_{1,2}$ (resp.\ with $q'=2m(N+1)\pm q''$ so $\kappa<4$ and $\psi_1=\phi_{2,1}$), then $\psi_N=\phi_{1,N+1}$ (resp.\ $\phi_{N+1,1}$) is the first first-column (resp.\ first-row) Kac operator that appears outside the Kac table for the corresponding $\mathcal{M}(N+1,q'')$ (resp.\ $\mathcal{M}(2m(N+1)\pm q'',N+1)$) minimal model.  Because of the absence of $\psi_N$ from the model, this minimal model does not include the $\psi_N$ conformal family in either of the fusion products 
\be\label{fusions}
\psi_1(x_1)\times\psi_1(x_2)\times\dotsm\psi_1(x_N),\quad\psi_1(x_{N+1})\times\psi_1(x_{N+2})\times\dotsm\psi_1(x_{2N}).
\ee
(Its absence from the second product follows from symmetry.)  Inserting $\Pi_{C_N}$ into the set (\ref{basis}) effectively re-introduces the $\psi_N$ conformal family into the theory, from outside the Kac table.

By rotating the polygon diagram for $\Pi_{C_N}$, we generate polygon diagrams for $N-1$ more connectivity weights, and with slight adaptations, the same arguments that we presented above show that none of these connectivity weights are in the span of $\mathcal{B}_N$.  So in addition to (\ref{fusions}), the $N$-leg boundary conformal family is initially absent from any fusion product of $N$ adjacent one-leg boundary operators.  (Here, we consider $\psi_1(x_1)$ and $\psi_1(x_{2N})$ to be adjacent.)  Although it might seem that one must insert all $N$ of these weights into $\mathcal{B}_N$ to restore this missing conformal family to all of these products, this is not true.  Indeed, $\text{span}\,(\mathcal{B}_N\cup\{\Pi_{C_N}\})=\mathcal{S}_N$ already includes all of the $N-1$ other weights (figure \ref{Contain}).  Hence, inserting only $\Pi_{C_N}$ into $\mathcal{B}_N$ restores the $\psi_N$ conformal family to not just the fusion products in (\ref{fusions}) but also to the fusion product of all $2N$ available collections of $N$ adjacent one-leg boundary operators in (\ref{corrfunc}).

\section{Summary}

Using our previous results from \cite{florkleb,florkleb2,florkleb3} in this article, we state and prove some additional facts concerning elements of the solution space $\mathcal{S}_N$ for the system of $2N+3$ PDEs (\ref{nullstate}, \ref{wardid}) in $2N$ variables $x_1,$ $x_2,\ldots,x_{2N}$ that govern a conformal field theory (CFT) correlation function (\ref{corrfunc}) of $2N$ distinct one-leg boundary operators and that also appear in multiple Schramm-L\"owner evolution (SLE$_{\kappa}$) with parameter $\kappa\in(0,8)$.  In section \ref{frobsect}, we prove theorem \ref{frobseriescor}, which states that if $8/\kappa\not\in2\mathbb{Z}^++1$, then any element of $\mathcal{S}_N$ equals a sum of at most two Frobenius series in powers of $x_{i+1}-x_i$ for any $i\in\{1,2,\ldots,2N-1\}$.  And in appendix \ref{appendix}, we prove that if $8/\kappa\in2\mathbb{Z}^++1$, then a Frobenius series multiplied by $\log(x_{i+1}-x_i)$ may appear.  This establishes part of the operator product expansion (OPE) assumed in CFT.  In section \ref{xingprob}, we identify special elements of $\mathcal{S}_N$ called \emph{connectivity weights} (definition \ref{dualbasis}), which we hypothesized to exist in \cite{florkleb}.   Theorem \ref{xingasymplem} establishes their essential properties.  Then, we use these special functions in conjecture \ref{connectivityconj} to propose a formula (\ref{xing}, \ref{xing2}) for the \emph{crossing probability} that the $2N$ curves of a multiple-SLE$_\kappa$ process join pairwise in a particular connectivity. (In a forthcoming article \cite{fkz}, we use this formula to calculate probabilities of cluster-crossing events of critical lattice models in polygons.)  In section \ref{intervalsect}, we classify intervals $(x_i, x_{i+1})$ for any $F\in\mathcal{S}_N$ in two ways.  First, if a single boundary arc generated by a multiple-SLE$_\kappa$ process with partition function $F$ almost surely shares (resp.\ doesn't share) its endpoints with the interval, then we call $(x_i,x_{i+1})$ a \emph{contractible interval} (resp.\ \emph{propagating interval}) of $F$.  Second, if the one-leg boundary operators of the CFT correlation function $F$ (\ref{corrfunc}) at the interval's endpoints have only the identity family (resp.\ only the two-leg family) in their operator product expansion (OPE), then we call $(x_i,x_{i+1})$ an \emph{identity interval} (resp.\ \emph{two-leg interval}) of $F$.  Lemma \ref{proptwoleglem} states that propagating intervals and two-leg intervals are identical.  Meanwhile, we find that contractible intervals and identity intervals are different.  Indeed, (\ref{finallincmb}) and figure \ref{TwoLegFuse} show how an identity interval, after we ``insert" it into the domain of a connectivity weight, decomposes into a linear superposition of a contractible interval and propagating intervals.  Finally, in section \ref{minmodelsect}, we explore the connection between the SLE$_\kappa$ exceptional speeds (\ref{exceptional}) and the CFT minimal models.  In particular, we propose conjecture \ref{minmodelconj} as an explanation for this connection, and examine its application to several particular cases.

During the writing of this article, we learned that K.\ Kyt\"ol\"a and E.\ Peltola recently obtained results very similar to ours by using a completely different approach based on quantum group methods \cite{kype,kype2}.

\section{Acknowledgements}

We thank J.\ J.\ H.\ Simmons and K.\ Kyt\"ol\"a for insightful conversations, and we thank C.\ Townley Flores for carefully proofreading the manuscript.

This work was supported by National Science Foundation Grants Nos.\ PHY-0855335 (SMF) and DMR-0536927 (PK and SMF).

\appendix{}

\section{Proof of item \ref{frobitem3} in theorem \ref{frobseriescor}}\label{appendix}

In this appendix, we prove item \ref{frobitem3} of theorem \ref{frobseriescor} first for the elements of the basis $\mathcal{B}_N^{\scaleobj{0.75}{\bullet}}$ (\ref{BN'}), and then for all elements of $\mathcal{S}_N$.  In this situation, $8/\kappa\in2\mathbb{Z}^++1$, and the fugacity function (\ref{LkFk}) vanishes: $n(\kappa)=0$.  This appendix presumes familiarity with the notations and results of sections \red{A 1}--\red{A 3} in \cite{florkleb3}.

\begin{figure}[b]
\centering
\includegraphics[scale=0.3]{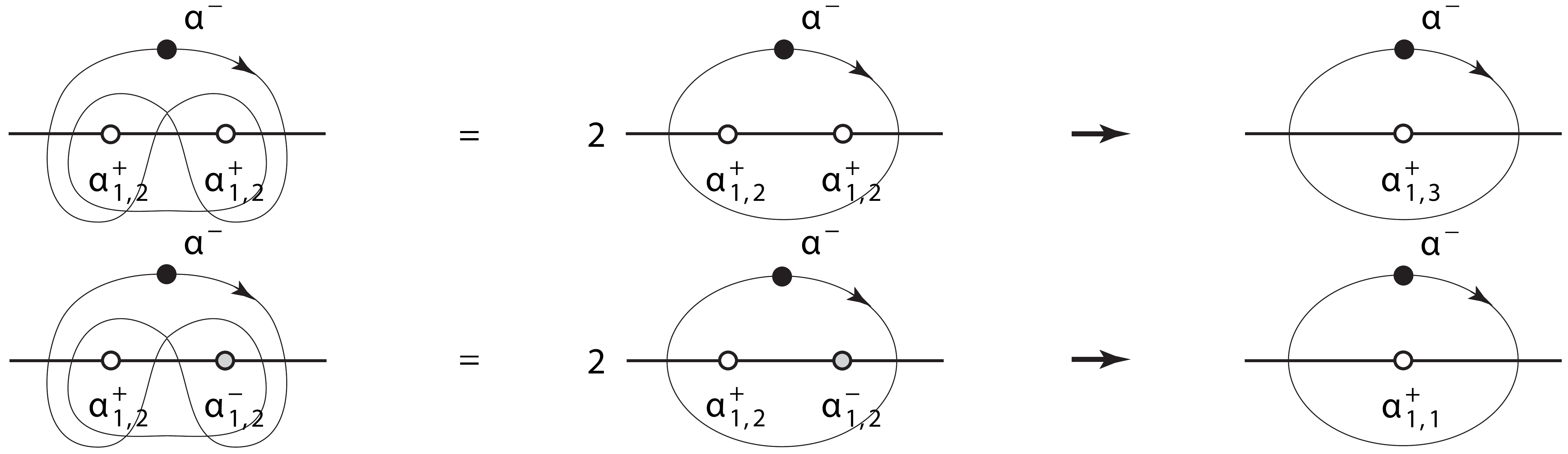}
\caption{If $8/\kappa$ is odd, then the Pochhammer contour entwining the two chiral operators (section \ref{purecft}) decouples into two clockwise simple loops, and the screening operator tracing its path is not drawn in with the fusion of those two chiral operators.}
\label{LoopDec}
\end{figure}

In the formula (\ref{Fexplicit}, \ref{BN'}) for $\mathcal{F}_\vartheta^{\scaleobj{0.75}{\bullet}}\in\mathcal{B}_N^{\scaleobj{0.75}{\bullet}}$, the integration contours interact with the points $x_i$ and $x_{i+1}$ in one of the three ways listed as cases \ref{sc2}, \ref{sc3}, and \ref{sc4} in section \ref{frobsect} and in appendix \red{A} of \cite{florkleb3}, where $x_i$ and $x_{i+1}$ appear in the Frobenius series expansions of theorem \ref{frobseriescor}.  In the work below and without loss of generality, we always choose $c\in\{1,2,\ldots,2N\}$ in this formula such that three things are true.  First, $c\not\in\{i,i+1\}$.    Second, the formula for $\mathcal{F}^{\scaleobj{0.75}{\bullet}}_\vartheta$ does not fall under case \ref{sc4}.  And third, if the formula for $\mathcal{F}^{\scaleobj{0.75}{\bullet}}_\vartheta$ falls under case \ref{sc3}, then $x_i$ but not $x_{i+1}$ is an endpoint of an integration contour.  These choices simplify our calculations and exposition.

Throughout this appendix, we index the arc connectivities according to item \ref{indexorder} above definition \ref{sleintervaldefn}.  As a result, the decomposition of $F\in\mathcal{S}_N$ over $\mathcal{B}_N^{\scaleobj{0.75}{\bullet}}$ if $8/\kappa\in2\mathbb{Z}^++1$ assumes a form similar to (\ref{decompF}),
\be\label{decompF'}F=\underbrace{a_1\mathcal{F}_1^{\scaleobj{0.75}{\bullet}}+a_2\mathcal{F}_2^{\scaleobj{0.75}{\bullet}}+\dotsm+a_{C_{N-1}}\mathcal{F}_{C_{N-1}}^{\scaleobj{0.75}{\bullet}}}_{\text{case \ref{sc2} terms}}+\underbrace{a_{C_{N-1}+1}\mathcal{F}_{C_{N-1}+1}^{\scaleobj{0.75}{\bullet}}+a_{C_{N-1}+2}\mathcal{F}_{C_{N-1}+2}^{\scaleobj{0.75}{\bullet}}+\dotsm+a_{C_N}\mathcal{F}_{C_N}^{\scaleobj{0.75}{\bullet}}}_{\text{case \ref{sc3} terms}}\ee
for some real constants $a_1$, $a_2,\ldots,a_{C_N}$.  Also throughout this appendix, we let $\bar{\ell}_1$ be the limit (\ref{lim}) that acts on $F\in\mathcal{S}_N$ by collapsing the interval $(x_i,x_{i+1})$.

\subsection{Proof of item \ref{frobitem3} in theorem \ref{frobseriescor} for $\mathcal{B}_N$, case \ref{sc2}}\label{s2}

Supposing that $\mathcal{F}_\vartheta^{\scaleobj{0.75}{\bullet}}$ falls under case \ref{sc2} (so $\vartheta\leq C_{N-1}$), we prove that it has a Frobenius series expansion of the form (\ref{log}).  In this case, a Pochhammer contour entwines $x_i$ with $x_{i+1}$, and as this contour circles counterclockwise once around either one of these points, the integrand of (\ref{Fexplicit}) acquires a phase factor of $e^{2\pi i(-4/\kappa)}=-1$ because $8/\kappa$ is odd.  As a result, we see from (\red{27}) of \cite{florkleb3} with $\beta_i=\beta_j=-4/\kappa$ that the Pochhammer contour decomposes into two clockwise simple loops $\Gamma_0$, both winding once around $x_i$ and $x_{i+1}$ (figure \ref{LoopDec}).  After expanding all factors in the formula (\ref{Fexplicit}, \ref{BN'}) for $\mathcal{F}_\vartheta^{\scaleobj{0.75}{\bullet}}$ that are analytic at $x_{i+1}=x_i$ in a Taylor series centered on $x_i$, we find
\be\label{leftout3} \mathcal{F}_\vartheta^{\scaleobj{0.75}{\bullet}}(\kappa\,|\,\boldsymbol{x})\,\,=\,\,(x_{i+1}-x_i)^{2/\kappa}\,\,\dotsm\,\,\times 2\oint_{\Gamma_0}\mathcal{N}\Big[\,(u_1-x_i)^{-8/\kappa}\,\dotsm\,\Big]\,{\rm d}u_1\,\,+\,\,O((x_{i+1}-x_i)^{2/\kappa+1}),\ee
where the ellipses represent the expanded factors.  Those factors in the integrand are analytic at $u_1=x_i$, and because $-8/\kappa\in\mathbb{Z}^-$, we may use the Cauchy integral formula to evaluate the leading term of (\ref{leftout3}), finding that it is not zero.  Thus, we see that $\mathcal{F}_\vartheta^{\scaleobj{0.75}{\bullet}}$ equals a Frobenius series of the form (\ref{log}) with $A_m=C_m=0$ for all $m\in\mathbb{Z}^+\cup\{0\}$ and $B_0\neq0$. 

We also conclude from this result that if $8/\kappa\in2\mathbb{Z}^++1$ and $\vartheta\leq C_{N-1}$, then $(x_i,x_{i+1})$ is a two-leg interval of $\mathcal{F}_\vartheta^{\scaleobj{0.75}{\bullet}}$  (item \ref{cftinterval1} of definition \ref{cftintervaldefn}).  This contrasts with our findings in section \ref{purecft} for the case $8/\kappa\not\in2\mathbb{Z}^++1$.  There, we concluded from fusion rules (\ref{prodfuse2}, \ref{prodfuse3}) that if $8/\kappa\not\in2\mathbb{Z}^++1$ and $\vartheta\leq C_{N-1}$, then $(x_i,x_{i+1})$ is an identity interval of $\mathcal{F}_\vartheta$.

\subsection{Proof of item \ref{frobitem3} in theorem \ref{frobseriescor} for $\mathcal{B}_N$, case \ref{sc3}}\label{s3}

Supposing that $\mathcal{F}_\vartheta^{\scaleobj{0.75}{\bullet}}$ falls under case \ref{sc3} (so $\vartheta>C_{N-1}$), we prove that it has a Frobenius series expansion of the form (\ref{log}).  To do this, we decompose $\Gamma_1$ in formula (\ref{Fexplicit}, \ref{BN'}) into two contours
\be \Gamma_1=\mathscr{P}(x_{i-1},x_i)+\Gamma_1'\ee
as per item \red{3} of the proof of lemma \red{6} in \cite{florkleb3}, where $\Gamma_1'$ is an integration contour with neither endpoint being $x_i$ or $x_{i+1}$.  In section \ref{frobsect} and above theorem \ref{frobseriescor}, we note that the term generated from integrating along $\Gamma_1'$, called a ``case \red{1} term," contributes to the second sum in (\ref{log}).  Hence, all that remains to proving that $\mathcal{F}_\vartheta^{\scaleobj{0.75}{\bullet}}$ has the form (\ref{log}) is to show that the term generated from integrating along $\mathscr{P}(x_{i-1},x_i)$, called a ``case \ref{sc3} term" in section \ref{frobsect}, has this form too. (If $i=1$, then we identify $i-1=0$ with $2N$.)

To prove this, we repeat the analysis in section \red{A 3} of \cite{florkleb3} up to (\red{A21}).  Summarizing the main steps, we press the integration contours $\Gamma_2,$ $\Gamma_3,\ldots,\Gamma_{N-1}$ onto the real axis and write 
\be\label{Fform}\mathcal{F}_\vartheta^{\scaleobj{0.75}{\bullet}}\,=\,\dotsm\,\times\oint_{\Gamma_{N-1}}\dotsm\oint_{\Gamma_3}\oint_{\Gamma_2}\,\dotsm\,\bigg[I_{i-1}+\oint_{\Gamma_1'}\,\dotsm\,{\rm d}u_1\bigg]\,{\rm d}u_2\,{\rm d}u_3\,\dotsm\,{\rm d}u_{N-1},\ee
where the ellipses not appearing between integrals or integration measures stand for the factors explicitly shown in (\ref{Fexplicit}) (with a factor of $n(\kappa)$ dropped, as per (\ref{BN'})), and where for appropriate powers $\beta_j$,
\be\label{Ikintegrals}
I_k(x_1,x_2,\ldots,x_K):=\frac{1}{4\sin\pi\beta_k\sin\pi\beta_{k+1}}\sideset{}{_{\mathscr{P}(x_k,x_{k+1})}}\oint\mathcal{N}\Bigg[\prod_{j=1}^K(u_1-x_j)^{\beta_j}\Bigg]\,{\rm d}u_1,\quad x_{K+1}:=x_1,\quad K:=3N-2,\ee
with $x_1<x_2<\ldots<x_K$ simply the points $x_1$, $x_2,\ldots,x_{2N}$, $u_2$, $u_3,\ldots,u_{N-1}$ re-indexed in increasing order.  After identifying $I_{i-1}$ with the contour integral along $\Gamma_1$ in (\ref{Fexplicit}), we find that
\be\begin{gathered}\label{applicationsc3} 
s:=\sideset{}{_{j=1}^K}\sum\beta_j=-2,\qquad\text{$\beta_j\in\{-4/\kappa,8/\kappa,12/\kappa-2\}$ for all $j\in\{1,2,\ldots,K\},$}\\
\quad\beta_i=\beta_{i+1}=-4/\kappa,\qquad\beta_{i-1},\beta_{i+2}\in\{-4/\kappa,12/\kappa-2\}\quad (\beta_0:=\beta_K, \beta_{K+1}:=\beta_1,\beta_{K+2}:=\beta_2).
\end{gathered}\ee
To find a Frobenius series expansion with the form (\ref{log}) for $\mathcal{F}_\vartheta$, we express $I_{i-1}$ as a linear combination of $I_k$ with $k\neq i\pm1$.  We do this in section \red{A 3} of \cite{florkleb3}, finding (for $i\not\in\{1,2,K-1,K\}$, but see section \red{A 5 c} of \cite{florkleb3} if otherwise)
\be\label{result}I_{i-1}=\frac{1}{\sin\pi(\beta_i+\beta_{i+1})}\Bigg[-\sum_{k=1}^{i-2}\left(\sin\pi\sum_{l=k+1}^{i+1}\beta_l\right)I_{k}+\sum_{k=i+2}^K\left(\sin\pi\sum_{l=i+2}^k\beta_l\right)I_{k}-\sin\pi\beta_{i+1}I_{i}\Bigg].\ee
Now, the right side of (\ref{result}) may seem to have a simple pole at $\kappa=8/r$ with $r>1$ odd because $\beta_i+\beta_{i+1}=-r$, thanks to (\ref{applicationsc3}).  However, because neither $\beta_i=-r/2$ nor $\beta_{i+1}=-r/2$ is a pole of $I_{i-1}$, both sides of (\ref{result}) must be analytic at this $\kappa$.  We conclude that $\kappa$ is a zero of the bracketed factor in (\ref{result}).  Hence, to find the value of $I_{i-1}(\kappa)$, we expand the bracketed factor and denominator on the right side of (\ref{result}) to first order in $\varkappa-\kappa$ and send $\varkappa\rightarrow\kappa$, finding
\be\label{Ii}\begin{aligned}I_{i-1}(\kappa)=&-\left(\frac{\kappa^2}{8\pi}\right)\sum_{\,\,\,k=1\,\,\,}^{i-2}\left[\sin\left(\pi\sum_{l=k+1}^{i+1}\beta_l(\kappa)\right)\partial_\varkappa I_k(\kappa)+\cos\left(\pi\sum_{l=k+1}^{i+1}\beta_l(\kappa)\right)\left(\pi\sum_{l=k+1}^{i+1}\partial_\varkappa\beta_l(\kappa)\right)I_k(\kappa)\right]\\
&-\left(\frac{\kappa^2}{8\pi}\right)\sum_{k=i+2}^K\left[\sin\left(\pi\sum_{l=i+2}^k\beta_l(\kappa)\right)\partial_\varkappa I_k(\kappa)+\cos\left(\pi\sum_{l=i+2}^k\beta_l(\kappa)\right)\left(\pi\sum_{l=i+2}^k\partial_\varkappa\beta_l(\kappa)\right)I_k(\kappa)\right]\\
&-\left(\frac{\kappa^2}{8\pi}\right)\sin\left(\frac{4\pi}{\kappa}\right)\partial_\varkappa I_i(\kappa),\quad8/\kappa\in2\mathbb{Z}^++1.\end{aligned}
\ee
Because the terms with summations do not involve a contour with an endpoint at $x_i$ or $x_{i+1}$, these terms are analytic at $x_{i+1}=x_i$.  Multiplied by the factor $(x_{i+1}-x_i)^{2/\kappa}$ in (\ref{Fexplicit}), they contribute to the second sum in (\ref{log}).
 
The behavior of the last term in (\ref{Ii}) as $x_{i+1}\rightarrow x_i$ is more complicated and interesting.  After inserting $u_1(t)=(1-t)x_i+tx_{i+1}$ into $I_i$, differentiating it with respect to $\varkappa$, and setting $\varkappa=\kappa$ in the result, we find
\begin{multline}\label{partialI}-\left(\frac{\kappa^2}{8\pi}\right)\sin\left(\frac{4\pi}{\kappa}\right)\partial_\varkappa I_i(\kappa\,|\,x_1,x_2,\ldots,x_K)=\\
\begin{aligned}&-\frac{1}{\pi}\sin\left(\frac{4\pi}{\kappa}\right)\log(x_{i+1}-x_i)\,I_i(\kappa\,|\,x_1,x_2,\ldots,x_K)-\left(\frac{\kappa^2}{8\pi}\right)\sin\left(\frac{4\pi}{\kappa}\right)(x_{i+1}-x_i)^{1-8/\kappa}\\
&\times\,\,\partial_\varkappa\Bigg(\frac{1}{4\sin^2(4\pi/\varkappa)}\oint_{\mathscr{P}(0,1)}t^{-4/\varkappa}(1-t)^{-4/\varkappa}\mathcal{N}\Bigg[\prod_{j\neq i,i+1}^K(x_j-x_i-(x_{i+1}-x_i) t)^{\beta_j(\varkappa)}\Bigg]\,{\rm d}t\Bigg)_{\varkappa=\kappa}.\end{aligned}\end{multline}
Now, the definite integral $I_i$ in the first term on the right side falls under case \ref{sc2}.  Multiplied by the factor of $\log(x_{i+1}-x_i)$ and then by the factor of $(x_{i+1}-x_i)^{2/\kappa}$ in (\ref{Fexplicit}), this term contributes to the last sum in (\ref{log}).  Next, the derivative in the second term is analytic at $x_{i+1}=x_i$.  Multiplied by the factor of $(x_{i+1}-x_i)^{1-8/\kappa}$ in (\ref{partialI}) and then by the factor of $(x_{i+1}-x_i)^{2/\kappa}$ in (\ref{Fexplicit}), this second term equals a Frobenius series in powers of $x_{i+1}-x_i$ and with indicial power $1-6/\kappa$.  Because $8/\kappa=r>1$ is odd, the difference $r-1$ between this power and the previous $2/\kappa$ is a positive integer.  Hence, this second term contributes to both the first and second sum in (\ref{log}).  We conclude that $\mathcal{F}_\vartheta^{\scaleobj{0.75}{\bullet}}$ equals a sum of Frobenius series of the form (\ref{log}) if $\vartheta>C_{N-1}$.

For use in section \ref{lastsect} below, we employ (\ref{partialI}) to determine the asymptotic behavior of the last term in (\ref{Ii}).  Indeed, it is asymptotically dominant over the other terms as $x_{i+1}\rightarrow x_i$ and thus the only term on the right side of (\ref{Ii}) that contributes to the limit $\bar{\ell}_1\mathcal{F}_\vartheta^{\scaleobj{0.75}{\bullet}}$, where $\bar{\ell}_1$ (\ref{lim}) collapses the interval $(x_i,x_{i+1})$.  We find that
\begin{multline}\label{result5}-\left(\frac{\kappa^2}{8\pi}\right)\sin\left(\frac{4\pi}{\kappa}\right)\partial_\varkappa I_i(\kappa\,|\,x_1,x_2,\ldots,x_K)\underset{x_{i+1}\rightarrow x_i}{\sim}\\
-\left(\frac{\kappa^2}{8\pi}\right)\sin\left(\frac{4\pi}{\kappa}\right)(x_{i+1}-x_i)^{1-8/\kappa}\,\,\partial_\varkappa\Bigg(\frac{\Gamma(1-4/\varkappa)^2}{\Gamma(2-8/\varkappa)}\,\,\mathcal{N}\Bigg[\prod_{j\neq i,i+1}^K(x_j-x_i)^{\beta_j(\varkappa)}\Bigg]\Bigg)_{\varkappa=\kappa}\end{multline}
after recognizing the beta function in (\ref{partialI}) that follows from setting $x_{i+1}=x_i$.  (See (\red{43}) of \cite{florkleb3}.)  After noting that $\Gamma(1-4/\kappa)^2/\Gamma(2-8/\varkappa)$ vanishes as $\varkappa\rightarrow\kappa=8/r$ with $r>1$ odd, inserting the identity
\be\label{anid}-\left(\frac{\kappa^2}{8\pi}\right)\sin\left(\frac{4\pi}{\kappa}\right)\lim_{\varkappa\rightarrow\kappa}\partial_\varkappa\left(\frac{\Gamma(1-4/\varkappa)^2}{\Gamma(2-8/\varkappa)}\right)
=\lim_{\varkappa\rightarrow\kappa}\left(\frac{\Gamma(1-4/\varkappa)^2}{n(\varkappa)\Gamma(2-8/\varkappa)}\right),\quad 8/\kappa\in2\mathbb{Z}^++1,\ee
(with $n(\varkappa)$ given in (\ref{LkFk})) and recalling once again that $\partial_\varkappa I_i(\kappa)$ is asymptotically dominant over all of the other terms on the right side of (\ref{Ii}), we find that for $8/\kappa\in2\mathbb{Z}^++1$,
\be\label{result6}I_{i-1}(x_1,x_2,\ldots,x_K)\underset{x_{i+1}\rightarrow x_i}{\sim}\lim_{\varkappa\rightarrow\kappa}\left(\frac{\Gamma(1-4/\varkappa)^2}{n(\varkappa)\Gamma(2-8/\varkappa)}\right)(x_{i+1}-x_i)^{1-8/\kappa}\,\,\mathcal{N}\Bigg[\prod_{j\neq i,i+1}^K(x_j-x_i)^{\beta_j}\Bigg].\ee
In section \red{A 3} of \cite{florkleb3}, we found the asymptotic behavior of $I_{i-1}(x_1,x_2,\ldots,x_K)$ as $x_{i+1}\rightarrow x_i$ for $\kappa\in(0,8)$ with $8/\kappa\not\in\mathbb{Z}^+$.  Not surprisingly, we see that (\ref{result6}) matches the previous result (\red{A26}) of \cite{florkleb3} after we insert $\beta_i=\beta_{i+1}=-4/\kappa$ (\ref{applicationsc3}) into the latter and divide both of its sides by $4e^{\pi i(\beta_{i-1}-\beta_i)}\sin\pi\beta_{i-1}\sin\pi\beta_i$ (thus generating the prefactor in (\ref{Ii})).  Thus, the same main result of section \red{A 3} in \cite{florkleb3} holds for $\bar{\ell}_1\mathcal{F}_\vartheta^{\scaleobj{0.75}{\bullet}}$.  That is to say, $\bar{\ell}_1\mathcal{F}_\vartheta^{\scaleobj{0.75}{\bullet}}$ equals the element of $\mathcal{B}_{N-1}^{\scaleobj{0.75}{\bullet}}$ generated from the formula for $\mathcal{F}_\vartheta^{\scaleobj{0.75}{\bullet}}$ by dropping all factors involving $x_i$, $x_{i+1}$, and $u_1$, dropping the integration along $\Gamma_1$, and reducing the power $N-1$ of the prefactor in (\ref{Fexplicit}) by one.  In the notation of section \ref{intervalsect}, we have
\be\label{G'}\bar{\ell}_1\mathcal{F}_\vartheta^{\scaleobj{0.75}{\bullet}}=\mathcal{G}^{\scaleobj{0.75}{\bullet}}_{\chi(\vartheta)},\quad\mathcal{G}^{\scaleobj{0.75}{\bullet}}_{\chi(\vartheta)}(\kappa):=\lim_{\varkappa\rightarrow\kappa}n(\varkappa)^{-1}\mathcal{G}_{\chi(\vartheta)}(\varkappa)\in\mathcal{B}_{N-1}^{\scaleobj{0.75}{\bullet}},\quad\vartheta>C_{N-1,}\ee
where we define the function $\mathcal{G}_\vartheta$ above (\ref{Xidecomp}) and $\chi$ is the index map defined in item \ref{cutmap} above definition \ref{sleintervaldefn}.

\subsection{Proof of item \ref{frobitem3} in theorem \ref{frobseriescor} for $\mathcal{S}_N$}\label{lastsect}

In this section, we finish proving item \ref{frobitem3} of theorem \ref{frobseriescor}.  In sections \ref{s2} and \ref{s3} above, we prove that each $\mathcal{F}^{\scaleobj{0.75}{\bullet}}_\vartheta\in\mathcal{B}_N^{\scaleobj{0.75}{\bullet}}$ (\ref{BN'}) admits a Frobenius series expansion of the form (\ref{log}).  Because $\mathcal{B}_N^{\scaleobj{0.75}{\bullet}}$ is a basis for $\mathcal{S}_N$, it follows from the decomposition (\ref{decompF'}) that any element $F\in\mathcal{S}_N$ admits the Frobenius series expansion (\ref{log}) too.  Moreover, the analysis that precedes lemma \red{3} in \cite{florkleb} shows that the null-state PDE (\ref{nullstate}) centered on either $x_i$ or $x_{i+1}$ fixes the indicial powers in (\ref{log}).  Thus, if $A_0=0$ (resp.\ $C_0=0$), then $A_m=0$ for all $m\in\{0,1,\ldots,r-2\}$ (resp.\ $C_m=0$ for all $m\in\mathbb{Z}^+\cup\{0\}$), and if $A_0=B_0=0$, then $B_m=0$ for all $m\in\mathbb{Z}^+$.  Hence, to finish the proof of item \ref{frobitem3} in theorem \ref{frobseriescor}, we must show that $A_0=0$ if and only if $C_0=0$ and that the last series in (\ref{log}) with the logarithm factor dropped is in $\mathcal{S}_N$.

To prove that $A_0=0$ if and only if $C_0=0$ in (\ref{log}), we find an expression for $A_0$ by acting on both sides of the decomposition (\ref{decompF'}) with $\bar{\ell}_1$ (\ref{lim}).  At the end of section \ref{s2}, we note that $(x_i,x_{i+1})$ is a two-leg interval of $\mathcal{F}_\vartheta$, and therefore $\bar{\ell}_1\mathcal{F}_\vartheta^{\scaleobj{0.75}{\bullet}}=0$, if $\vartheta\leq C_{N-1}$ and $\kappa\in(0,8)$.  Using (\ref{G'}) for the remaining terms, we find
\be\label{A0} A_0=\bar{\ell}_1F\,\,\,=\,\,\,\sum_{\mathclap{\varrho=C_{N-1}+1}}^{C_N}\,\,\,a_\varrho\mathcal{G}^{\scaleobj{0.75}{\bullet}}_{\chi(\varrho)}=\sum_{\vartheta=1}^{C_{N-1}}\Bigg(\sum_{\substack{\varrho=C_{N-1}+1 \\ \chi(\varrho)=\vartheta}}^{C_N}a_\varrho\Bigg)\mathcal{G}^{\scaleobj{0.75}{\bullet}}_\vartheta.\ee
Next, we discard from (\ref{log}) all terms without a logarithm.  We do this for the elements of $\mathcal{B}_N^{\scaleobj{0.75}{\bullet}}$ (\ref{BN'}) first.  If $\varrho\leq C_{N-1}$, then according to section \ref{s2}, this expansion (\ref{log}) for $\mathcal{F}_\varrho^{\scaleobj{0.75}{\bullet}}$ has no logarithmic term.  Hence, this map sends
\be\label{replace0}\mathcal{F}_\varrho^{\scaleobj{0.75}{\bullet}}\,\,\xrightarrow[\hspace{1cm}]{}\,\,0,\quad\varrho\leq C_{N-1}.\ee
The integration along $\Gamma_1'$ in the expression (\ref{Fform}) for $\mathcal{F}_\varrho^{\scaleobj{0.75}{\bullet}}$ with $\varrho>C_{N-1}$ has no logarithmic term, but the contour integral $I_{i-1}$ (\ref{Ikintegrals}) does.  The logarithm appears only in the last term (\ref{partialI}) on the right side of (\ref{Ii}).  Isolating it sends 
\be\label{prereplace}I_{i-1}\,\,\xrightarrow[\hspace{1cm}]{}\,\,-\frac{1}{\pi}\sin\left(\frac{4\pi}{\kappa}\right)\log(x_{i+1}-x_i)I_i.\ee
Therefore, to isolate the logarithmic term in $\mathcal{F}_\varrho^{\scaleobj{0.75}{\bullet}}$ with $\varrho>C_{N-1}$, we drop the integration along $\Gamma_1'$ from (\ref{Fform}) and insert (\ref{prereplace}) into (\ref{Fform}).  With this, the case \ref{sc2} contour $\mathscr{P}(x_i,x_{i+1})$ of $I_i$ replaces the case \ref{sc3} contour of $I_{i-1}$ in (\ref{Fexplicit}), so
\be\label{replace}\mathcal{F}_\varrho^{\scaleobj{0.75}{\bullet}}\,\,\xrightarrow[\hspace{1cm}]{}\,\,-\frac{1}{\pi}\sin\left(\frac{4\pi}{\kappa}\right)\log(x_{i+1}-x_i)\mathcal{F}^{\scaleobj{0.75}{\bullet}}_{\chi(\varrho)},\quad\varrho>C_{N-1}\ee
(figure \ref{CutMap}).  Finally, we isolate the terms with a logarithm in the expansion (\ref{log}) for $F\in\mathcal{S}_N$ and discard all others.  To do this, we apply the map (\ref{replace0}, \ref{replace}) to each term in the decomposition (\ref{decompF'}) of $F$ over $\mathcal{B}_N^{\scaleobj{0.75}{\bullet}}$.  This sends
\be\label{replace2}F\,\,\xrightarrow[\hspace{1cm}]{}\,\,-\frac{1}{\pi}\sin\left(\frac{4\pi}{\kappa}\right)\log(x_{i+1}-x_i)\sum_{\vartheta=1}^{C_{N-1}}\Bigg(\sum_{\substack{\varrho=C_{N-1}+1 \\ \chi(\varrho)=\vartheta}}^{C_N}a_\varrho\Bigg)\mathcal{F}^{\scaleobj{0.75}{\bullet}}_\vartheta,\ee
where the right side equals the last sum in (\ref{log}).  Now, $C_m=0$ in (\ref{log}) for all $m\in\mathbb{Z}^+\cup\{0\}$ if and only if the right side of (\ref{replace2}) vanishes, and with $\mathcal{B}_N^{\scaleobj{0.75}{\bullet}}$ (\ref{BN'}) linearly independent, the latter happens if and only if the coefficient of $\mathcal{F}^{\scaleobj{0.75}{\bullet}}_\vartheta$ on the right side of (\ref{replace2}) vanishes for all $\vartheta\leq C_{N-1}$.  And with $\mathcal{B}_{N-1}$ (\ref{BN'}) linearly independent too, $A_0=0$ in (\ref{A0}) if and only if the coefficient of $\mathcal{G}^{\scaleobj{0.75}{\bullet}}_\vartheta$ on the right side of (\ref{A0}) vanishes for all $\vartheta\leq C_{N-1}$.  In the first paragraph of this section, we prove that $C_m=0$ for all $m\in\mathbb{Z}^+\cup\{0\}$ if and only if $C_0=0$.  Therefore,
\be\label{infer}A_0=0\quad\Longleftrightarrow\quad\sum_{\substack{\varrho=C_{N-1}+1 \\ \chi(\varrho)=\vartheta}}^{C_N}a_\varrho=0\,\,\,\text{for all $\vartheta\in\{1,2,\ldots,C_{N-1}\}$}\quad\Longleftrightarrow\quad C_m=0\quad\Longleftrightarrow\quad C_0=0.\ee

To conclude the proof of item \ref{frobitem3} in theorem \ref{frobseriescor}, we show that the last series in (\ref{log}) with the logarithm factor dropped is in $\mathcal{S}_N$.  But according to the previous paragraph, this series equals the right side of (\ref{replace2}) with $\log(x_{i+1}-x_i)$ dropped, which is evidently in $\mathcal{S}_N$.

\subsection{Identity intervals for $8/\kappa$ odd}\label{8kappaodd}

We now discuss the definition of an identity interval in the case $8/\kappa\in2\mathbb{Z}^++1$, which is not included in section \ref{purecft}.  Because of the logarithmic term in (\ref{log}), the analyticity condition on $H$ in item \ref{cftinterval2a} of definition \ref{cftintervaldefn} never holds, so this definition cannot be used.  In addition, we cannot adopt item \ref{cftinterval2b} of definition \ref{cftintervaldefn} either.  Indeed, its condition on the decomposition of $F\in\mathcal{S}_N$ (\ref{decompF'}), now over $\mathcal{B}_N^{\scaleobj{0.75}{\bullet}}$ (\ref{BN'}), implies that $a_\vartheta=0$ for all $\vartheta>C_{N-1}$.  But according to section \ref{s2}, $(x_i,x_{i+1})$ is a two-leg interval of all terms in (\ref{decompF'}) with $\vartheta\leq C_{N-1}$, and thus of $F$, rather than an identity interval of these functions.  Thus this condition is not useful in defining an identity interval of $F$.

The ineffectiveness of item \ref{cftinterval2b} if $8/\kappa\in2\mathbb{Z}^++1$ follows from the decomposition of the Pochhammer contour entwining $x_i$ with $x_{i+1}$ into two loops (figure \ref{LoopDec}).  Because they are not entangled with $x_i$ and $x_{i+1}$, these loops do not contract as we send $x_{i+1}\rightarrow x_i$, so the screening operator that traces them is not drawn in with the fusion. Hence, the rules (\ref{prodfuse3}, \ref{prodfuse4}) reset to their original versions (\ref{prodfuse1}, \ref{prodfuse2}) respectively, and the first of these (\ref{prodfuse3}) now fails to give an identity interval.  In addition, the fusion rule (\ref{prodfuse2}) does not apply if $8/\kappa$ is odd either.  Indeed, if we choose a formula for $\mathcal{F}_\vartheta^{\scaleobj{0.75}{\bullet}}$ with $\vartheta\leq C_{N-1}$ that has $c\in\{i,i+1\}$, then because $(x_i,x_{i+1})$ is a two-leg interval of $\mathcal{F}_\vartheta^{\scaleobj{0.75}{\bullet}}$ according to section \ref{s2}, we must find fusion rule (\ref{prodfuse1}) in place of (\ref{prodfuse2}).  Hence, we fail to construct an identity interval if $8/\kappa$ is odd.

Concepts from logarithmic CFT help to clarify this situation.  Indeed, the fact that $A_0=0$ if and only if $C_0=0$ in the expansion (\ref{log}), proved in item \ref{frobitem3} of theorem \ref{frobseriescor}, implies that the two-leg family, multiplied by a logarithm, always appears with the identity family in the OPE for $\psi_1(x_i)$ with $\psi_1(x_{i+1})$ if $8/\kappa$ is odd.  (Indeed, $(x_i,x_{i+1})$ is a two-leg interval of the sum on the right side of (\ref{replace2}).  Therefore, after identifying the logarithmic term in (\ref{log}) with the right side of (\ref{replace2}) in section \ref{lastsect}, we realize that only the two-leg family contributes to the series multiplying the logarithm in (\ref{log}).  This agrees with a similar observation in \cite{gurarie} regarding a certain CFT four-point function solving the system (\ref{nullstate}, \ref{wardid}) for $N=2$ and $\kappa=8$.)  In addition, the family of the logarithmic partner \cite{gurarie,gurarie2} to the two-leg boundary operator contributes to the second sum in (\ref{log}).  Because the identity family never appears alone in the OPE, $(x_i,x_{i+1})$ must not be an ``identity interval" of any $F\in\mathcal{S}_N$.  Hence, this term has no meaning for $8/\kappa$ odd.

In spite of this outcome, one might wonder if it is reasonable to extend the meaning of a ``pure interval" (other than a two-leg interval, which is still defined) to $8/\kappa$ odd by taking some particular combination of the three families appearing in the OPE of $\psi_1(x_i)$ with $\psi_1(x_{i+1})$.  Such a combination could give another kind of pure interval that, although not an identity interval, would play a similar role.  However, this does not seem plausible because logarithmic partners are defined only up to the addition of the CFT operator to which they couple \cite{gurarie2}.  To clarify, we suppose that $(x_i,x_{i+1})$ is this putative pure interval of some $F\in\mathcal{S}_N$ and is also a two-leg interval of some other $F_0\in\mathcal{S}_N$.  Adding $F_0$ to $F$,  this interval of $F$ is no longer  pure.  But then again, doing this only adds the boundary two-leg operator, generated by sending $x_{i+1}\rightarrow x_i$ in $F_0(\boldsymbol{x})$, to the logarithmic partner arising from $F(\boldsymbol{x})$ in the same limit, sending the latter operator to another logarithmic partner.  Thus, which particular combination of the sums in (\ref{log}) with $A_0,C_0\neq0$ should make $(x_i,x_{i+1})$ ``pure" seems arbitrary and therefore not useful from a CFT point of view.


\begin{thebibliography}{99}
\bibliographystyle{unsrt}

\bibitem{florkleb} S.\ M.\ Flores and P.\ Kleban, {\it A solution space for a system of null-state partial differential equations I}, preprint: \href{http://arxiv.org/abs/1212.2301}{arXiv:1212.2301} (2012).

\bibitem{florkleb2} S.\ M.\ Flores and P.\ Kleban, {\it A solution space for a system of null-state partial differential equations II}, preprint: \href{http://arxiv.org/abs/1404.0035}{arXiv:1404.0035} (2014).

\bibitem{florkleb3} S.\ M.\ Flores and P.\ Kleban, {\it A solution space for a system of null-state partial differential equations III}, preprint: \href{http://arxiv.org/abs/1303.7182}{arXiv:1303.7182} (2013).

\bibitem{bpz} A.\ A.\ Belavin, A.\ M.\ Polyakov, and A.\ B.\ Zamolodchikov, {\it Infinite conformal symmetry in two-dimensional quantum field theory}, Nucl.\ Phys.\ B \textbf{241} (1984), 333--380.

\bibitem{fms} P.\ Di Francesco, R.\ Mathieu, and D.\ S\'{e}n\'{e}chal, {\it Conformal Field Theory}, Springer-Verlag, New York (1997).

\bibitem{henkel} M.\ Henkel, {\it Conformal Invariance and Critical Phenomena}, Springer-Verlag, Berlin Heidelberg (1999).

\bibitem{bbk} M.\ Bauer, D.\ Bernard, and K.\ Kyt\"{o}l\"{a}, {\it Multiple Schramm-L\"{o}wner evolutions and statistical mechanics martingales}, J.\ Stat.\ Phys.\ \textbf{120} (2005), 1125.

\bibitem{dub2} J.\ Dub\'{e}dat, {\it Commutation relations for SLE}, Comm.\ Pure Appl.\ Math.\ \textbf{60} (2007), 1792--1847.

\bibitem{graham} K.\ Graham, {\it On multiple Schramm-L\"owner evolutions}, J.\ Stat.\ Mech.\ (2007), P03008.

\bibitem{kl} M.\ J.\ Kozdron and G.\ Lawler, {\it The configurational measure on mutually avoiding SLE paths}, Fields Institute Communications \textbf{50} (2007), 199--224.

\bibitem{sakai} K.\ Sakai, {\it Multiple Schramm-L\"owner evolutions for conformal field theories with Lie algebra symmetries}, Nucl.\ Phys.\ B \textbf{867} (2013), 429--447.

\bibitem{bauber} M.\ Bauer and D.\ Bernard, {\it Conformal field theories of stochastic L\"owner evolutions}, Comm.\ Math.\ Phys.\ \textbf{239} (2003), 493--521.

\bibitem{dots} V.\ S.\ Dotsenko, {\it Critical behavior and associated conformal algebra of the $Z_3$ Potts model}, Nucl.\ Phys.\ B \textbf{235} (1984) 54--74.

\bibitem{gruz} I.\ A.\ Gruzberg, {\it Stochastic geometry of critical curves, Schramm-L\"owner evolutions, and conformal field theory}, J.\ Phys.\ A \textbf{39} (2006), 12601--12656.

\bibitem{rgbw} I.\ Rushkin, E.\ Bettelheim, I.\ A.\ Gruzberg, and P.\ Wiegmann, {\it Critical curves in conformally invariant statistical systems}, J.\ Phys.\ A \textbf{40} (2007), 2165--2195.

\bibitem{c3} J.\ Cardy, {\it Critical percolation in finite geometries}, J.\ Phys.\ A: Math.\ Gen.\ \textbf{25} (1992), L201--L206.

\bibitem{c1} J.\ Cardy, {\it Conformal invariance and surface critical behavior}, Nucl.\ Phys.\ B \textbf{240} (1984), 514--532.

\bibitem{grim} G.\ Grimmett, {\it Percolation}, Springer-Verlag, New York (1989).

\bibitem{bax} R.\ Baxter, {\it Exactly Solved Models in Statistical Mechanics}, Academic Press, Inc.\ (1982).

\bibitem{wu} F.\ Y.\ Wu, {\it The Potts model}, Rev.\ Mod.\ Phys.\ \textbf{54} (1982), 235--268.

\bibitem{fk} C.\ M.\ Fortuin and P.\ W.\ Kasteleyn, {\it On the random cluster model I.\ Introduction and relation to other models}, Physica D \textbf{57} (1972), 536--564.

\bibitem{stan} H.\ E.\ Stanley, {\it Dependence of Critical Properties on Dimensionality of Spins}, Phys.\ Rev.\ Lett.\ \textbf{20} (1968), 589--592.

\bibitem{zcs}  R.\ M.\ Ziff, P.\ T.\ Cummings, and G.\ Stell, {\it Generation of percolation cluster perimeters by a random walk}, J.\ Phys.\ A: Math.\ Gen.\ \textbf{17} (1984), 3009--3017.

\bibitem{law1} G.\ Lawler, {\it A self-avoiding walk}, Duke Math.\ J.\ \textbf{47} (1980), 655--694.

\bibitem{schrsheff} O.\ Schramm and S.\ Sheffield, {\it The harmonic explorer and its convergence to SLE4}, Ann.\ Probab.\ \textbf{33} (2005), 2127--2148.

\bibitem{weintru} A.\ Weinrib and S.\ A.\ Trugman, {\it A new kinetic walk and percolation perimeters}, Phys.\ Rev.\ B \textbf{31} (1985), 2993--2997.

\bibitem{madraslade} G.\ Madra and G.\ Slade, {\it The Self-Avoiding Walk}, Birkh\"{a}user, Boston (1996).

\bibitem{df1} V.S.\ Dotsenko and V.A.\ Fateev, {\it Conformal algebra and multipoint correlation functions in 2D statistical models}, Nucl.\ Phys.\ B \textbf{240} (1984), 312--348.

\bibitem{df2} V.S.\  Dotsenko and V.A.\ Fateev, {\it Four-point correlation functions and the operator algebra in 2D conformal invariant theories with central charge $c\leq1$}, Nucl.\ Phys.\ B \textbf{251} (1985), 691--673.

\bibitem{smir4} S.\ Smirnov, {\it Towards conformal invariance of 2D lattice models}, Proc.\ Int.\ Congr.\ Math.\ \textbf{2} (2006), 1421--1451.

\bibitem{smir} H.\ Duminil-Copin and S.\ Smirnov, {\it Conformal invariance of lattice models}, in {\it Probability and Statistical Physics in Two and More Dimensions}, eds.\ Ellwood, D., Newman, D., Sidoravicius, V., and Werner, W., Clay Mathematics Proceedings \textbf{15} (2012), 213--276.

\bibitem{dub} J.\ Dub\'{e}dat, {\it Euler integrals for commuting SLEs}, J.\ Stat.\ Phys.\ \textbf{123} (2006), 1183--1218.

\bibitem{fgg} P.\ Di Francesco, O.\ Golinelli, and E.\ Guitter, {\it Meanders and the Temperley-Lieb algebra}, Comm.\ Math.\ Phys.\ \textbf{186} (1997), 1--59.

\bibitem{fgut} P.\ Di Francesco and E.\ Guitter, {\it Geometrically constrained statistical systems on regular and random lattices: from foldings to meanders}, Physics Reports \textbf{415} (2005), 1--88.

\bibitem{difranc} P.\ Di Francesco, {\it Meander Determinants}, Comm.\ Math.\ Phys.\ \textbf{191} (1998), 543--583.

\bibitem{franc} P.\ Di Francesco, {\it Truncated Meanders}, in {\it Recent Developments in Quantum Affine Algebras and Related Topics}, eds.\ N.\ Jing and K.\ Misra, Amer.\ Math.\ Soc.\ (1999), 135--161.

\bibitem{js} J.\ J.\ H.\ Simmons, {\it Logarithmic operator intervals in the boundary theory of critical percolation}, J.\ Phys.\ A: Math.\ Theor.\ \textbf{46} (2013), 494015.

\bibitem{fsk} S.\ M.\ Flores, J.\ J.\ H.\ Simmons, and P.\ Kleban, {\it Multiple-SLE$_\kappa$ connectivity weights for rectangles, hexagons, and octagons}, in preparation.

\bibitem{fkz} S.\ M.\ Flores, J.\ J.\ H.\ Simmons, P.\ Kleban, and  R.\ M.\ Ziff, {\it Partition functions and crossing probabilities for critical systems inside polygons}, in preparation.

\bibitem{benors} C.\ M.\ Bender and S.\ A.\ Orszag, {\it Advanced Mathematical Methods for Scientists and Engineers, Asymptotic Methods and Perturbation Theory}, New York: Springer-Verlag (1999). 

\bibitem{gurarie} V.\ Gurarie, {\it Logarithmic operators in conformal field theory}, Nucl.\ Phys.\ B \textbf{410} (1993), 535--549.

\bibitem{gurarie2} V.\ Gurarie, {\it Logarithmic operators and logarithmic conformal field theories}, J.\ Phys.\ A: Math.\ Theor.\ \textbf{46} (2013), 494003.

\bibitem{matrid} P.\ Mathieu and D.\ Ridout, {\it From percolation to logarithmic conformal field theory}, Phys.\ Lett.\ B \textbf{657} (2007), 120--129.

\bibitem{rgw} I.\ Runkel, M.\ R.\ Gaberdiel, and S.\ Wood, {\it Logarithmic bulk and boundary conformal field theory and the full centre construction}, in Conformal Field Theories and Tensor Categories, eds. C.\ Bai, J.\ Fuchs, Y.\ Z.\ Huang, L.\ Kong, I.\ Runkel, and C.\ Schweigert (2014), 93--168. 

\bibitem{vjs} R.\ Vasseur, J.\ L.\ Jacobsen, and H.\ Saleur, {\it Logarithmic observables in critical percolation}, J.\ Stat.\ Mech.\ (2012), L07001.

\bibitem{bauber2} M.\ Bauer and D.\ Bernard, {\it 2D growth processes: SLE and L\"owner chains}, Phys.\ Rept.\ \textbf{432} (2006),115--221.

\bibitem{nmk} N.\ Jokela, M.\ J\"arvinen, and K.\ Kyt\"ol\"a, {\it SLE boundary visits}, preprint: \href{http://arxiv.org/abs/1311.2297}{arXiv:1311.2297} (2013).

\bibitem{car1} J.\ Cardy, {\it Boundary conditions, fusion rules, and the Verlinde formula}, Nucl.\ Phys.\ B \textbf{324} (1989), 581--596.

\bibitem{car2} J.\ Cardy, {\it Effect of boundary conditions on the operator content of two-dimensional conformally invariant theories}, Nucl.\ Phys.\ B \textbf{275} (1986), 200--218.

\bibitem{salbau} H.\ Sauler and M.\ Bauer, {\it On some relations between local height probabilities and conformal invariance}, Nucl.\ Phys.\ B \textbf{320} (1989), 591--624.

\bibitem{gamcar} A.\ Gamsa and J.\ Cardy, {\it SLE in the three-state Potts model - a numerical study}, J.\ Stat.\ Mech.\ (2007), P08020.

\bibitem{bauerfranc} M.\ Bauer and P.\ Di Francesco, {\it Covariant differential equations and singular vectors in Virasoro representations}, Nucl.\ Phys.\ B \textbf{362} (1991), 515--562.

\bibitem{lsw} G.\ Lawler, O.\ Schramm, and W.\ Werner, {\it Conformal invariance of planar loop-erased random walks and uniform spanning trees}, Ann.\ Probab.\ \textbf{32} (2004), 939--995.

\bibitem{lsw2} G.\ Lawler, O.\ Schramm, and W.\ Werner, {\it On the scaling limit of planar self-avoiding walk}, in Fractal geometry and applications: a jubilee of Benoit Mandelbrot, Part 2, eds.\ M.\ L.\ Lapidus and M.\ V.\ Frankenhuysen (2002).

\bibitem{lsw3} G.\ Lawler, O.\ Schramm, and W.\ Werner, {\it Values of Brownian intersection exponents I: Half-plane exponents}, Acta Math.\ \textbf{187} (2001), 237--273.

\bibitem{smir2} S.\ Smirnov, {\it Critical percolation in the plane}, C.\ R.\ Acad.\ Sci.\ Paris Sr.\ I Math.\ \textbf{333} (2001), 239--244.

\bibitem{smir3} S.\ Smirnov, {\it Conformal invariance in random cluster models.\ I.\ Holomorphic fermions in the Ising model}, Ann.\ Math.\ \textbf{172} (2010), 1435--1467.

\bibitem{benaub} L.\ Benoit and Y.\ Saint-Aubin, {\it Degenerate conformal field theories and explicit expressions for some null vectors}, Phys.\ Lett.\ B \textbf{215} (1988), 517--522.

\bibitem{kype} K.\ Kyt\"ol\"a and E.\ Peltola, {\it Pure geometries of multiple SLEs}, in preparation.

\enlargethispage{\baselineskip}

\bibitem{kype2} K.\ Kyt\"ol\"a and E.\ Peltola, {\it Conformally covariant boundary correlation functions with a quantum group}, preprint: \href{http://arxiv.org/abs/1408.1384}{arXiv:1408.1384} (2014).

\end{thebibliography}
\end{document}